\newcommand{\oldversion } [1] { }
\newcommand{\ignorethis } [1] { }
\newcommand{\enrico } [1] {\textcolor{purple}{\textbf{e:} {\slshape #1}}}
\newcommand{\fignum     } [1] {\ref{#1}}
\newcommand{\cfignum}[2]{\IfRefUndefinedExpandable{#1}{#2}{\fignum{#1}}}
\newcommand{\Reals      }     {{\textrm{I\kern-0.18em R}}}
\newcommand{\change     } [1] {\mbox{{\footnotesize $\Delta$} \kern-3pt}#1}
\newcommand{\argmin     } [1] {{\underset{#1}{\operatorname{argmin}}}}
\theoremstyle{plain}
\newtheorem{theorem}{Theorem}[section]
\newtheorem{prop}[theorem]{Proposition}
\begin{document}

\title{b/Surf: Interactive B\'ezier Splines on Surfaces}

\author{Claudio Mancinelli}
\affiliation{\institution{University of Genoa} 
\department{Department of Informatics, Bioengineering, Robotics and Systems Engineering}
}
\authornote{Joint first authors}

\author{Giacomo Nazzaro}
\affiliation{\institution{Sapienza University of Rome}
\department{Department of Computer Science}
}
\authornotemark[1]

\author{Fabio Pellacini}
\affiliation{\institution{Sapienza University of Rome}
\department{Department of Computer Science}
}

\author{Enrico  Puppo}
\affiliation{\institution{University of Genoa}
\department{Department of Informatics, Bioengineering, Robotics and Systems Engineering}
}



\begin{abstract}
\label{sec:abstract}
Bézier curves provide the basic building blocks of graphic design in 2D.
In this paper, we port Bézier curves to manifolds. We support the interactive 
drawing and editing of Bézier splines on manifold meshes with 
millions of triangles, by relying on just repeated manifold averages.  
We show that direct extensions of the De Casteljau and Bernstein evaluation 
algorithms to the manifold setting are fragile, and prone to discontinuities 
when control polygons become large. Conversely, approaches based on subdivision 
are robust and can be implemented efficiently.
We define Bézier curves on manifolds, by extending both the recursive 
De Casteljau bisection and a new open-uniform Lane-Riesenfeld subdivision scheme, 
which provide curves with different degrees of smoothness. 
For both schemes, we present algorithms for curve tracing, point evaluation, 
and point insertion.
We test our algorithms for robustness and performance on all watertight, 
manifold, models from the Thingi10k repository, without any pre-processing and 
with random control points.
For interactive editing, we port all the basic user interface interactions 
found in 2D tools directly to the mesh. We also support mapping complex 
SVG drawings to the mesh and their interactive editing.
\end{abstract}

\begin{CCSXML}
<ccs2012>
<concept>
<concept_id>10010147.10010371.10010396.10010397</concept_id>
<concept_desc>Computing methodologies~Mesh models</concept_desc>
<concept_significance>500</concept_significance>
</concept>
<concept>
<concept_id>10010147.10010371.10010396.10010399</concept_id>
<concept_desc>Computing methodologies~Parametric curve and surface models</concept_desc>
<concept_significance>500</concept_significance>
</concept>
<concept>
<concept_id>10010147.10010371.10010387</concept_id>
<concept_desc>Computing methodologies~Graphics systems and interfaces</concept_desc>
<concept_significance>300</concept_significance>
</concept>
</ccs2012>
\end{CCSXML}

\ccsdesc[500]{Computing methodologies~Mesh models}
\ccsdesc[500]{Computing methodologies~Parametric curve and surface models}
\ccsdesc[300]{Computing methodologies~Graphics systems and interfaces}

\keywords{geometric meshes, spline curves, user interfaces, geometry processing}

\begin{teaserfigure}
  \centering
  \includegraphics[width=0.9\textwidth]{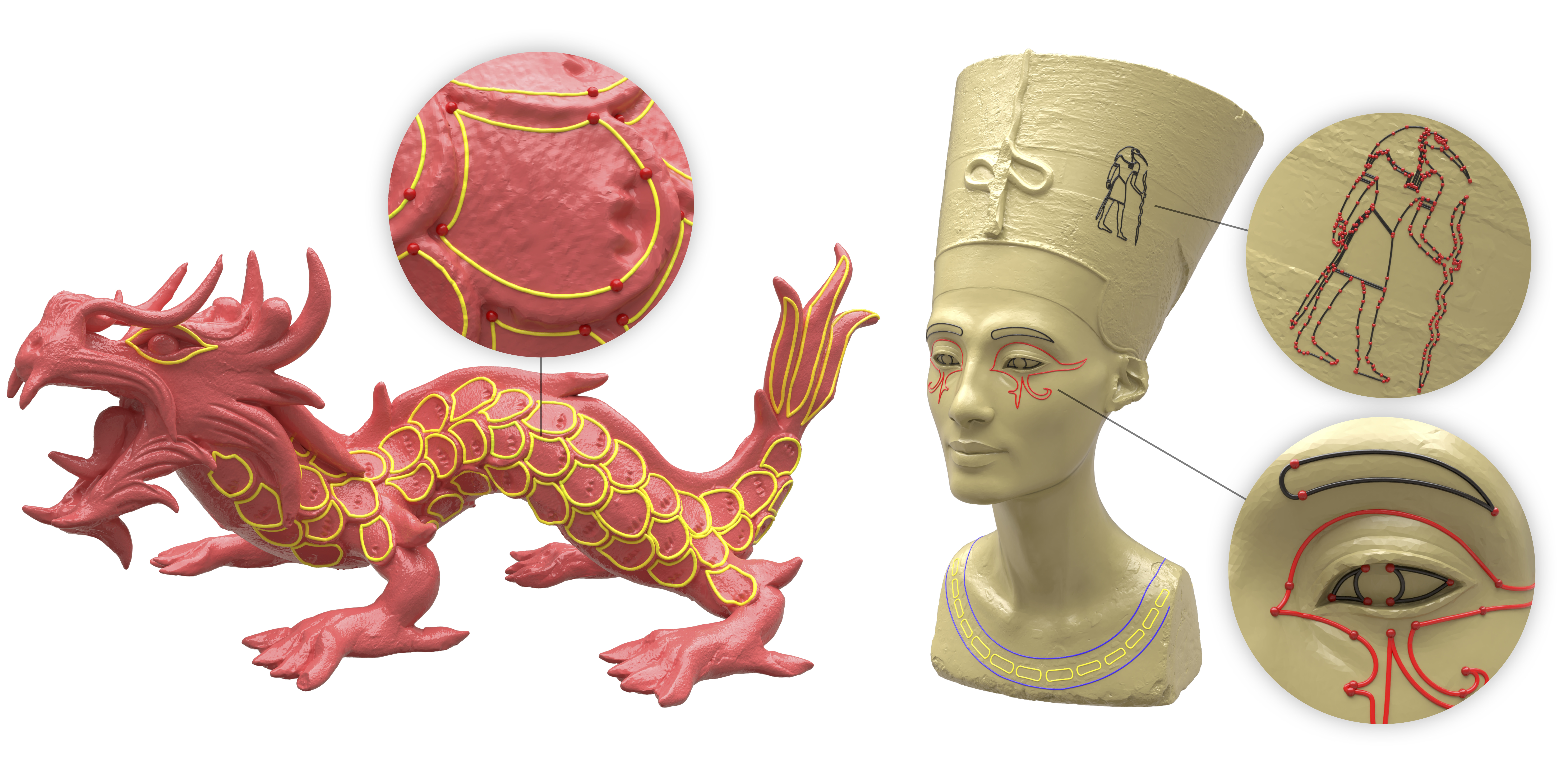}
  \caption{We propose algorithms to interactively edit Bézier 
  splines on large meshes, including curve editing, curve transformations and 
  import and editing complex SVG drawings. 
  All computations occur in the intrinsic geodesic metric of the surface. 
  All splines in this figure have been drawn interactively. 
  Control points and tangents of curves under editing are shown in the zoomed insets. 
  Asian Dragon $\sim$7.2M triangles; 
  Nefertiti $\sim$500K triangles.} 
  \label{fig:teaser}
  \Description{Image}
\end{teaserfigure}

\maketitle

\newcommand{\reswidth}[0]{7in}


\section{Introduction}
\label{sec:intro}

Vector graphics in 2D is consolidated since decades, as is supported in many 
design applications, such as Adobe Illustrator \cite{adobeillustrator}, and 
languages, like Scalable Vector Graphics (SVG) \cite{SVG}.
Bézier curves are the building blocks of most vector graphics packages, since
most other primitives 
can be converted into \emph{B\'ezier splines} (chains of Bézier curves) and edited as such \cite{Farin2001}.

In many design applications, it would be beneficial to edit vector graphics 
directly on surfaces, instead of relying on parametrization or projections
that have inherent distortions \cite{Poerner:2018gp}. 
Yet, bringing vector graphics to surfaces is all but trivial, since basic rules 
of Euclidean geometry do not hold under the geodesic metric on manifolds; and 
distances, shortest and straightest paths cannot be computed in closed form. 
In particular, in spite of several attempts to define curves under the 
geodesic metric, a complete computational framework that supports their 
practical usage in an interactive design setting is still missing.

In this work, we port Bézier curves on surfaces. We begin by analyzing the 
extensions of Bézier curves to the manifold setting, which are obtained by 
replacing straight lines with geodesic lines, and affine averages with the 
Riemannian center of mass.
We show that direct approaches to curve evaluation are fragile, and may 
generate discontinuous curves. On the contrary, approaches based on subdivision 
and repeated averages are robust. In particular, the scheme based on recursive 
De Casteljau bisection lends itself to an efficient implementation. 
Inspired by recent theoretical results \cite{Duchamp:2018eu,Dyn:2019cl}, 
we propose a new scheme based on an open-uniform Lane-Riesenfeld subdivision, 
which guarantees higher smoothness, and can also be implemented efficiently.  

Focusing on such two schemes, we design algorithms for curve tracing, 
point evaluation, and point insertion. Our implementation rests on light 
data structures and just a few basic tools for geodesic computations.
The proposed algorithms remain interactive on meshes made of millions of triangles 
such as the ones shown in Fig.~\ref{fig:teaser}.
To assess the robustness and performance of these algorithms, we trace curves on the more than 
five-thousands watertight, manifold, meshes of the Thingi10k repository 
\cite{Thingi10K} with many randomly generated control polygons. 
We show that our algorithms handle well all cases. 

\begin{figure}[t!]
  \centering
    \includegraphics[width=0.9\columnwidth]{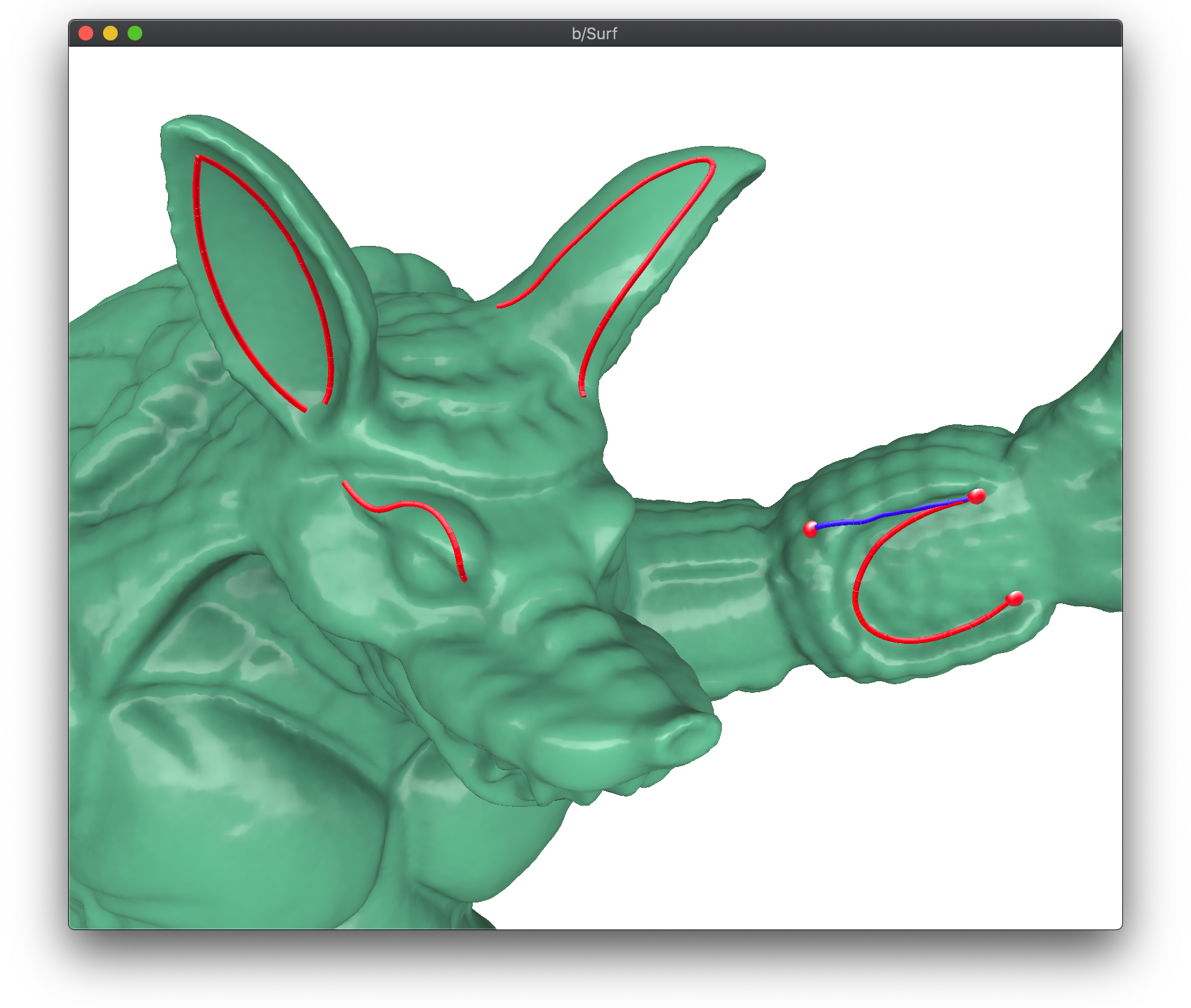}
  \caption{The GUI of our system. 
  Curves on the eye and on the arm consist of a single cubic segment each,  
  while the two splines on the ears consist of two cubic segments each, 
  with a sharp corner to the left and smooth junction to the right.   
  A control tangent (in blue) is depicted on the curve under editing. 
  } 
  \label{fig:simple}
  \Description{Image}
\end{figure}

We integrate curve tracing and point insertion algorithms in a user interface 
prototype that supports the interactive design of Bèzier splines 
on manifold meshes. We support all basic operations that 2D editors have, 
including: click-and-drag of control points and tangents; point insertion and 
deletion; and translation, rotation and scaling of curves. We also support
mapping of 2D drawings onto the surface.
Fig.~\ref{fig:simple} shows the interface of our system with some simple 
curves traced on a model. The supplemental video shows a full editing session.

All together, this work advances the state of the art with five main 
contributions:
\begin{itemize}
\item We present a critical analysis of several definitions of Bézier curves 
in the manifold setting. We demonstrate the limitations of direct methods, 
and show that subdivision schemes are always well behaved.
\item We provide algorithms for curve tracing, point evaluation, 
and point insertion, for a recursive De Casteljau scheme, and a 
novel open-uniform Lane-Riesenfeld scheme that warrants higher smoothness 
than previous definitions.
\item We provide a very fast implementation of such algorithms
that runs at interactive rates for meshes of millions of triangles on 
single-core commodity CPUs. 
Upon publication, we will release all source code with a 
permissive license.
\item We show that our algorithms are robust and interactive with a large test 
on Thingi10k models \cite{Thingi10K}, and we compare them with state-of-the-art solutions \cite{geometrycentral,Panozzo:2013eh}.
\item We develop a prototype system for the interactive design of 
Bézier splines, supporting all operations commonly found in 2D editors.
Our editor remains interactive with meshes of millions of triangles.
We also show how to port existing 2D drawings onto surfaces.
\end{itemize}


\section{Related work}
\label{sec:related}

The design of spline curves on manifolds has been addressed by several authors, 
both from a mathematical and from a computational perspective. 
We review only methods addressing general surfaces. 

A traditional approach to circumvent the problems of the Riemannian metric 
consists of linearizing the manifold domain via parametrization, 
designing curves in the parametric plane, and mapping the result to the surface.
Parametrization introduces seams, and drawing lines across them becomes problematic. 
Moreover, distortions induced by parametrizations are hard to predict and control.
The exponential map can provide a local parametrization on the fly for the 
region of interest \cite{Biermann:2002bi,Herholz:2019bk,Schmidt:2013bj,SunI3D:2013,Schmidt:2006tt}. 
However, its radius of injectivity is small in regions of high curvature, 
while control polygons and curves may extend over large regions.
Even curves as simple as the ones depicted in Fig.~\ref{fig:loop} may be hard 
to control using either local or global parametrizations.

\begin{figure}[t!]
  \centering
  \includegraphics[width=\columnwidth]{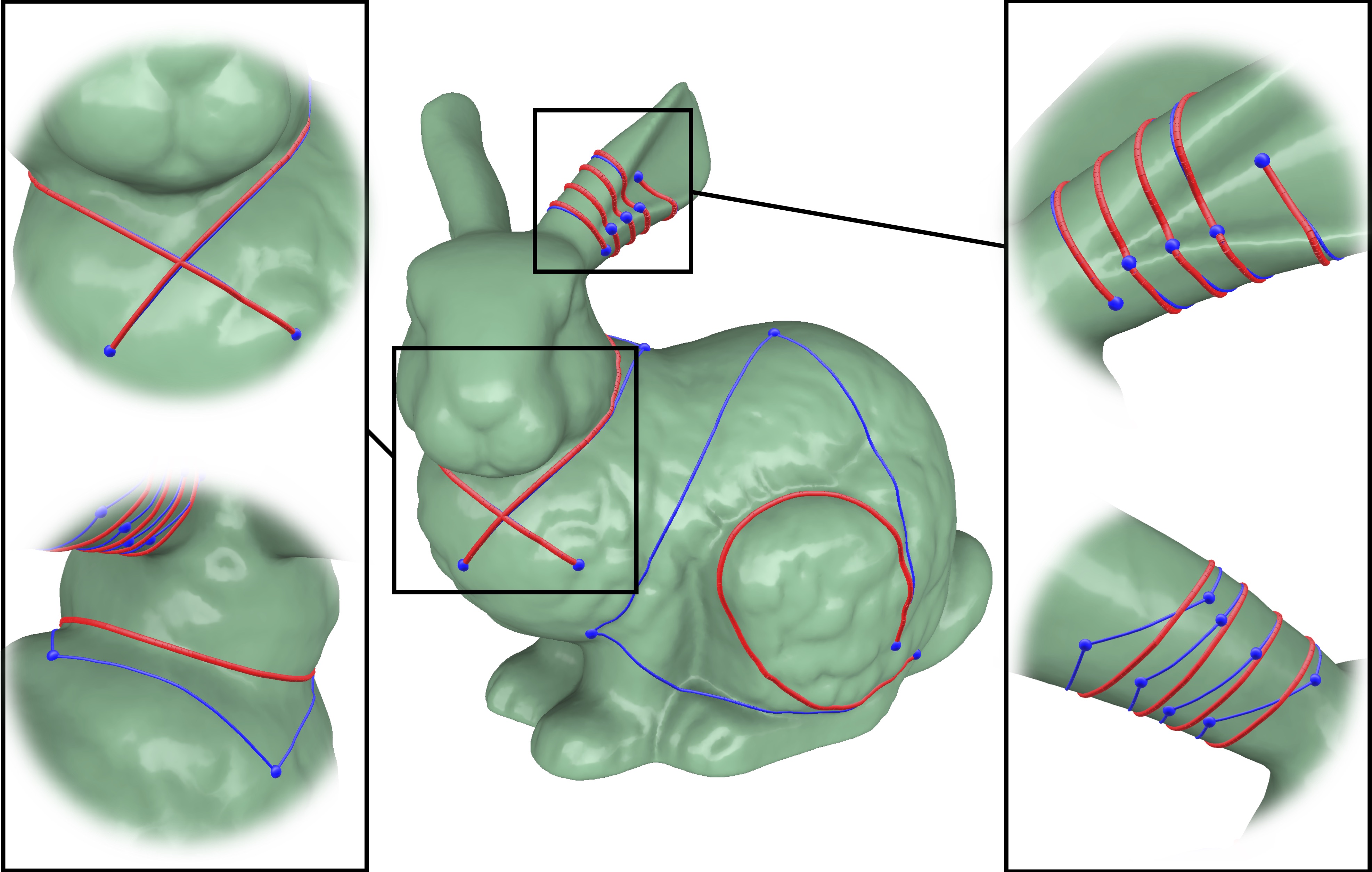}
  \caption{
  Curves that wind about the object or require large control polygons may be 
  challenging to draw with an approach based on parametrization. 
  The collar and the curl consist each of a single cubic segment, while
  the spiral is a spline of four segments joined with smooth ($C^1$) continuity. 
  Control polygons are depicted in blue.
}
  \label{fig:loop}
  \Description{Image}
\end{figure}

Another approach consists of relaxing the manifold constraint, 
resolving the problem in a space that admits computations in closed form, 
and projecting the result back to the surface.  
\cite{Wallner:2006eq} computes curves in 3D space and projects them 
to the nearest points on surface. 
\cite{Panozzo:2013eh} uses an embedding in a higher-dimensional Euclidean space, 
followed by Phong projection. 
These methods may support user interaction, but they provide only approximate 
results, are prone to artifacts, and are hard to scale to large meshes. 
In Section~\ref{sub:flipOut}, we further discuss the method of \cite{Panozzo:2013eh} 
and compare its results and performances with our method. 

The design of curves can also be addressed as an optimization problem 
in a variational setting. 
\cite{Noakes:wo} and \cite{Camarinha:1995} provide the basic variational 
theory of splines on manifolds.
This approach is adopted in several other papers 
\cite{Arnould:2015ce,Gousenbourger:2014ft,Gousenbourger:2018js,Hofer:2004fc,Jin:2019,Pottmann:2005ci,Samir:2011bu}. 
While most such works do not address implementation and performance,
\cite{Hofer:2004fc} and \cite{Jin:2019} eventually resort to projection methods.
Overall, the variational approach is too computationally expensive to support 
user interaction on large meshes.
Moreover, these curves are harder to control interactively than traditional 
Bézier splines.

Concerning the specific case of Bézier curves, \cite{Park:1995hu} first extended
the De Casteljau algorithm to Riemannian manifolds, expressing geodesic lines 
of control polygons through the exponential map, without further developing 
the computational details.
Later on, the De Casteljau algorithm on surfaces has been explored by several 
other authors \cite{Gousenbourger:2018js,Lin:2001ua,Morera:2008jk,NavaYazdani:2013dh,Popiel:2007bt}.
Among these, \cite{Morera:2008jk} extends the recursive De Casteljau bisection,
and \cite{geometrycentral} achieves performance on the same algorithm, by using 
a fast method for evaluating locally shortest geodesic paths \cite{sharp2020flipout}.
We adopt the same recursive structure of \cite{Morera:2008jk} 
for curve tracing with the recursive De Casteljau bisection.
In Section~\ref{sub:flipOut}, we further discuss the method of \cite{geometrycentral,sharp2020flipout} 
and compare their results and performances with our method. 
\cite{Absil:2016bd} defines Bézier curves both with the De Casteljau algorithm 
and with the Riemannian center of mass, and show that they may produce different results.

Several authors have investigated the theoretical aspects of the subdivision 
approach to splines in the manifold setting. 
We refer to \cite{Wallner:2020} for a detailed analysis on the subject; here, we report just
the ones on which our algorithms rely.
\cite{Noakes:1998bo} proves that the recursive De Casteljau bisection converges 
and produces a $C^1$ curve in the cubic case, and \cite{Noakes:1999er} shows 
that this is also true for the quadratic case.
Most recent results \cite{Duchamp:2018eu,Dyn:2017fr,Dyn:2019cl} focus on 
Lane-Riesenfeld schemes and show that a scheme of order $k$ is convergent and 
$C^k$ in the manifold and functional settings.
These latter works motivate our approach to the open-uniform Lane-Riesenfeld 
subdivision.

\section{Bézier Curves on Manifolds}
\label{sec:method}

In this section, we consider different constructions for Bézier curves, 
all of which produce the same curves in the Euclidean setting, 
and we analyze their possible extensions to the manifold setting.
We show that the classical De Casteljau and Bernstein evaluation algorithms 
may fail in the manifold setting, while methods based on subdivision guarantee 
different degrees of smoothness.

Here, we only provide the basics of each construction, and refer the readers to 
\cite{Farin2001,Salomon:2006wp} for further details in the Euclidean setting.
We assume that readers are familiar with geodesic lines and shortest paths on 
manifolds, and provide a brief introduction on the subject in Appendix 
\ref{app:geodesic}. 


\subsection{Preliminaries and notations}
\label{sub:prelim}


In the Euclidean setting, a Bézier curve is a polynomial parametric function of degree $k$ 
$$\mathbf{b}^k : [0,1]\longrightarrow \mathbb{R}^d,$$
which is defined by means of a \emph{control polygon} $\Pi_k=(P_0,\ldots, P_k)$, where all $P_i\in\mathbb{R}^d$.
Curve $\mathbf{b}^k$ interpolates points $P_0$ and $P_k$, and it is tangent to $\Pi_k$ at them. 
When there is no ambiguity, we will omit the subscript $k$ and will denote the control polygon simply by $\Pi$.

All constructions of Bézier curves in the Euclidean setting rely on the 
computation of \emph{affine averages} of points of the form
\begin{equation}
\label{eq:affine}
\bar{P} = \sum_{i=0}^h w_i P_i
\end{equation}
where the $w_i$ are non-negative weights satisfying the partition of unity.
For $h=1$, the affine average reduces to the linear interpolation 
\begin{equation}
\label{eq:affine2}
\bar{P}=(1-w) P + w Q.
\end{equation}
By analogy with the Euclidean setting, a control polygon $\Pi_k$ in the
manifold setting consists of a polyline of shortest geodesic paths,
connecting the control points that lie on a complete manifold $M$.

Affine averages are not available on manifolds, but they can be substituted 
with the Riemannian center of mass as introduced by \cite{Grove1973,Karcher:1977ei}. 
Given points $P_0, \ldots , P_h \in M$ and weights $w_0, \ldots , w_h$, 
as before, their \emph{Riemanninan Center of Mass} (RCM) on $M$ 
is given by
\begin{equation}
RCM(P_0, \ldots , P_h; w_0, \ldots , w_h) = \argmin{P\in M} \sum_{i=0}^h w_i d(P,P_i)^2
\label{eq:RCM}
\end{equation}
where $d(\cdot,\cdot)$ is the geodesic distance on $M$.
If $M$ is a Euclidean space, then the solution to 
Eq.~\ref{eq:RCM} is the usual affine average of Eq.~\ref{eq:affine}. 

The RCM requires that Eq.~\ref{eq:RCM} has a unique minimizer.
\cite{Karcher:1977ei} provides a condition of existence and uniqueness of 
the solution, which requires all points $P_i$ to be contained inside a 
strongly convex ball, whose maximum radius depends on the curvature of $M$.
In the following, we will refer to this condition as the \emph{Karcher condition}.
If such condition is satisfied, then the RCM is $C^{\infty}$ in both the 
$P_i$'s and the $w_i$'s \cite{afsari2009:me}. 
Unfortunately, the Karcher condition restricts the applicability of the RCM 
to relatively small neighborhoods in the general case.

For any two points $P,Q\in M$, which are connected with a \emph{unique} 
shortest path $\gamma_{P,Q}$, with $\gamma(0)=P$ and $\gamma(1)=Q$, 
it is easy to show that their RCM with weights $(1-w)$ and $w$, respectively, 
is always defined and lies at $\gamma_{P,Q}(w)$. 
This means that weighted averages between pairs of points are defined and smooth, 
as long as such points stay away from each other's 
cut locus\footnote{For a definition of \emph{cut locus, normal ball, convex ball}, and other
terms related to the geodesic metric refer to Appendix \ref{app:geodesic}.}. 
We extend this binary average to the cut locus, too, by picking one arbitrary, 
but deterministically selected, 
shortest path connecting $P$ to $Q$, hence giving up continuity at the cut loci. 
We thus define the \emph{manifold average between two points}
\begin{equation}
\label{eq:average}
\mathcal{A} : M\times M\times [0,1] \longrightarrow M; \ \  (P,Q;w) \mapsto \gamma_{P,Q}(w)
\end{equation}
where $\gamma_{P,Q}$ is a shortest geodesic path joining $P$ to $Q$. 
We have that $\mathcal{A}(P,Q;w)=RCM(P,Q;(1-w),w)$ as long as $P$ and $Q$ 
do not lie on each other's cut locus.
The averaging operator of Eq.~\ref{eq:average} provides the analogous of 
Eq.~\ref{eq:affine2} in the manifold setting.

\begin{figure}[tb]
  \centering
  \includegraphics[width=\columnwidth]{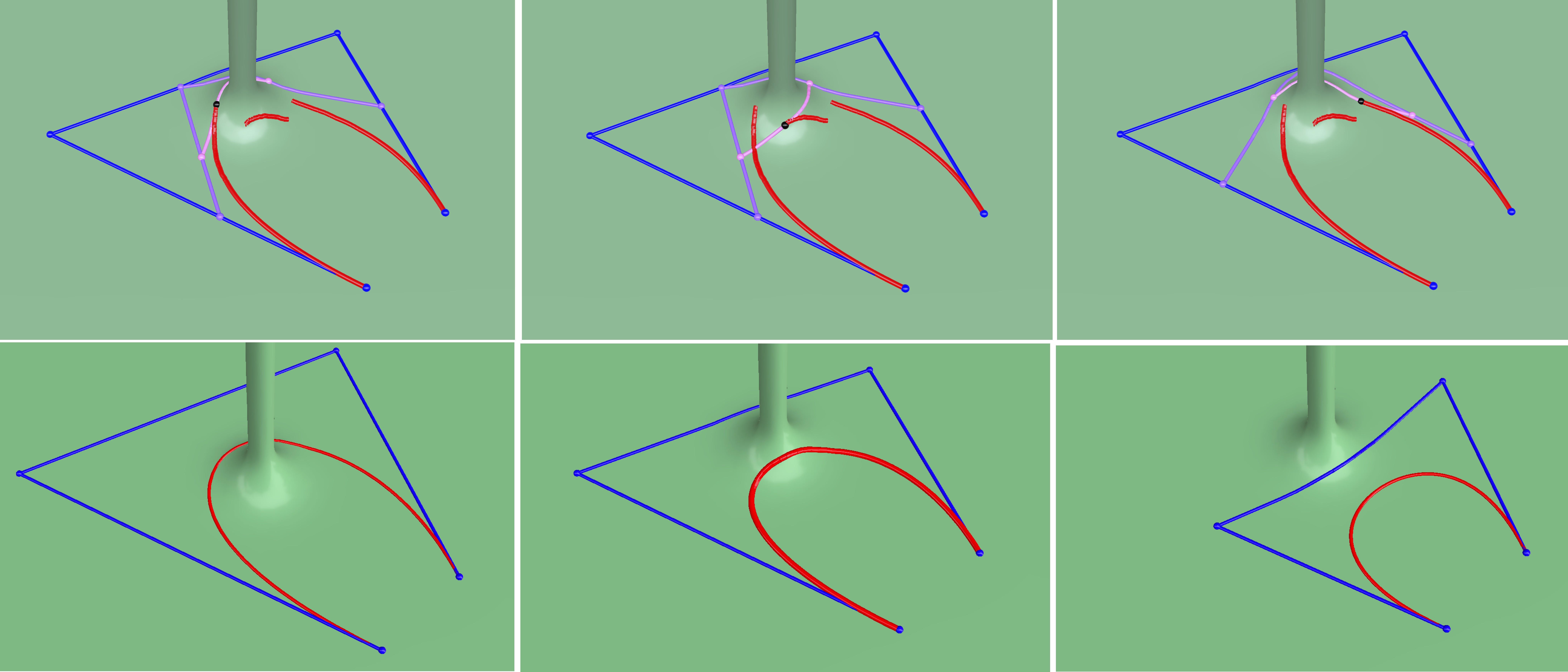}
  \caption{Top: example of a failure case of direct De Casteljau evaluation. 
  The black bullets at the discontinuities correspond to consecutive parameter 
  values near a critical value, and the blue/purple/pink lines provide the 
  De Casteljau construction. Note how the pink line jumps from one side of the 
  pole to the other as $t$ passes critical values, causing discontinuities.
  Bottom: our method always produces a smooth curve regardless of the positioning 
  of the control points. The same control polygon of the top figure generates the curve in the center;
  dragging the handles we may force the curve to pass behind the pole (left) or further shrink (right). 
  Note how the control polygon to the right also switches to the front of the pole, 
  while leaving the smoothness of the curve unaffected.
  }
  \label{fig:DC-fail}
  \Description{Image}
\end{figure}

\subsection{De Casteljau point evaluation}
\label{sub:decasteljau-manifold}
The De Casteljau construction provides a recursive definition, which evaluates 
a Bézier curve at each $t\in[0,1]$ as $\mathbf{b}^k(t) = \mathbf{b}_0^k(t)$, where
\begin{equation}
\label{eq:decasteljau}
\begin{array}{l}
\mathbf{b}_i^0(t)=P_i\\
\mathbf{b}_i^r(t) = (1-t)\mathbf{b}_i^{r-1}(t) + t \mathbf{b}_{i+1}^{r-1}(t)
\end{array}
\end{equation}
for $r=1,\ldots, k$ and $i=0,\ldots, k-r$.
The construction for $k=3$ and $t=0.5$ is exemplified, in the manifold setting, in Fig.~\ref{fig:RDC-OLR} (RDC).

This construction can be extended to the manifold setting in a straightforward 
way, by substituting the affine averages between pairs of points with 
the manifold average $\mathcal{A}$ defined above.
This extension was proposed first by \cite{Park:1995hu}. 
As shown by \cite{Popiel:2007bt}, if all consecutive pairs of control points 
of the control polygon $\Pi$ lie in a totally normal ball, 
then the resulting curve is $C^{\infty}$. 
However, if the constraint is violated, the resulting curve can be discontinuous. 
In fact, even if all shortest geodesics paths in $\Pi$ are unique, 
some pairs of intermediate points involved in the construction may lie on 
each other's cut locus, for some value of the parameter $t$. 
As $t$ passes such critical value, the manifold average $\mathcal{A}$ returns 
a discontinuous result, thus causing a discontinuity in the curve.
Fig.~\ref{fig:DC-fail}(top) illustrates the construction near failure points;
Fig.~\ref{fig:fail}(a) provides another example of failure. 

\begin{figure}[tb]
  \centering
  \includegraphics[width= 0.8\columnwidth]{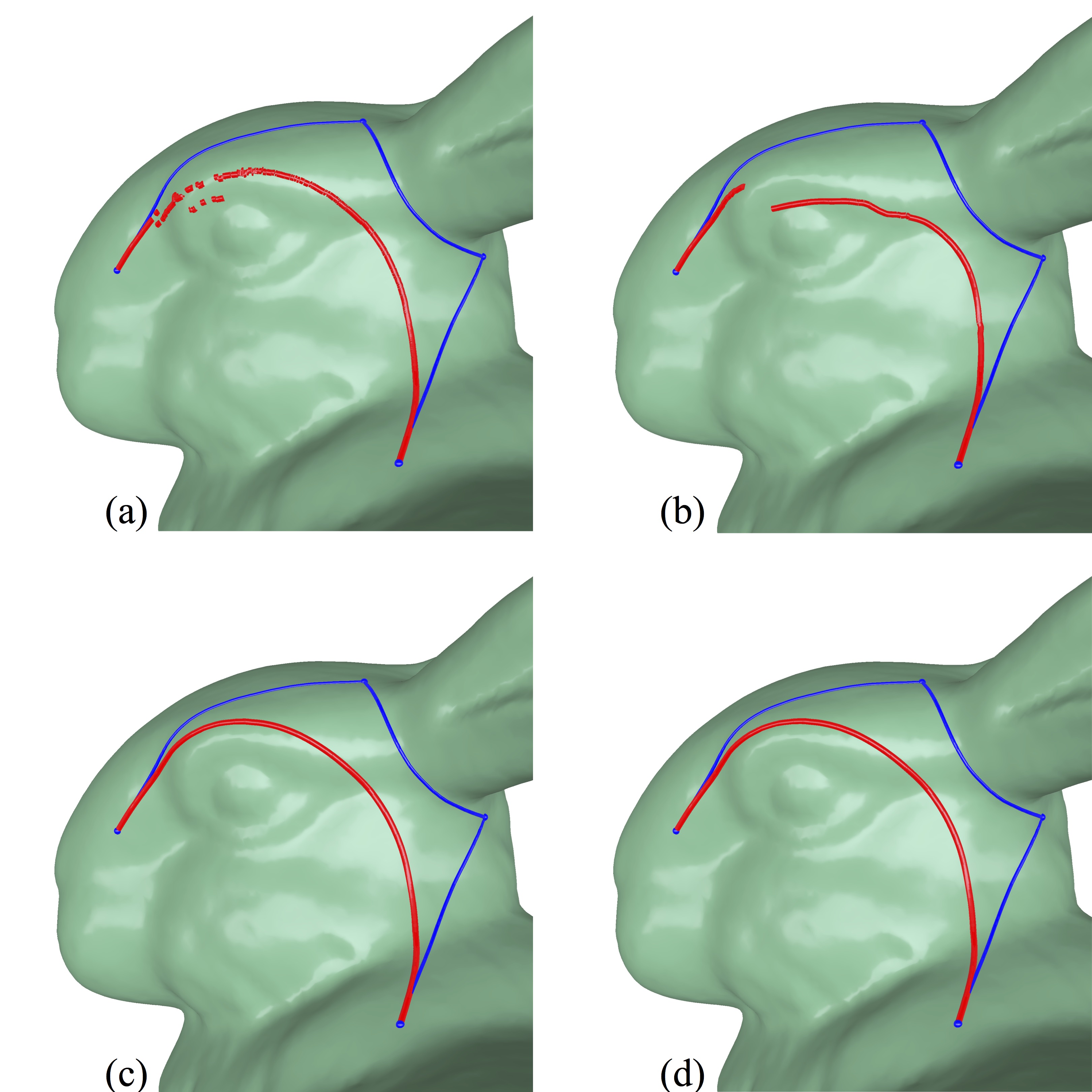}
  \caption{An example of failure in tracing a curve with the direct 
  De Casteljau  (a) and the RCM evaluation (b). 
  The same control polygon gives two smooth and nearly identical curves with 
  the Recursive De Casteljau (c) and the Open-uniform Lane-Riesenfeld schemes 
  (d) described in Sections \ref{sub:DC-manifold} and \ref{sub:LR-manifold}, 
  respectively.}
  \label{fig:fail}
  \Description{Image}
\end{figure}

In principle, this issue could be overcome by applying repeated degree elevation.
In the Euclidean setting, a B\'ezier curve of degree $k$ from polygon $\Pi_k$ 
can be rewritten as a curve of degree $k+1$ on a control polygon $\Pi_{k+1}$. 
Degree elevation is based on weighted averages between pairs 
of consecutive points of $\Pi_{k}$, hence extended to the manifold setting 
by means of operator $\mathcal{A}$.
It can be easily shown that, for any given polygon $\Pi_{k}$, there exist $n\geq k$ such that 
the control polygon $\Pi_{n}$ obtained from $\Pi_{k}$ by repeated application of degree elevation
has all consecutive pairs of control points inside totally normal balls, thus yielding a $C^{\infty}$ curve.
Unfortunately, this approach is prohibitive for interactive usage:
the value of $n$ is hard to estimate; 
the repeated application of degree elevation has 
a cost $O(n^2)$ to generate a polygon with $n$ points;
and point evaluation for a relatively large value of $n$ becomes expensive, too.

\subsection{Bernstein point evaluation with the RCM}
\label{sub:karcher}
A Bézier curve can be evaluated in closed form as an affine sum of all its 
control points:  
\begin{equation}
\label{eq:bernstein}
\mathbf{b}^k(t) = \sum_{i=0}^k B_i^k(t) P_i 
\end{equation}
where the $B_i^k(t)$ are the Bernstein basis polynomials of degree $k$
\[B_i^k(t) = \binom{k}{i} t^i(1-t)^{n-i}.\]
This expression can be rewritten for the manifold case as
\begin{equation}
\label{eq:bernstein-manifold}
\mathfrak{b}^k(t) = 
RCM(P_0,\ldots,P_k; B_0^k(t),\ldots,B_k^k(t)).
\end{equation}
This construction was addressed in \cite{Panozzo:2013eh}, where an approximation of the RCM is proposed,
which is based on an embedding in a higher dimension and Phong projection (see Sec.~\ref{sub:flipOut} for further details).

If the control points are close enough to fulfill the Karcher condition, 
then the resulting curve is $C^{\infty}$, since both the RCM and 
the Bernstein polynomials are $C^{\infty}$. 
However, the Karcher condition is even more restrictive than the constraints 
required for the De Casteljau construction, and it is not likely to be verified 
when the control points lie far apart on a general surface.
If the Karcher condition is not fulfilled, Eq.~\ref{eq:RCM} is no longer convex, 
and it might even have infinitely many minima.
In this case, the curve may be undetermined at some intervals.
Fig.~\ref{fig:fail}(b) provides an example of failure, 
where the RCM has been computed directly by gradient descent.
Similar examples of failure for the approximation of  \cite{Panozzo:2013eh} are discussed in 
Section \ref{sub:flipOut} and shown in Figures~\ref{fig:WA} and \ref{fig:WA-cylinder}.
 
\ignorethis{ 
\subsection{Subdivision by degree elevation}
\label{sub:degel-manifold}
A Bézier curve of degree $k$ can be rewritten as a curve of degree $k+1$ on 
a control polygon $\Pi_{k+1}$ with control points
\begin{equation}
\label{eq:degel}
P_i^{k+1} = \frac{i}{k+1} P_{i-1}^k+\left(1-\frac{i}{k+1}\right)P_i^k, \hspace{1cm} i=0,\ldots,k+1.
\end{equation}
The repeated application of degree elevation generates a sequence of 
control polygons $\Pi_{de}^n$ that converges to the Bézier curve.
The same stencil can be rewritten in the manifold setting as
\begin{equation}
\label{eq:degel-manifold}
P_i^{k+1} = \mathcal{A}(P_{i-1},P_i;1-i/(k+1)).
\end{equation}
To the best of our knowledge, this scheme was never applied before in 
the manifold setting. 

Concerning the smoothness of the resulting curve, we have the following result 
(proof in Appendix~\ref{app:proofs}):
\begin{prop}
\label{sub-smoothness-DE}
For any given control polygon $\Pi_{k}$ there exist $n_0\in\mathbb{N}$ such that for any $N>n_0$
the control polygon $\Pi_{de}^{N}$, which is obtained from $\Pi_{k}$ by the repeated application of 
Eq.~\ref{eq:degel-manifold}, can be evaluatd with Eq.~\ref{eq:decasteljau} yielding a $C^{\infty}$ curve.
\end{prop}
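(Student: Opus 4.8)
The plan is to reduce the statement to the sufficient condition for smoothness already recorded in Section~\ref{sub:decasteljau-manifold}: if every consecutive pair of control points of a polygon lies in a totally normal ball, then its De Casteljau evaluation (Eq.~\ref{eq:decasteljau} with $\mathcal{A}$ in place of the affine average) is $C^{\infty}$ \cite{Popiel:2007bt}. Indeed, under that hypothesis the map $t\mapsto\mathbf{b}^N_{de}(t)$ is a finite composition of the binary average $\mathcal{A}(\cdot,\cdot;t)$, whose weights are polynomials in $t$ and which is itself $C^{\infty}$ in all of its arguments as long as the two points it joins are not on each other's cut locus \cite{afsari2009:me}. Hence it suffices to prove that the maximal edge length
\[
h_N \;=\; \max_i \, d\!\left(P_i^N, P_{i+1}^N\right)
\]
of the iterated degree-elevated polygon $\Pi^N_{de}$ tends to $0$ as $N\to\infty$, and then to choose $n_0$ so large that $h_N$ falls below the radius of a totally normal ball for every $N>n_0$.

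To make the last clause meaningful I would first confine the construction to a compact region. By Eq.~\ref{eq:degel-manifold} the endpoints $P_0,P_k$ are fixed and every new vertex lies on a shortest geodesic joining two previous vertices, so each $\Pi^N_{de}$ is contained in the set $K$ swept by these geodesics; since corner cutting does not increase the total polygon length, $K$ is bounded, hence compact because $M$ is complete. On a compact neighborhood of $K$ the curvature is two-sided bounded and the radius of totally normal balls admits a uniform lower bound $\rho_0>0$. It therefore remains to establish $h_N\to 0$.

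I would obtain $h_N\to 0$ by comparison with the Euclidean scheme Eq.~\ref{eq:degel}. In the Euclidean setting it is classical that the degree-elevated control polygon converges uniformly to the Bézier curve and that its edges are $O(1/N)$ \cite{Farin2001}. In normal coordinates the geodesic average $\mathcal{A}(P,Q;w)$ agrees with the affine average $(1-w)P+wQ$ up to a curvature-controlled error of order $d(P,Q)^2$; a perturbation argument in the spirit of the proximity analysis for manifold subdivision \cite{Wallner:2020} then shows that the geodesic polygons stay uniformly close to their affine counterparts at every level, so the vanishing of the Euclidean edges forces $h_N\to 0$. I expect this transfer to be the main obstacle: one must control the accumulation of the curvature-induced discrepancy across the $N$ successive elevations and keep all bounds uniform. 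The feature that saves the estimate is that the $N$-th elevation acts on data of size $O(1/N)$, so the per-step error is $O(1/N^2)$ and the total discrepancy remains $O(1/N)$.

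Once $h_N<\rho_0$, each consecutive pair $(P_i^N,P_{i+1}^N)$ lies in a totally normal ball and the reduction of the first paragraph yields a $C^{\infty}$ curve; taking $n_0$ to be the least index beyond which $h_N<\rho_0$ completes the argument. If a self-contained justification of the sufficiency is wanted in place of the citation, I would add an induction on the De Casteljau level $r$ showing $d(\mathbf{b}_i^r,\mathbf{b}_{i+1}^r)\le(1+Ch_N^2)^{r}\,h_N$ uniformly in $i$ and $t$: in the Euclidean case consecutive intermediate points never grow, and in the curved case each level contributes only a factor $1+O(h_N^2)$, so over the $N$ levels the product stays bounded and every average in the recursion indeed joins points lying within a common totally normal ball.
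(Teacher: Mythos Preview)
Your overall plan---reduce to the Popiel--Noakes sufficient condition by showing that the maximal edge length $h_N$ of $\Pi_{de}^N$ tends to $0$---is exactly the paper's strategy. The divergence is in how you establish $h_N\to 0$.

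The paper does this by a one-line metric argument that you are missing. By Eq.~\ref{eq:degel-manifold}, the new vertex $q_i$ lies on the shortest geodesic $p_{i-1}p_i$, and $q_{i+1}$ lies on $p_ip_{i+1}$; hence the triangle inequality gives
\[
d(q_i,q_{i+1}) \;\le\; d(q_i,p_i)+d(p_i,q_{i+1}) \;=\; \tfrac{i}{n+1}\,L_{i-1}^{\,n}+\tfrac{n-i}{n+1}\,L_i^{\,n} \;\le\; \tfrac{n}{n+1}\,L^n,
\]
where $n+1$ is the current number of control points. Telescoping the factors $\tfrac{k}{k+1}\cdot\tfrac{k+1}{k+2}\cdots\tfrac{N+k-1}{N+k}$ yields $L^N\le \tfrac{k}{N+k}\,L^0\to 0$. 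This holds in any geodesic metric space: no charts, no curvature, no proximity analysis.

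Your route through comparison with the Euclidean scheme is not only heavier, it has a genuine gap. The proximity estimate you invoke---$\mathcal{A}(P,Q;w)$ equals the affine average up to $O(d(P,Q)^2)$ in normal coordinates---presupposes that $P,Q$ already lie in a common normal chart, i.e.\ that the edges are already short. Your sentence ``the $N$-th elevation acts on data of size $O(1/N)$'' is therefore circular: it assumes $h_N=O(1/N)$ in order to control the discrepancy that is supposed to \emph{prove} $h_N=O(1/N)$. Without an independent a~priori bound on $h_N$ the bootstrap never starts, and for a control polygon with long initial edges there is no single chart in which to run the comparison at the early levels. The elementary contraction above supplies precisely that missing a~priori bound and renders the proximity machinery unnecessary.
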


\enrico{Non si riesce a dimostrare la smoothness della curva limite della suddivisione. 
La porposizione enunciata in questo modo implica un metodo ibrido: si suddivide fino a un certo punto e poi si  fa valutazione diretta.
Ai fini pratici si suddivide e basta, ma come risultato teorico non \`e granch\'e. 
Consideriamo se lasciare questo metodo o toglierlo del tutto (intanto non lo usiamo).
In alternativa, si pu\`o eliminare questa sezione e alla fine della 3.2 dire: 
DC fallisce perch\'e  i punti sono distanti, si potrebbe  fare degree elevation in modo da 
farli avvicinare abbastanza da non farlo fallire, ma costa troppo. 
Senza nessuna dimostrazione di smoothness.}

Unfortunately, the repeated application of Eq.~\ref{eq:degel-manifold} has 
a cost $O(n^2)$ to generate a polygon with $n$ points, because all points are 
replaced at each iteration, while their number just increases by one. 
Also, point evaluation for a relatively large value of $n$ may become expensive.
Hence, this approach is prohibitive for interactive usage.
}

\subsection{Recursive De Casteljau bisection (RDC)}
\label{sub:DC-manifold}

One step of the De Casteljau evaluation subdivides polygon $\Pi$ into two 
control polygons $\Pi_L$ and $\Pi_R$. See Fig.~\ref{fig:RDC-OLR} (RDC).
The junction point of $\Pi_L$ and $\Pi_R$ lies on the curve. 
The recursive application of this procedure for $t=1/2$ defines a sequence 
of subdivision polygons $\Pi_{DC}^n$, which converges to the Bézier curve.

The extension to the manifold setting is straightforward, by means of the 
point evaluation procedure described in Section \ref{sub:decasteljau-manifold}. 
In the following, we denote this scheme \emph{RDC} for short.
This extension was studied first by \cite{Noakes:1998bo}, and implementations 
were proposed in \cite{Morera:2008jk,geometrycentral}.

Concerning the convergence and smoothness of the limit curve, we have the following result: 

\begin{prop}
\label{sub-smoothness-RDC}
For any given control polygon $\Pi_{k}$, with $k>1$, the RDC subdivision converges to a limit curve that is $C^1$ continuous.
\end{prop}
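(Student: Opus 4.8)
The plan is to prove convergence and $C^1$-regularity by comparing the manifold RDC operator with its Euclidean counterpart through a \emph{proximity condition}, following the framework surveyed in \cite{Wallner:2020} and exploited in \cite{Dyn:2017fr,Duchamp:2018eu,Dyn:2019cl}. In $\mathbb{R}^d$, one step of De Casteljau bisection at $t=1/2$ is a linear, stationary, convergent subdivision scheme $S$ (the identical bisection rule is applied to each sub-polygon at every level), whose limit on any control polygon $\Pi_k$ is the classical degree-$k$ Bézier curve, hence $C^{\infty}$ and in particular $C^1$; its associated difference scheme is contractive with an explicit factor. The manifold scheme $T$ is obtained from $S$ by replacing every binary affine average $(P+Q)/2$ with the geodesic average $\mathcal{A}(P,Q;1/2)$ of Eq.~\ref{eq:average}. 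The goal is to transfer the $C^1$-regularity of $S$ to $T$.

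First I would reduce to data confined to a small neighborhood. Each De Casteljau step builds the new control points as geodesic midpoints of consecutive points, which lie on the shortest paths between their parents; consequently the geodesic diameter of each sub-polygon $\Pi_L,\Pi_R$ is a fixed fraction of the parent diameter. Thus there is an $n_0$ so that for every $n\ge n_0$ each sub-polygon of $\Pi_{DC}^n$ lies in a strongly convex ball $B$ on which shortest geodesics are unique, the Karcher condition holds, and $\mathcal{A}$ is $C^{\infty}$. Since $C^1$-continuity is a local property, it suffices to analyze the limit of an initial polygon already contained in such a ball, together with $C^1$-matching across the finitely many junction points produced by the coarse steps before level $n_0$; the finitely many cut-locus-sensitive averages of those coarse steps only relocate finitely many junction points and cannot affect the fine-scale regularity of the limit.

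Next I would establish the proximity estimate inside $B$. Working in a single exponential chart and expanding the geodesic midpoint gives $\norm{\mathcal{A}(P,Q;1/2)-(P+Q)/2}\le C\,d(P,Q)^2$, with $C$ depending only on curvature bounds on $B$. Propagating this through the finitely many averages of one De Casteljau step yields the first-order proximity inequality $\norm{T\Pi - S\Pi}\le C'\,\delta(\Pi)^2$, where $\delta(\Pi)$ is the maximal edge length of $\Pi$. The standard proximity theorem then applies: a nonlinear scheme satisfying a first-order proximity condition with a convergent linear scheme whose limits are $C^1$ itself converges to $C^1$ limits, provided the data are dense enough, which holds from level $n_0$ by the previous step. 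This yields a $C^1$ limit on each sub-curve. Finally, at each junction point I would verify that the one-sided tangents agree: the last edge of $\Pi_L$ and the first edge of $\Pi_R$ are subarcs of the same geodesic through the junction at every subdivision level, so the endpoint-tangency of De Casteljau subdivision forces the incoming and outgoing limit tangents to coincide, giving $C^1$-continuity of the whole curve.

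The main obstacle I anticipate is making the proximity argument fully rigorous in the curved setting while controlling the finitely many coarse steps where $\mathcal{A}$ may act near a cut locus: one must argue carefully that such discontinuous averages only displace finitely many junction points and leave the asymptotic, scale-invariant $C^1$ behavior untouched. The second delicate point is the junction-matching above, precisely the place where manifold De Casteljau could a priori introduce a corner; establishing the exact collinearity of the adjoining geodesic edges at every level is what rules this out. For $k=2,3$ these conclusions recover the results of \cite{Noakes:1999er,Noakes:1998bo}.
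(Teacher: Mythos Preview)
Your approach follows the same three-step architecture as the paper's proof: (i) use contractivity of the De Casteljau bisection to ensure that after finitely many coarse steps every sub-polygon lies in a strongly convex ball; (ii) obtain local $C^1$ regularity on each piece; (iii) match tangents at the junction points via the observation that the last edge of $\Pi_L$ and the first edge of $\Pi_R$ are subarcs of the same geodesic at every level (the paper invokes condition (ii) of Theorem~3 in \cite{Popiel:2007bt} to formalize this step). The only substantive difference is in step (ii): the paper cites the results of \cite{Noakes:1998bo,Noakes:1999er} as a black box for $k=2,3$ and offers only a sketch for general $k$, whereas you propose to carry out the proximity argument directly in an exponential chart and appeal to a general proximity theorem. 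This is not really a different route---Noakes' proofs themselves rest on precisely such proximity estimates, and the paper's sketch for $k>3$ says as much---but your packaging has the advantage of being uniform in $k$ and more self-contained.

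One caution worth flagging: the off-the-shelf proximity theorems in the references you cite are formulated for subdivision operators acting on bi-infinite (or periodic) sequences, while RDC acts on finite polygons of $k+1$ points and recursively splits them into pairs of polygons of the same length. You will need either to adapt the proximity statement to this setting (which is essentially what Noakes does) or to argue that, once all sub-polygons are confined to a convex ball, the scheme can be recast so that the standard theorem applies. Your contractivity claim (``the geodesic diameter of each sub-polygon is a fixed fraction of the parent diameter'') also deserves an explicit lemma; the paper proves it as Lemma~\ref{shrinking_RDC} with factor $1/2$ on the maximal edge length, via the triangle inequality applied through the De Casteljau recursion.
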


In Appendix \ref{app:proofs} we provide a proof for $k=2,3$ and a sketch of proof for generic $k>3$.  
Note that \cite{Morera:2008jk} gave a proof of smoothness just for the special case of curves 
embedded on a triangle mesh, showing that the limit curve is $C^1$ once 
the strip of triangles it intersects is flattened on a plane. 
Our proof is given in general for a smooth manifold. 
It remains an open question whether the RDC scheme produces curves with higher smoothness. 

%
%

\begin{figure}[tb]
  \centering
  \includegraphics[width=\columnwidth]{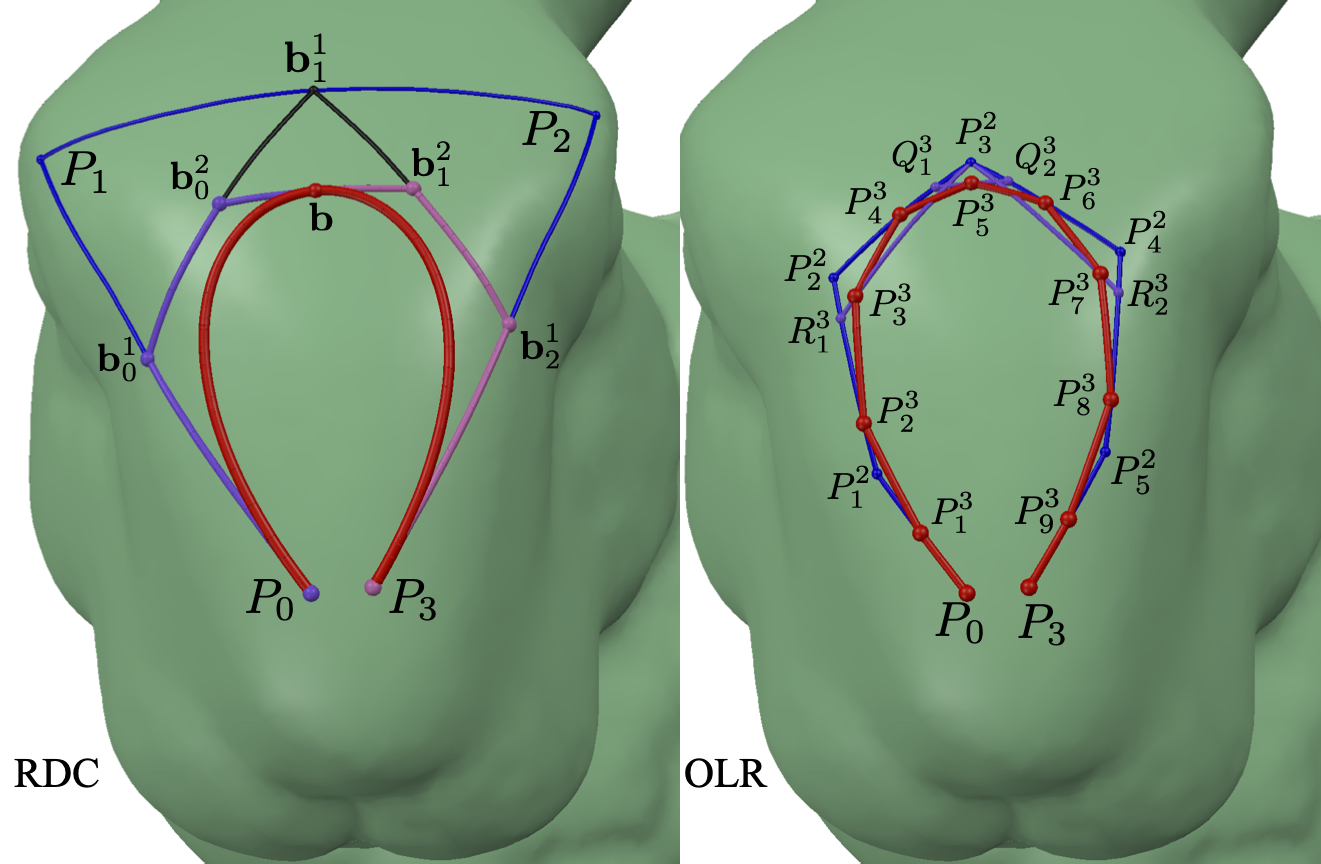}
  \caption{The constructions at the basis of the RDC and ORL schemes for the 
  same control polygon for the cubic case ($k=3$). 
  \emph{RDC (left)}: The control polygon (blue) is split into a chain of two 
  control polygons (purple and pink) by computing three shortest geodesic paths. 
  The limit curve is depicted in red. Here we show only the first subdivision.
  \emph{OLR (right)}: One step of subdivision from $\Pi^2$ (blue) to 
  $\Pi^3$ (red polygon). The even points $P_{2j}^3$, as well as the intermediate 
  points $Q_i^{3}$ and $R_i^{3}$ lie on segments of $\Pi^{2}$ and are evaluated first.
  The evaluation of each odd point $P_{2j+1}^{3}$ requires computing one shortest 
  geodesic path (purple).
  This construction corresponds to one midpoint subdivision followed by two steps 
  of smoothing by averaging consecutive points.
   }
  \label{fig:RDC-OLR}
  \Description{Image}
\end{figure}

\subsection{Open-uniform Lane-Riesenfeld Subdivision (OLR)}
\label{sub:LR-manifold}

To achieve higher continuity, we propose another subdivision scheme,
which is novel in the manifold setting. 

In the Euclidean setting, a Bézier curve of degree $k$ can be represented with an open-uniform 
B-spline\footnote{A B-spline is said to be open-uniform if it is uniform, 
except at its endpoints, where repeated knots are inserted to make the curve 
interpolate the endpoints of its control polygon.} of degree $k$ (order $k+1$), 
having the same control polygon $\Pi_k$, and knot vector $(00\ldots011\ldots1)$, 
where the $0$ and $1$ are repeated $k+1$ times.
Repeated knot insertion at the midpoint of all non-zero intervals in the knot 
vector produces a sequence of open uniform B-splines, 
all describing the same curve, whose control polygons $\Pi_{LR}^n$ converge 
to the curve itself \cite{Cashman:2007jh}.
This subdivision process follows an open-uniform Lane-Riesenfeld scheme (OLR). 
The control points are nearly doubled at each level of subdivision, by applying 
the standard even-odd stencils of the uniform Lane-Riesenfeld scheme 
``in the middle'' \cite{Lane:dq}, while stencils for end conditions are applied 
near the endpoints. 
A number of $2k-2$ special stencils are needed at each end of the polygon.  
The full constructions for the quadratic and cubic cases, as well as a sketch 
of the construction for a generic degree $k>3$, are provided in Appendix \ref{app:subLR}.
One step of subdivision for $k=3$ and $n=3$ is exemplified, in the manifold setting, 
in Figure~\ref{fig:RDC-OLR} (OLR).
Note that $n=3$ is the first level in which the stencils of the uniform LR subdivision apply. 

It is important to notice that, all affine averages necessary to compute 
the stencils in this scheme can be factorized into weighted averages between 
pairs of points, as shown in Appendix \ref{app:subLR}. 
This feature is intrinsic to the uniform LR scheme, and it is easily 
generalized to the end conditions in the open-uniform scheme. 
Therefore, we extend this scheme to the manifold setting, by substituting 
each affine average with the corresponding application of the manifold average 
$\mathcal{A}$. We omit the details for the sake of brevity. 

Concerning the convergence and smoothness of the resulting curve, we have the following result 
(proof in Appendix \ref{app:proofs}):

\begin{prop}
\label{sub-smoothness-OLR}
For any given control polygon $\Pi_{k}$, the OLR subdivision converges to a limit curve that is $C^{k-1}$ continuous, possibly except at its endpoints. 
At the endpoints, the limit curve interpolates the endpoints of polygon $\Pi_{k}$ and it is tangent to it.
\end{prop}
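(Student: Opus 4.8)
The plan is to split the statement into an \emph{interior} claim (the $C^{k-1}$ regularity on the open parameter interval) and a \emph{boundary} claim (interpolation of, and tangency to, $\Pi_k$ at its two endpoints), and to reduce the interior claim—by a localization argument—to the known convergence and smoothness of the \emph{uniform} Lane--Riesenfeld scheme in the manifold setting \cite{Duchamp:2018eu,Dyn:2019cl}. Throughout, smoothness is meant as smoothness of the limit curve in the parameter $t$ for a fixed control polygon $\Pi_k$.

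First I would establish contractivity and convergence. Since every stencil of the OLR scheme is a composition of binary manifold averages $\mathcal{A}$, and since $\mathcal{A}(P,Q;w)=\gamma_{P,Q}(w)$ moves along a single edge, the maximal edge length $\delta^n := \max_i d(P_i^n,P_{i+1}^n)$ of $\Pi_{LR}^n$ contracts by the same geometric factor as in the Euclidean scheme; hence $\delta^n\to 0$ and $\{\Pi_{LR}^n\}$ is Cauchy, converging uniformly to a continuous limit curve over $[0,1]$. Two consequences of contractivity are crucial: (i) after finitely many levels every edge is short enough that its two endpoints lie in a common totally normal ball, so all subsequent evaluations of $\mathcal{A}$ avoid the cut locus and are $C^\infty$ in their point arguments; and (ii) for each fixed $t_0\in(0,1)$, after finitely many levels a whole neighborhood of $t_0$ is produced exclusively by the interior (uniform) stencils, because the $2k-2$ boundary stencils only ever affect a parameter neighborhood of $\{0,1\}$ that shrinks geometrically.

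Given (i) and (ii), interior smoothness follows by localization. On a neighborhood of any $t_0\in(0,1)$ the limit is generated, at sufficiently fine levels, by the uniform Lane--Riesenfeld scheme acting entirely within totally normal balls, which is exactly the regime in which \cite{Duchamp:2018eu,Dyn:2019cl} prove—via a proximity estimate between the manifold scheme and its linear counterpart—that the manifold limit inherits the regularity of the linear limit. Since the linear uniform scheme here reproduces the degree-$k$ B-spline, which is $C^{k-1}$, the manifold limit is $C^{k-1}$ at $t_0$; as $t_0$ is arbitrary this yields $C^{k-1}$ on all of $(0,1)$. The repeated end knots of the open-uniform construction are precisely why this cannot be asserted at $t\in\{0,1\}$, which is the force of the hedge ``possibly except at its endpoints''.

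It remains to analyze the endpoints, which I expect to be the main obstacle. Interpolation is immediate: $P_0$ (resp.\ $P_k$) is a fixed point of the end stencils, so $P_0^n=P_0$ for all $n$, and uniform convergence makes the limit curve pass through $P_0$ (symmetrically through $P_k$). Tangency is the delicate point. The key observation is that, near the boundary, the end stencils producing the control point adjacent to $P_0$ combine \emph{only} $P_0$ and $P_1$; in the manifold setting such a combination is a single application $\mathcal{A}(P_0,P_1;\cdot)$ and therefore lies on the geodesic $\gamma_{P_0,P_1}$. By the self-similarity of the open-uniform refinement, this persists across levels, so the first edge of $\Pi_{LR}^n$ is always a sub-arc of $\gamma_{P_0,P_1}$ whose length contracts to zero; consequently the one-sided secant directions at $P_0$ are all parallel to $\dot{\gamma}_{P_0,P_1}(0)$, and the limit curve is tangent to the first edge of $\Pi_k$ at $P_0$ (and symmetrically at $P_k$). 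Making the ``combine only $P_0$ and $P_1$'' claim precise for every degree $k$—that is, verifying from the explicit end conditions of Appendix~\ref{app:subLR} that the near-endpoint control points remain on $\gamma_{P_0,P_1}$—is the part requiring the most care, and a clean inductive description of the open-uniform end stencils is where the argument will pay off.
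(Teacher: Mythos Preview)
Your proposal is correct and follows essentially the same route as the paper: contractivity (the paper's Lemma~\ref{shrinking_LR}, giving $\alpha=1/2$) places every consecutive $(k{+}1)$-tuple in a totally normal ball after finitely many levels, localization at any $t_0\in(0,1)$ then reduces to the uniform manifold LR scheme and the $C^{k-1}$ result of \cite{Duchamp:2018eu}, and endpoint interpolation/tangency follow because the boundary stencils keep $P_1^n$ on $\gamma_{P_0,P_1}$---the paper makes this explicit as $P_1^n=\gamma_{P_0,P_1}(2^{-n})$, so the secant direction converges to $\dot{\gamma}_{P_0,P_1}(0)$. The only cosmetic difference is that the paper proves the contractivity factor directly via the midpoint-plus-smoothing decomposition of LR, whereas you invoke the Euclidean factor; otherwise the decomposition and key lemmas coincide.
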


It follows from the proposition above that single cubic B\'ezier curves in the manifold setting warrant $C^2$ continuity, while different segments can be joined to form splines with  $C^1$ continuity. 
It remains an open problem how to build splines with $C^2$ continuity at junction points \cite{Popiel:2007bt}. 
Note that the construction of interpolating splines with $C^2$ continuity is all but simple even in the Euclidean setting \cite{Yuksel2020}.

\begin{figure}[tb]
  \centering
  \includegraphics[width=\columnwidth]{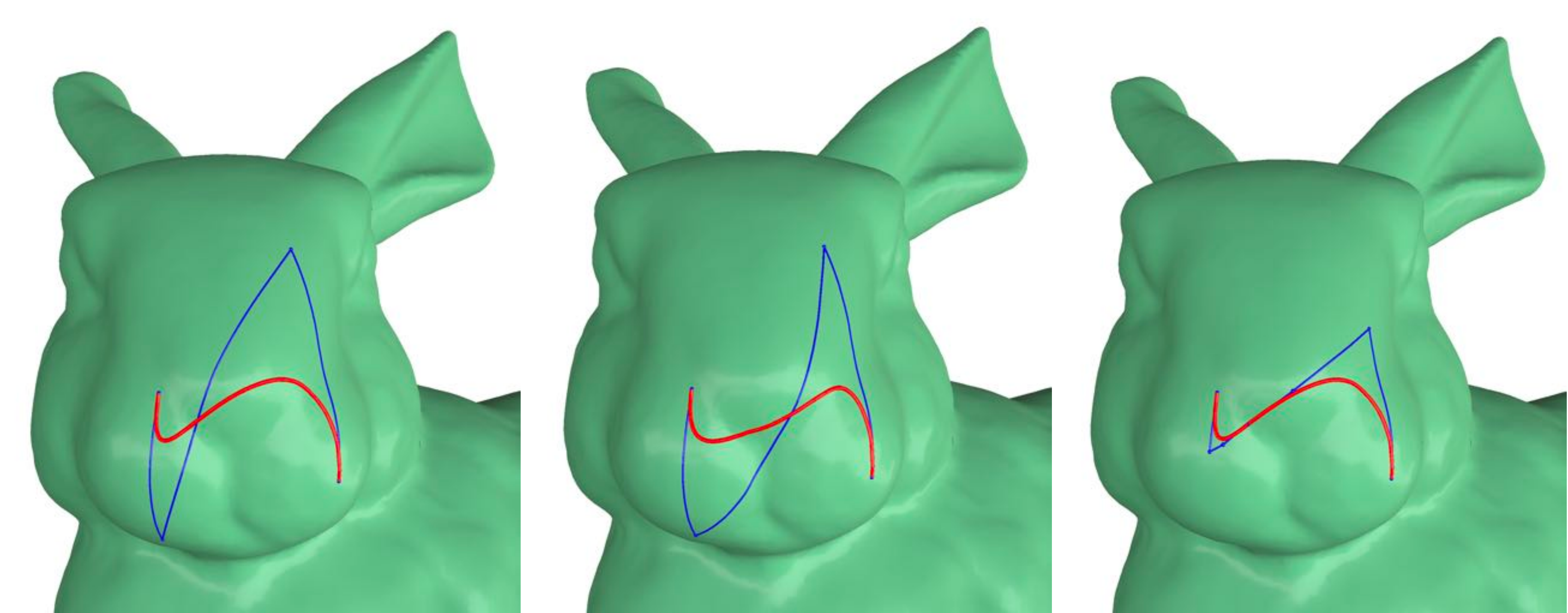}
  \caption{Left, center: the geodesic line corresponding to the central segment of the control 
  polygon can take two different routes at the cut locus of its endpoints, 
  thus producing two different curves; 
  in practice, the curve will jump between the two configurations when dragging a 
  control point across a cut locus. 
  Right: splitting the curve by point insertion makes the selected configuration stable upon dragging.
 }
  \label{fig:jump}
  \Description{Image}
\end{figure}

\subsection{Discussion}
\label{sub:discussion-manifold}

The first main contribution of our work is the above analysis, from which
we can derive safe schemes for lifting Bézier curves to the manifold domain.

The Bernstein point evaluation cannot be used safely on manifolds, since the 
Riemannian center of mass is fragile and can be used only ``in the small''. 
Using direct De Casteljau evaluation is also problematic.
In fact, while the manifold average $\mathcal{A}$ for a pair of points is 
defined over the whole domain, and smooth everywhere except at the cut loci,
the discontinuity of operator $\mathcal{A}$ at the cut loci makes the direct 
De Casteljau evaluation fragile. 

Conversely, subdivision schemes can be safely defined with repeated averages 
based on operator $\mathcal{A}$ and can work on any control polygon, since 
the arbitrary choices made with operator $\mathcal{A}$ at the cut loci 
do not affect convergence and smoothness:
%
%
the RDC scheme, based on the recursive De Casteljau bisection, guarantees 
$C^1$ continuity; 
while the OLR scheme, based on open-uniform Lane-Riesenfeld subdivision, 
guarantees $C^{k-1}$ smoothness for a curve of degree $k$.
Both schemes can be implemented easily and efficiently 
via repeated geodesic averages.

\subsubsection{Limitations}
\label{sub:limitations-manifold}

In principle, the arbitrary choices made with operator $\mathcal{A}$ 
at the cut loci can lead to different curves for a given control polygon.
In practice, since all our algorithms are deterministic, the same paths 
will always be chosen at the cut loci, thus returning the same curve.
However, the curve may jump to a different configuration for 
small displacements of control points, which make some of the paths 
in the construction cross a cut locus.
In Fig.~\ref{fig:jump} (left, center), a tiny displacement of one control point takes 
one of the shortest paths in the control polygon to a drastically different 
route, resulting in a different curve;
see the bottom part of Fig.~\ref{fig:DC-fail} for another example. 
Note that jumps occur quite rarely, as the cut locus of each point covers 
a set of zero measure on the manifold.
This fact is intrinsic to the discontinuity of the manifold metrics and 
constitutes an essential limitation to the design of splines in the manifold setting,
independently of the approach adopted.

This limitation can be circumvented easily, by means of splines containing more 
control points, instead of single B\'ezier segments.
See Fig.~\ref{fig:jump} (right).  
This can be done easily with point insertion that can be used to constrain the 
curve to a desired path, as is customarily done in curve design, and motivates 
the algorithms we present in Sections \ref{sub:DC-insert} and \ref{sub:insertion}. 

Automatic solutions would also be possible. 
It is easy to check when a curve ``jumps'' while dragging a control point; 
in that case, the control polygon may be split, e.g., by adding the midpoint 
of the curve before displacement as a new control point. 
Another approach would be to homotopically 
deform the lines of the control polygon while dragging a control point. 
The point-to-point geodesic algorithm that we present in Sec.~\ref{sub:geodesicprimitives}
can support such task in a straightforward manner. 
However, the latter solution could generate a control polygon consisting 
of lines that are arbitrarily far from being shortest geodesics, 
thus hindering all the theory about convergence and smoothness of the result; even worse, 
the curve would not be described by its control points only, but it would depend on its construction, too. 
In our user interface, we decided to avoid using automatic methods to warrant maximum flexibility to the user.

\section{Practical Algorithms}
\label{sec:algorithm}

We now focus deriving practical algorithms for the RDC and OLR schemes.
We provide algorithms for approximating the curve with a geodesic polyline 
(curve tracing), evaluating a point on the curve for a given parameter value 
(point evaluation), and splitting a curve at a given point into a spline with two segments (point insertion).
The algorithm for curve tracing with the RDC scheme is equivalent 
to the one proposed in \cite{Morera:2008jk} and it is described briefly for completeness; 
the other five algorithms are novel.

We support surfaces represented as triangle meshes, and target interactivity for
long curves and meshes of millions of triangles. 

In order to develop our algorithms, we assume to have procedures for 
(1) computing the point-to-point shortest path between pairs of points of $M$; 
(2) evaluating a point on a geodesic path at a given parameter value; and 
(3) casting a geodesic path from a point in a given direction.
In all these cases, we consider generic points on the surface, not just
the vertices of the mesh.
The computational details of such procedures, as well as additional algorithms 
to support interactive control, are provided in Section \ref{sec:implementation}.

\subsection{Algorithms for the RDC scheme}
\label{sub:DC-algorithm}

\subsubsection{Curve tracing.}
We trace Bézier curves on surfaces by approximating them with a geodesic polygon.
The tracing algorithm is a recursive subdivision that, at each step, takes a 
geodesic polygon $\Pi$ and produces two sub-polygons $\Pi_L$ and $\Pi_R$. 
Recursion is initialized by computing the $k$ shortest paths that constitute 
the polygon connecting the initial control points $P_0,\ldots,P_k$.

To split a control polygon, we compute a sequence of geodesic polygons 
$\Pi^i$ for $i=0,\ldots,k$, each containing $k-i$ segments, 
where $\Pi^0=\Pi$, and $\Pi^k$ degenerates to the midpoint of the curve. 
Since $\Pi^0$ is known from recursion, this requires computing a total of $k(k-1)/2$ 
further geodesic paths, each joining the midpoints of the segments in $\Pi^i$, 
in order to produce the points of $\Pi^{i+1}$.
The polygon  $\Pi_L$ is built by collecting all sub-paths that connect the 
starting points of the polygons in the sequence, namely $\Pi^i[0]$ to 
$\Pi^{i+1}[0]$ for $i=0,\ldots,k-1$. The polygon $\Pi_R$ is built likewise, 
by using sub-paths connecting the end points of subsequent polygons. 

We support both uniform and adaptive subdivision. For uniform subdivision, 
a maximum level of recursion is chosen either by the user, 
or automatically computed on the basis of the total length $L(\Pi)$ of 
the initial polygon $\Pi$, and a threshold $\delta>0$. Since the paths forming 
the subdivided polygon are shrinking through recursion, 
then after $\lceil\log_2(L(\Pi)/\delta)\rceil$ recursion levels, the length of a 
geodesic path in the output will be no longer than $\delta$.
For adaptive subdivision, we stop recursion as soon as the angles between 
tangents of consecutive segments of $\Pi$ differ for less than a given 
threshold $\theta$. For a small value of $\theta$, this suggests that the curve 
can be approximated with a geodesic polyline connecting 
the points of $\Pi$.

\subsubsection{Point evaluation.}
In point evaluation, we compute the location of the point at a value $\bar{t}$ 
on the curve.
Since we are dealing with a subdivision curve, point evaluation requires 
traversing the recursion tree with a bisection algorithm.
Each time we descend one level, we split the control polygon of that level 
as described above, but only compute the sub-polygon that contains $\bar{t}$.
We stop recursion with the same criteria listed above.

The point at $\bar{t}$ is computed by direct De Casteljau evaluation 
at value $\bar{t}$ on the leaf control polygon.
Here we are assuming that the control polygon in the leaf node is short 
enough to support direct De Casteljau evaluation, and we use it to 
approximate the limit point on the subdivided curve.
By using arguments of proximity, as in \cite{Wallner:2006gg}, it can be shown 
that this approximation converges to the limit curve, as the subdivision polygon 
is subdivided further. 

\begin{figure}[tb]
  \centering
  \includegraphics[width=\columnwidth]{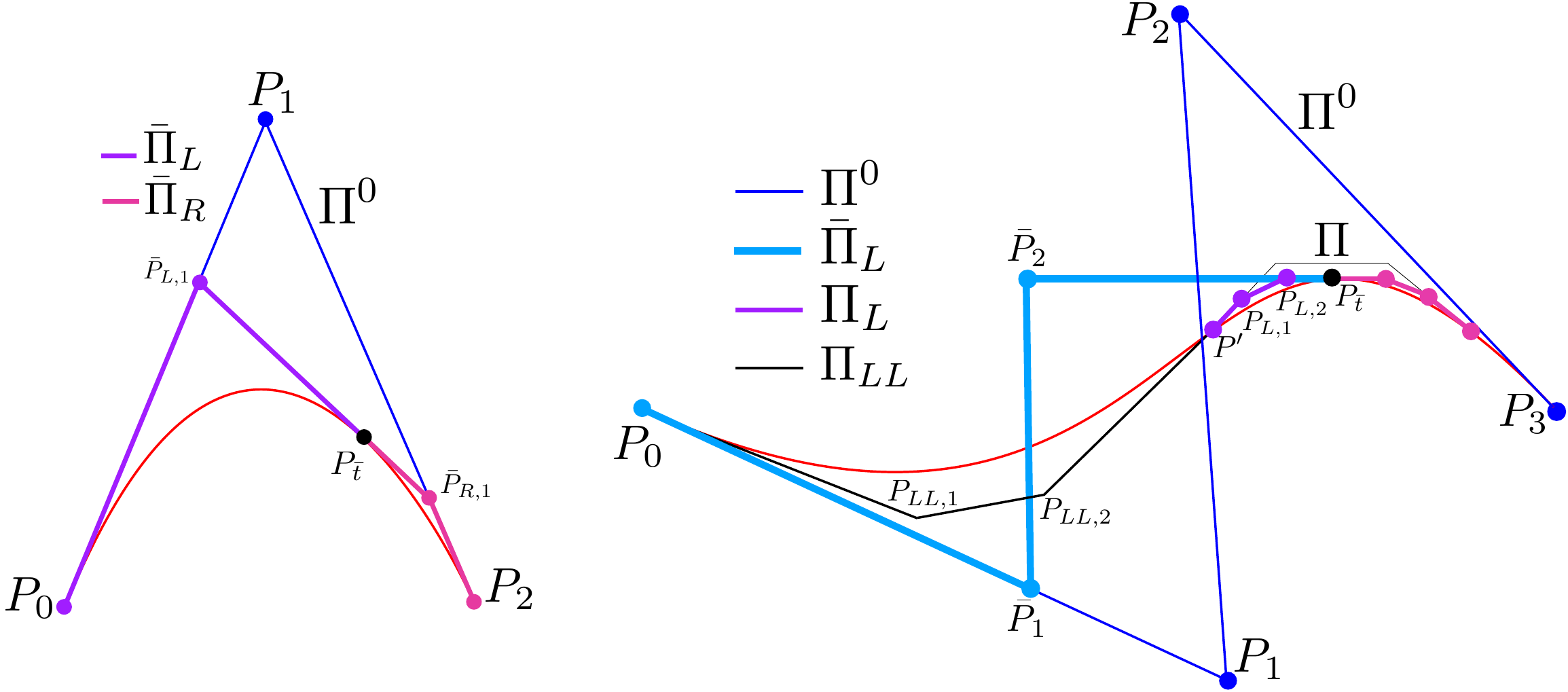}
  \caption{The point insertion algorithm for quadratic, $k=2$, and cubic case, $k=3$
  for the left side of the curve only.
  \emph{Left, $k=2$}: To warrant $\mathcal{C}^1$ continuity at the junction 
  point $P_{\bar{t}}$, the new control points $\bar{\Pi}_L$ and $\bar{\Pi}_R$ 
  are found with the De Casteljau construction, 
  computing $\mathcal{A}(P_i,P_{i+1},\bar{t})$ for $i=0,1$, 
  and repeating the same average on the resulting points.  
  \emph{Right, $k=3$}: The control polygon $\bar{\Pi}_L$ (light blue) 
  encompassing the curves described by two cubic polygons $\Pi_{LL}$(black) 
  and $\Pi_L$ (purple) is built by shortening the first segment of $\Pi^0$ 
  and extending the last segment of $\Pi_L$.
  }
  \label{fig:DC-insert}
\end{figure}

\subsubsection{Point insertion.}
\label{sub:DC-insert}
With point insertion, a user can split a curve at a given point $P_{\bar{t}}$, 
and obtain a spline consisting of two Bézier curves, from $P_0$ to $P_{\bar{t}}$ and 
from $P_{\bar{t}}$ to $P_k$, which coincides with the input curve. 
This is used to add detail during editing.
While this computation is exact in the Euclidean setting, the identity of curves 
before and after point insertion cannot be guaranteed in the manifold setting.
Here we provide a solution that interpolates the endpoints, 
as well as point $P_{\bar{t}}$, and is \emph{nearly} equal to the input curve.  
This is usually sufficient for practical purposes. 
The algorithm for a generic value of $k$, which requires climbing a path in the recursion tree, 
is described in Appendix \ref{app:pointinsertion}.
In the following, we describe the simpler cases of $k=2$ and $k=3$, as illustrated in 
Fig.~\ref{fig:DC-insert}.

In the cubic case, we descend the recursion tree as in the previous algorithm, 
in order to find the leaf containing the splitting point $P_{\bar{t}}$.
Then we split the control polygon $\Pi$ associated to the leaf node at value 
$\bar{t}$, thus obtaining the two polygons $\Pi_L$ and $\Pi_R$, respectively.
We process the two halves independently. Here we show the algorithm for the 
polygon $\bar{\Pi}_L$ defining the curve between $P_0$ and $P_{\bar{t}}$. 
The construction of the other half is symmetric.

Let $\Pi_{LL}$ describe the portion of curve to the left of $\Pi_L$.
let $P'$ be the junction point of  $\Pi_{LL}$ and $\Pi_L$, and let us denote 
\[\begin{array}{l}
\Pi_{LL}=(P_{LL,0}=P_0,P_{LL,1},P_{LL,2},P_{LL,3}=P')\\
\Pi_L=(P_{L,0}=P', P_{L,1}, P_{L,2}, P_{L,3}=P_{\bar{t}})\\
\bar{\Pi}_L=(\bar{P}_0=P_0,\bar{P}_1,\bar{P}_2,\bar{P}_3=P_{L,3})
\end{array}
\]
where we just need to determine the points $\bar{P}_1$ and $\bar{P}_2$ of $\bar{\Pi}_L$.
Note that, in order to preserve the correct tangents at the endpoints of 
$\bar{\Pi}_L$, $\bar{P}_1$ must lie on the geodesic path connecting $P_0$ to $P_1$, 
while $\bar{P}_2$ must lie on the extension of geodesic path connecting $P_{L,3}$ to $P_{L,2}$.
Point $\bar{P}_1$ is found trivially: since $P_{\bar{t}}$ is an endpoint of the 
curve defined by $\bar{\Pi}_L$, and it lies at parameter $\bar{t}$ on the input 
polygon $\Pi$, then $\bar{P}_1$ must be at distance $\bar{t}\cdot d(P_0,P_1)$ from $P_0$. 
In order to find $\bar{P}_2$, we consider the concatenation of polygons 
$\Pi_{LL}$ and $\Pi_L$, which provides the construction to evaluate point $P'$ 
from $\bar{\Pi}_L$.
Let $t'$ be the parameter corresponding to $P'$ on the input curve, 
then we have
$$d(P_{L,3},P_{L.2})=(1-\frac{t'}{\bar{t}})\cdot d(P_{L,3},\bar{P}_2).$$
Therefore, we conclude that $\bar{P}_2$ is obtained by extending the geodesic 
line from $P_{L,3}$ to $P_{L,2}$ for a length $d(P_{L,2},P_{L,3})\cdot \frac{t'}{(\bar{t}-t')}$.

In the quadratic case, we have to compute two polygons 
$\bar{\Pi}_L = (P_0,\bar{P}_{L,1},P_{\bar{t}})$ 
and $\bar{\Pi}_R = (P_{\bar{t}},\bar{P}_{R,1},P_2)$, which must concatenate 
with smoothness at least $C^1$. Following the same strategy used in the previous 
case is not robust, as it could hinder the smoothness of the curve at $P_{\bar{t}}$.
We rather evaluate $\bar{P}_{L,1}$ on path $P_0,P_1$ and, similarly, 
$\bar{P}_{R,1}$ on path $P_1,P_2$, both at parameter $\bar{t}$ along the 
respective geodesic lines. 
Then we approximate $P_{\bar{t}}$ with the point at parameter $\bar{t}$ along 
the path joining $\bar{P}_{L,1}$ to $\bar{P}_{R,1}$.

\subsection{Algorithms for the OLR scheme}
\label{sub:LR-algorithm}

\subsubsection{Curve tracing}
\label{sub:tracing}

For uniform subdivision, our OLR scheme can be easily expanded up to a 
certain level $\bar{n}$, and the curve approximated with the geodesic 
polygon $\Pi^{\bar{n}}$. The maximum expansion level $\bar{n}$ can be set as 
in the corresponding RDC algorithm. At each level of subdivision, we obtain the 
vertices of the refined polygon by applying the subdivision stencils described 
in Appendix \ref{app:subLR}, where affine averages between pairs of points are 
substituted with the manifold average $\mathcal{A}$. We omit the details for brevity. 
The cubic case is illustrated in Fig.~\ref{fig:RDC-OLR} (OLR).

Notice that the uniform subdivision, as described above, defines a (virtual and infinite) binary tree of intervals, 
that we call the \emph{expansion tree}: the root of the expansion tree corresponds to the whole interval $[0,1]$, 
while a generic node $[t_j^i,t_{j+1}^i]$ at level $i$ is split in the middle into two intervals at level $i+1$. 
The node $[t_j^i,t_{j+1}^i]$ encodes a segment of B-spline, defining the curve in the corresponding interval,
with control points $(P_{j-k}^i,\ldots,P_j^i)$. 
One more level of subdivision splits this interval into two sub-intervals 
$[t_{2j}^{i+1},t_{2j+1}^{i+1}]$ and $[t_{2j+1}^{i+1},t_{2j+2}^{i+1}]$  and generates 
$k+2$ new control points, which depend just on $(P_{j-k}^i,\ldots,P_j^i)$: 
the first $k+1$ points are associated to the interval to the left, and the last 
$k+1$ to the interval to the right, with an overlap of $k$ control points between the two sets.
The expansion tree is defined implicitly and it needs not being encoded.

We exploit the structure of the expansion tree to design an algorithm for adaptive subdivision, which is controlled
by the same stopping criterion used for the RDC scheme, i.e., we stop the expansion of a node as soon as 
the angle between consecutive segments of the polygon is small enough.
The algorithm corresponds to visiting a subtree of the expansion tree in depth-first order; 
a leaf of the subtree is a node of the expansion tree where we stop recursion. 
During the visit, at each internal node, we split the interval as described above, and generate the control points for its two children to continue the expansion; 
while at each leaf, we generate the nodes of the output polygon. 

Depth-first traversal guarantees that leaves are visited left to right: 
the leftmost leaf in the expansion tree is the first one to produce an output, adding all its control points;
all other leaves add just their rightmost control point to the output. 


The final approximation of the curve is obtained by connecting the output points pairwise with shortest geodesic paths.

Note that,  it is not necessary to encode the subtree visited by the algorithm.
It is just sufficient to encode the path in the expansion tree connecting the root to the current node, 
storing at each node its corresponding interval, and its control polygon.


\subsubsection{Point evaluation}
\label{sub:eval}

The point evaluation algorithm is analogous to the one for the RDC scheme, by descending a path in the expansion tree described above.
Given interval $[t_j^i,t_{j+1}^i]$ containing $\bar{t}$ at subdivision level $i$, we only need to compute, by applying the proper stencils, the $k+1$ points 
corresponding to its sub-interval containing $\bar{t}$ at the next level.

Once recursion stops, we assume that all pairs of consecutive control points in the current interval lie in a totally normal ball. 
Here we evaluate the curve directly with a manifold 
version of the de Boor algorithm \cite{Farin2001}, which works on repeated 
averages and can be obtained by substituting the affine averages with the 
manifold average $\mathcal{A}$, just like the direct De Casteljau evaluation.
We omit the details for brevity.

The same remarks we made for the RDC scheme about approximation and convergence 
in the limit apply here, too.

\subsubsection{Point insertion}
\label{sub:insertion}

This algorithm is analogous to the one described for the RDC scheme.
We descend the recursion tree as in the point evaluation algorithm.
When reaching the leaf containing the splitting point $P_{\bar{t}}$, 
we convert the control polygon of the uniform B-spline in that leaf into the 
corresponding control polygon of the Bézier curve, by applying the standard conversion 
formula reported in Appendix \ref{app:BtoBezier}, where affine averages are substituted 
with the manifold average $\mathcal{A}$.
Then we proceed as described for the RDC scheme.


\section{Implementation and User Interface}
\label{sec:implementation}

The implementation of the algorithms described in the previous section 
rests on a few geodesic operations that we describe in this section together
with operations required to support the user interface. All operations are 
implemented in C++ and released as open source in \cite{anonymous}.

\subsection{Data structures}
\label{sub:datastructure}

We encode a  triangle mesh $M$ with a simple indexed data structure 
consisting of three arrays encoding the vertices, the triangles, and triangles adjacencies,
which also provide the dual graph having the triangles as nodes.

We need to deal with generic points lying on the mesh, not just its vertices.
A \emph{mesh point} $P$ is encoded as a triple $(t,\alpha,\beta)$ where $t$ is 
the triangle index and $\alpha,\beta$ are the barycentric coordinates of $P$ in $t$.
A vertex $v$ of $M$ can also be encoded as a generic mesh point, 
by means of any of its incident triangles.  

A geodesic path connecting two mesh points $P$ and $Q$ is encoded with a 
triangle strip $(t_0,\ldots,t_h)$, where $t_0$ and $t_h$ contain $P$ and $Q$, 
respectively, and an array of real values $l_1,\ldots,l_{h-1}$, where $l_i$ encodes 
the intercept of the path with the edge $e_i$ common to $t_i, t_{i+1}$, 
parametrized along $e_i$.

\subsection{Basic geodesic primitives}
\label{sub:geodesicprimitives}

\paragraph{Point-to-point shortest path.} 
The literature offers several techniques for computing shortest paths \cite{crane2020survey}.
We propose an algorithm to compute locally shortest geodesic paths, 
which is derived by combining insights from the works of \cite{Xin:2007,Lee:1984hn}. 
The algorithm consists of three phases: (i) extraction of an initial strip; (ii) shortest path in a strip; and (iii) strip  straightening.

Phase (i), which has been overlooked in several previous works, is critical as it can become the bottleneck 
on large meshes (see, e.g., the discussion in \cite{sharp2020flipout} 5.2.1). 
Given two mesh points $P$ and $Q$, we 
compute a strip of triangles that connects them, performing a search on the dual graph. 
We experienced a relevant speedup over the classical Dijkstra search by using a shortest path algorithm based on the SLF and LLL heuristics \cite{Bertsekas:1998}, which do not require a priority queue, but just a double ended queue. 
The SLF and LLL heuristics govern the insertion and extraction of weighted nodes in the queue.
We weight each node as in a classical A* search, with the sum of its current distance from the source plus its Euclidean 3D distance to the target.  
This heuristic prioritizes the exploration of triangles closer to the destination in terms of Euclidean distance, improving performance in most models.

In phase (ii), the strip is unfolded in the 2D plane and the shortest path within it is
computed  in linear time with the funnel algorithm \cite{Lee:1984hn}.
See Fig.~\ref{fig:funnel}(a-b-c) for an example.

In phase (iii), in order to obtain the locally shortest path on the mesh, we remove reflex vertices from the strip where possible. 
To this aim, \cite{Xin:2007} finds the reflex vertices that can be removed by computing angles about a vertex inside and outside the strip, respectively. 
However, in our experiments, the computation of angles slows down the algorithm, because the star of each reflex vertex 
is retrieved from a data structure that is not in cache memory. 
Instead, we select the reflex vertex $v$ that creates the largest turn in the polyline and, similarly to \cite{Xin:2007},
we update the strip by substituting the current semi-star of $v$ inside the strip with its other semi-star. 
We perform the unfolding and the funnel algorithm on the new strip: if $v$ still remains on the path, then it is frozen; we repeat this procedure until all reflex vertices either are removed or become frozen.
See Fig.~\ref{fig:funnel}(d-e-f) for an example. 


\begin{figure}[t!]
  \centering
  \includegraphics[width=\columnwidth]{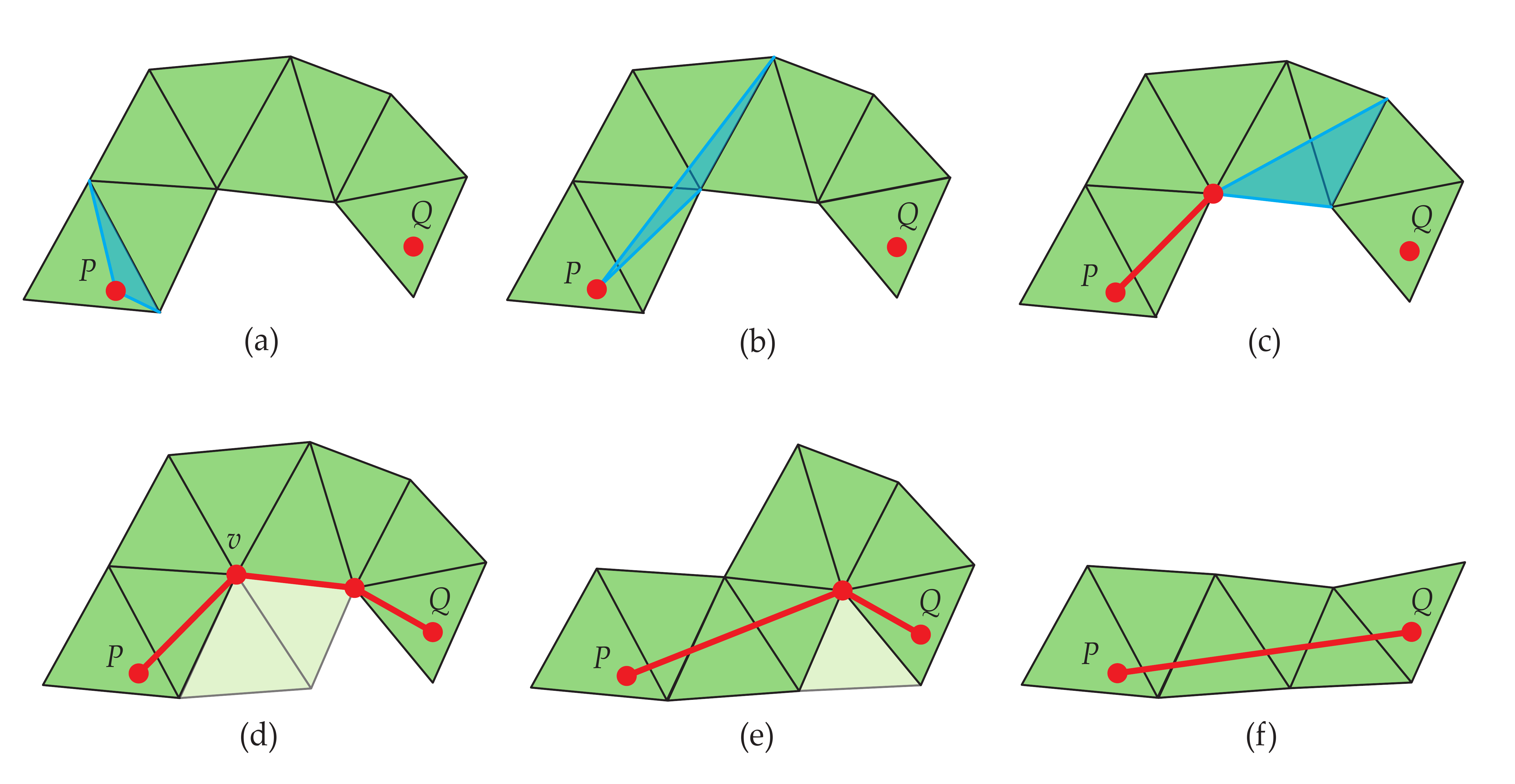}
  \caption{Shortest path computation. 
  Given a source $P$ and a target $Q$ an initial strip of triangles 
  connecting them is found with a search on the dual graph of the mesh. 
  (a) A shortest path within the strip is found by propagating a funnel, 
  which is initialized with its apex at $P$ and its front at the first edge 
  crossing the strip.
  (b) The edges of the strip are processed one by one, to tighten the front 
  of the funnel. 
  (c) When the funnel collapses, a new vertex, called a pseudo-source, 
  is added to the path and the apex of the funnel is moved to the 
  pseudo-source.
  (d) When $Q$ is reached, some reflex vertices may still lie on the path.
  (e) Reflex vertices are analyzed for possible removal, starting at the 
  vertex $v$ causing the sharpest turn.
  (f) The final path is found when no more reflex vertices can be removed.
  }
  \label{fig:funnel}
  \Description{Image}
\end{figure}

\paragraph{Straightest geodesics.}
A straightest geodesic is traced starting at a mesh point $P$ and following one 
given direction $\mathbf{u}$ in the tangent plane $T_P M$ for a given length.
This is done by unfolding the triangles that are crossed by the line as it is being traced
and intersecting their edges with the line in the 2D domain.
The ending results is a straight 2D line that crosses a strip of unfolded 2D triangles,
which can be mapped back into a geodesic surface path thanks to our representation. 
%
In case the path intersects a vertex $v$, then we follow \cite{polthier1998straightest}, 
reflecting the incoming direction about $v$ in its tangent plane. 

\paragraph{Parallel transport.}
We use an approach similar to \cite{Knoppel:2013bd}. 
Each triangle has its own tangent 2D frame of reference.
A direction at a mesh point is represented by 2D coordinates with respect
to the frame of its containing triangle.
When a direction must be parallely transported from a mesh point $P$ to another $Q$,
a rotation must be applied to take into account the fact that $P$ and $Q$ may lie
on different triangles, with different frames of reference. 
To do so, the geodesic path from $P$ to $Q$ is computed, then the strip containing the
path is unfolded so that the coordinates of the axis of the frame of reference of $Q$'s triangle
can be expressed with respect to the frame of reference of $P$'s triangle, hence the rotation
between the two frames can be obtained.


\subsection{User interface}
\label{sub:user}
Leveraging the proposed algorithms, we developed a graphical application to 
allow users to interactively edit splines on meshes, imitating the same 
interaction of established 2D vector graphics tools. We focus on cubic curves
since they are the most used in 2D.
Our application supports the editing of curves by moving, adding, 
and deleting control points, and by translating, scaling and rotating 
whole splines on the surface domain.
All these operations are supported by using the geodesic primitives 
just described.
Here we describe the main editing feature, referring the reader to the 
supplemental video for a demonstration.

\paragraph{Curve editing.}
Borrowing the editing semantic from 2D tools, control points are 
distinguished in \emph{anchor points} and \emph{handle points}. 
Anchors are those points where two Bézier curves are joined, hence a spline 
passes through all of its anchor points.
The preceding and following control points of an anchor are its associated 
handle points.
The handle points of an anchor determine two segments, both starting at the 
anchor itself. 
A spline is tangent to those segments at the anchor points.

In the 2D setting, when an anchor is dragged, the two tangent segments move 
with it and so do the associated handle points.
To obtain the same behavior on the surface, when moving an anchor point 
from $P$ to $P'$, we find the two tangent directions of the tangent segments at $P$.
Then, for each such segment, we trace a straightest geodesics starting at 
$P'$ and for the same length of the segment, in the direction of its tangent, 
rotated by the parallel transport from $P$ to $P'$.
The endpoint of each segment is the new position of the corresponding handle point.

In the 2D setting the user can impose an anchor to be "smooth", i.e. the two 
associated tangent segments are always colinear, which automatically 
ensure $C^1$ continuity at the anchor point.
To provide the same functionality on the surface, whenever the handle point 
$Q_1$ is moved, the opposite handle point $Q_2$ is recomputed by tracing a 
straightest geodesic from the anchor $P$ along the tangent direction defined 
from segment $Q_1 P$ to find the new position of handle $Q_2$. 

\paragraph{Rotation, Scaling and Translation.}
Our application also supports translation, rotation and scaling of a whole spline.
In the 2D settings these operations are obtained by just applying the same affine 
transform to all control points of a spline.
In the surface setting, we define the center of the transformation $C$ to be 
just the mesh point under the mouse pointer.

To apply the transformation, the normal coordinates of the control points are 
computed with respect to the center $C$, in a sort of discrete exponential map. 
Then, the linear transformation is applied on these 2D coordinates, 
which are finally converted back into mesh points by tracing straightest
geodesic paths outward from $C$. 

Translation needs special handling, as the center of the transformation $C$ 
is dragged to a new position $C'$. 
To compensate for the change of reference frame, the normal coordinates are rotated 
by the opposite angle of the parallel transport given by the tangent vector 
from $C$ to $C'$.

Note that, while the exponential map is not reliable 
to provide a dense map, we apply normal coordinates just to a relatively small
set of control points. In this case, we can tolerate the distortions caused by the 
curvature of the surface.

\subsection{Importing SVG drawings}
\label{sub:svg}

Sometimes, it may be convenient to map a whole 2D vector drawing, 
made of several primitives in the Euclidean plane, to the surface. 
Note that, unlike standard methods based on parametrization, 
we are not mapping the result of the drawing, but rather its control points: 
the final drawing is traced directly on the manifold, 
based on its vector specification, and can be further edited after mapping.  
See Figures \ref{fig:teaser} and \ref{fig:mapping} and the accompanying video for examples. 

\begin{figure}
  \centering
  \includegraphics[width=0.95\columnwidth]{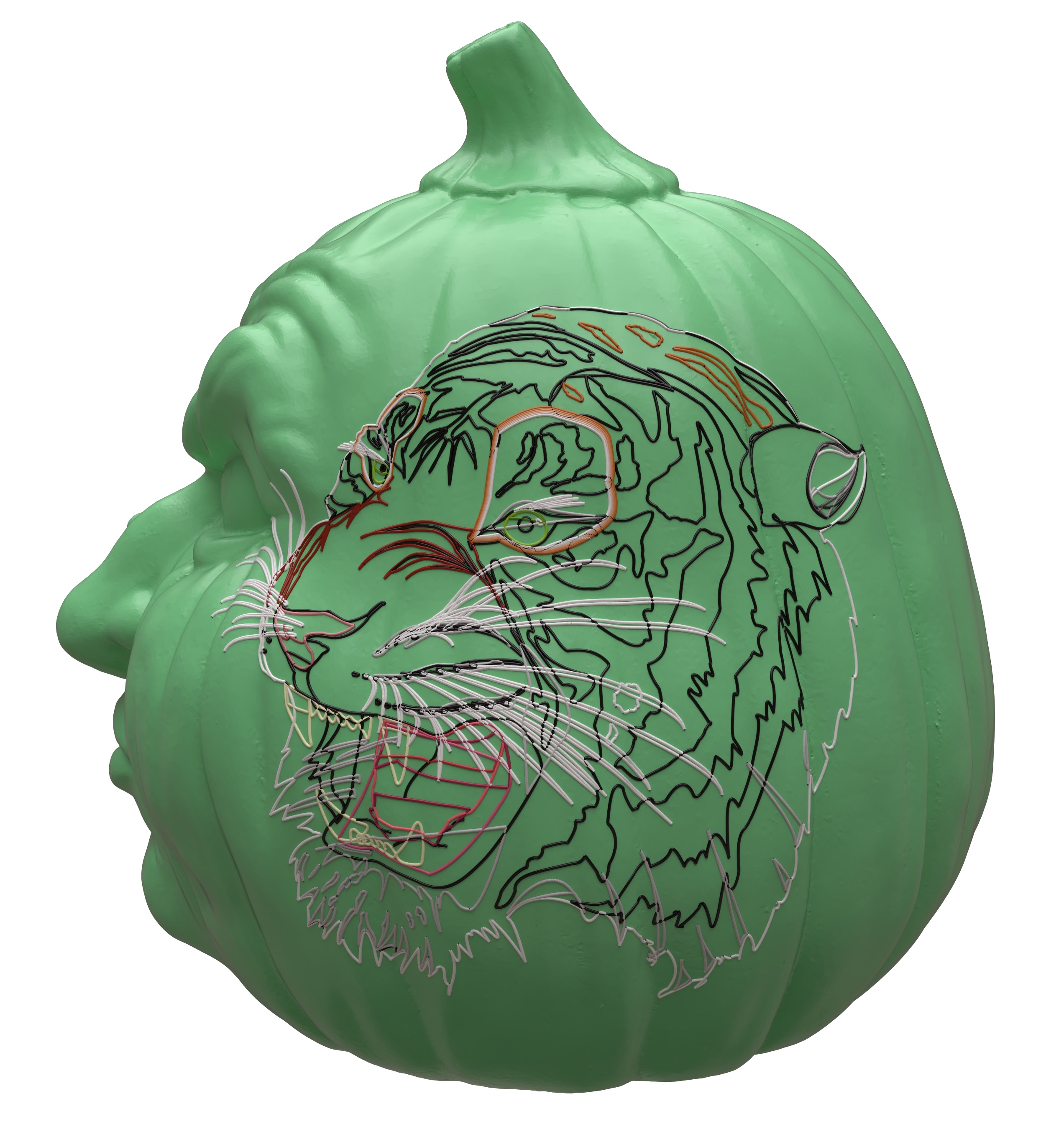}

 \caption{Example of importing a large SVG, made of 2056 curves, onto the 
 pumpkin model, consisting of 394k triangles. Our algorithm takes $289$ milliseconds to trace all curves.
 } 
 \label{fig:mapping}
  \Description{Image}
\end{figure}

This mapping is just meant to provide an initial placement of the 
control points on the target surface, allowing the user to adjust and 
fine tune the drawing afterwards. 
Therefore, we can allow for some distortion in the initial placement.

Our method is analogous to \cite{Biermann:2002bi}, and it is based on the 
conversion between polar coordinated in 2D and normal coordinates on the manifold. 
Each point of the SVG drawing is converted into a mesh point by taking its polar coordinates,
and tracing a geodesic from a center point in the given tangent direction,
for the given distance.

\section{Results and Validation}
\label{sec:results}

We validate our work by tracing curves over a large number of meshes, 
by comparing it with state-of-the-art solutions, and by 
performing interactive editing sessions, as shown in the accompanying video.
In summary, our algorithms produce a valid output in all trials, in a time compatible with 
interactive usage in over 99\% of the trials (Table \ref{tab:times}).
Overall, our method overcomes the limitations of state-of-the-art methods, 
producing valid results with any control polygon on any surface (Fig.~\ref{fig:WA});
and our running times are comparable (Table~\ref{tab:WA}) or faster (Fig.~\ref{fig:flipout}) 
than state-of-the-art methods. 

Concerning interactive usage, our system supports editing in all conditions 
for meshes of the order of one million triangles on a laptop computer. 
Interaction is still supported on meshes with several millions of triangles,
provided that single curves do not span too large a fraction of the model
(see, e.g., Figures~\ref{fig:teaser} and \ref{fig:mosaic}, Table~\ref{tab:mosaic-timings}, and the accompanying video).
Such cases are rare in actual editing sessions, as real designs are usually 
made of many splines, each consisting of several small segments.

\subsection{Robustness and performance}
\label{sub:thingi}

\begin{figure*}[tb]
  \centering
  \includegraphics[width=\textwidth]{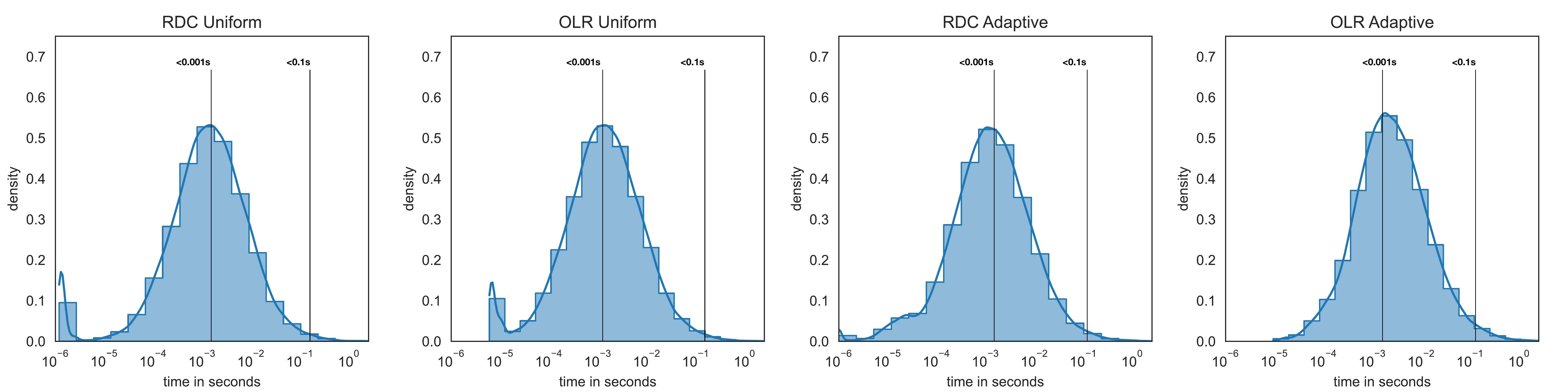}
  \caption{The distributions of running times of our four algorithms 
  for curve tracing in 556,700 trials on 5,567 models from the Thingi10k 
  repository, tracing 100 random curves on each model. 
  All algorithms provide a valid output in all trials. 
  The different algorithms have similar behavior and are compliant with 
  interaction (<0.1 seconds/curve) in more than 99\% of the trials. 
  For uniform subdivision, the OLR algorithm is slightly faster than the RDC algorithm; 
  while for adaptive subdivision, the RDC algorithm performs slightly better than the OLR algorithm. 
  Adaptive algorithms have slightly narrower distributions than uniform algorithms.
 } 
  \label{fig:bulk}
  \Description{Image}
\end{figure*}

\begin{table}
\begin{center}
\begin{tabular}{ccccc}
\toprule
 \multirow{2}{*}{algorithm} & \multicolumn{2}{c}{percent of trials} &  \multicolumn{2}{c}{times at percentile} \\
 &  < 0.001s &  < 0.1s & 90\% & 99\%\\
 \midrule
 RDC Uniform  & 43.1\% & 99.0\% & <0.0122 & <0.097\\
 OLR Uniform  & 44.7\% & 98.9\% & <0.0123 & <0.105\\
 RDC Adaptive & 43.9\% & 99.0\% & <0.0120 & <0.095\\
 OLR Adaptive & 30.0\% & 98.1\% & <0.0215 & <0.185\\  
 \bottomrule
 \end{tabular}
 \end{center}
  \caption{Time performances of our algorithms in 556,700 trials. 
  We report the percentage of trials in which tracing a curve takes less than 
  0.001 and 0.1 seconds, and the running times at the 90th and 99th percentiles, 
  respectively.  
 } 
  \label{tab:times}
\end{table}

We tested our algorithms for robustness by running a large experiment on 
the Thingi10k repository \cite{Thingi10K}. Our algorithm requires that the 
mesh is manifold and watertight, so we extracted the subset of meshes
that have those properties, for a total of 5567 models. The models are used 
as is, without any pre-processing.

For each model, we consider 100 random cubic curves.
For each curve, we take the model in its standard pose, and pick points on 
it by casting random rays orthogonal to the view plane, until we find 
four points that lie on the surface. These become the control points of the 
spline. We place no restriction in the arrangement of the control points. 
This gives us a total of more than half million control polygons.

For each test, we run both the RDC and the OLR tracing algorithms, in their 
uniform and adaptive configurations. The uniform RDC algorithm is expanded to
4 levels of recursion, which generates a geodesic polyline consisting of 48 
geodesic segments.
The uniform OLR algorithm is expanded to 6 levels of recursion, 
which generates a geodesic polyline consisting of 66 geodesic segments.
In fact, because of the different subdivision rules, we cannot generate the 
same number of segments for both schemes.  
For the adaptive variants, we set a threshold $\theta=5^{\circ}$ for the
maximum angle between consecutive geodesic segments along the polyline. 
In this case, the number of geodesic segments in output is variable, depending on the 
curve and on the method.
Since all algorithms generate very similar curves, 
the final tessellated paths that approximate the curve on the mesh,
consisting of one line segment per triangle crossed, have about the same 
number of segments in all four cases.

Trials were executed on a Linux PC with an AMD Ryzen 5 2600x and 32GB memory, 
running on a single core in all experiments. 
All our algorithms passed all the tests and generated curves that appear 
to be smooth in all cases. We quantitatively tested the smoothness of the generated
curves by numerically checking that the angle formed by consecutive geodesic segments
of the polyline was always below the $\theta=5^{\circ}$ threshold and tested the continuity
by checking that the distance between consecutive points was always smaller than the length
of the longer edge in the mesh.

In  Table \ref{tab:times} and Fig.~\ref{fig:bulk}, we compare the timing 
performance of the four algorithms. All algorithms perform quite similarly,
and remain interactive in all cases, with roughly 40\% of trials running at 
less than 1 millisecond per curve, and 99\% of the trials running faster 
than 0.1 second/curve. 
The few trials in which they take more time are concerned, with very few exceptions, 
either with very long curves on large meshes (>1M triangles), or with meshes containing 
many topological holes, in which finding shortest paths between points is more expensive.

There are small differences in the performances of the different algorithms.
For uniform subdivision, the OLR algorithm results as fast as the 
RDC algorithm, beside generating a more refined geodesic polyline.
For adaptive subdivision, the RDC algorithm runs slightly 
faster than the OLR algorithm. 
These differences are probably due to the simpler structure of the OLR uniform 
algorithm in one case, and to the more involved structure of the OLR adaptive 
algorithm in the other.
In fact, both variants of the RDC algorithm follow the same recursive pattern. 
On the contrary, the uniform OLR algorithm expands the curve level by level, 
following a simpler pattern; while the OLR adaptive algorithm requires a recursive 
pattern, with a slightly more involuted structure than the RDC algorithms. 
 
For the sake of brevity, we do not present here results on the algorithms 
for curve tracing and point insertion, which run much faster than the tracing 
algorithms. 

In the previous experiments, the cost of computing a curve depends on both the 
length of the curve and the size of the mesh, with trends that are not linear. 
Roughly speaking, the cost of finding the initial path of geodesics depends on 
both the length of the curve and the size of the mesh, while the subsequent 
cost of finding the shortest path depends just on the length of the curve. 
As the relative length of the curve grows, the cost of finding the initial 
path prevails, since it may requires exploring most of the mesh. 
Statistics on the relative costs of the two phases are shown
in Fig.~\ref{fig:path-computation}.

\subsection{Sensitivity to the input mesh}
\label{sub:sensitivity}

All the algorithms presented in Sec.~\ref{sub:geodesicprimitives} are driven 
by the connectivity of the underlying mesh.
In particular, all intersections between the traced lines and the mesh are 
computed locally to each triangle and forced to lie on its edges, so that each 
traced line consistently crosses a strip of triangles.
With this approach, we could process even meshes containing nearly 
degenerate triangles, with angles near to zero and edge lengths near to the 
machine precision, by relying just on floating point operations, 
without incurring in numerical issues. 
While this is usually not the case with models used in a production environment,
such kind of meshes is are common in the Thingi10k repository and 
provides a stress test for the robustness of our algorithms.

On the other hand, our algorithm for point-to-point shortest path assumes the
initial guess obtained during Phase (i) to be homotopic to the result.
This assumption is common to all algorithms for computing locally shortest paths
\cite{sharp2020flipout}, and it is reasonable as
long as the mesh is sufficiently dense and uniform with respect to the underlying
surface. If, conversely, the mesh is too coarse and anisotropic, then Phase (i) may provide an
initial guess, which cannot be homotopically shortened to the correct solution.
In this case, a naive application of the algorithm may get stuck in local 
minima of the space of shortest paths, leading to a wrong curve.

This limitation is quite rare in practice for meshes used in design applications,
which is our target, but did happen for some meshes in the Thingi10k dataset.
We overcome this limitation without changing the algorithm itself, but simply by
creating a more accurate graph for computing the initial guess when dealing with
meshes with long edges.

When we build the dual graph to be used in Phase (i), we split mesh edges that are too long at their midpoint, 
until all edges are shorter than a given threshold, 
and we symbolically subdivide their incident triangles accordingly. 
Note that this subdivision is done just for the purpose of building the graph, 
without changing the underlying mesh.
In this augmented graph, a single triangle
may be represented by multiples nodes, giving us a more accurate approximation of paths. 
This approach has the effect of densifying the graph
without changing the mesh upon which we run Phase (ii).
We chose the 5\% of the diagonal of the bounding box of the model as threshold.
Once the strip is computed on the augmented graph, we reconstruct the strip
on the mesh using the graph's node provenance, i.e. the mesh triangle 
corresponding to each node, which we store during initialization.

An alternative approach to cope with the same problem would be to pre-compute
an intrinsic Delaunay triangulation in the sense of \cite{Sharp:2019ea} and do
all computations by using intrinsic triangulations.
We did not adopt this latter solution since the problem occurs quite seldom,
while using intrinsic triangulations would require more complex data structures.


\begin{table}[tb]
\begin{center}
\begin{tabular}{l@{}rrrr}
\toprule
 \multicolumn{2}{c}{model} &  \multicolumn{2}{c}{WA} & ours (OLR)\\
name &  triangles &  pre-proc. (s) & tracing (ms) & tracing (ms) \\
 \midrule
 cylinder &10k & 54 & 2--2 & 1--1 \\
 kitten & 37k & 234 & 3--3 & 3--3 \\
 bunny & 140k & 665 & 2--2 & 10--12 \\
 lion & 400k & 2316 & 3--3 & 4--24 \\
 nefertiti & 496k & 2571 & 6--64 & 25--67 \\  
 \bottomrule
 \end{tabular}
 \end{center}
  \caption{Compared time performances of curve tracing with the WA method and
  our OLR, on the curves shown in Fig.~\ref{fig:WA} and Fig.~\ref{fig:WA-cylinder}.
  Each curve is sampled at 67 points, including endpoints; curve tracing times are
  averaged on each curve repeating tracing 1000 times per curve, and we report
  minimum and maximum times over the different curves shown in the images.    
 } 
  \label{tab:WA}
\end{table}

\begin{table}[tb]
\begin{center}
\begin{tabular}{l@{}rrrrr}
\toprule
\multicolumn{2}{c}{model} & \multicolumn{1}{c}{control} & \multicolumn{1}{c}{subdivided} & \multicolumn{2}{c}{time (ms)} \\
name & triangles & \multicolumn{1}{c}{polygons} & \multicolumn{1}{c}{segments} & \multicolumn{1}{c}{total} & \multicolumn{1}{c}{per curve} \\
\midrule
veil & 132k & 2 & 402 & 2.3 & 1.1 \\ 
arm & 145k & 2 & 856 & 35.6 & 17.8 \\ 
boot & 175k & 2 & 755 & 21.1 & 10.5 \\ 
deer & 227k & 4 & 1511 & 21.8 & 5.4 \\ 
lady & 281k & 9 & 1917 & 11.4 & 1.2 \\ 
car & 282k & 2 & 670 & 28.0 & 14.0 \\ 
pumpkin & 394k & 5 & 1750 & 30.0 & 6.0 \\ 
rhino & 502k & 7 & 2395 & 39.8 & 5.6 \\ 
owls & 641k & 14 & 3224 & 20.8 & 1.4 \\ 
alexander & 699k & 5 & 1560 & 20.5 & 4.1 \\ 
vase & 754k & 8 & 1677 & 9.0 & 1.1 \\ 
nike & 5672k & 7 & 4147 & 253.8 & 36.2 \\ 
\midrule
nefertiti & 496k & 463 & 64110 & 73.4 & 0.2 \\ 
dragon & 7218k & 221 & 60656 & 761.7 & 3.4 \\ 
\bottomrule
\end{tabular}
\end{center}
\caption{Time performances for curve tracing on the models in Fig.~\ref{fig:mosaic}
and in the teaser, using the uniform OLR algorithm with 5 levels of subdivisions.
We report the total time of computing all the curves and the average time of
computing a single curve.
For all the reported models, our algorithm achieves performance compatible with
real-time editing, since the time per curve is at most in the order of tens of 
milliseconds.
}
\label{tab:mosaic-timings}
\end{table} 

\begin{figure}[tb]
  \centering
  \includegraphics[width=\linewidth]{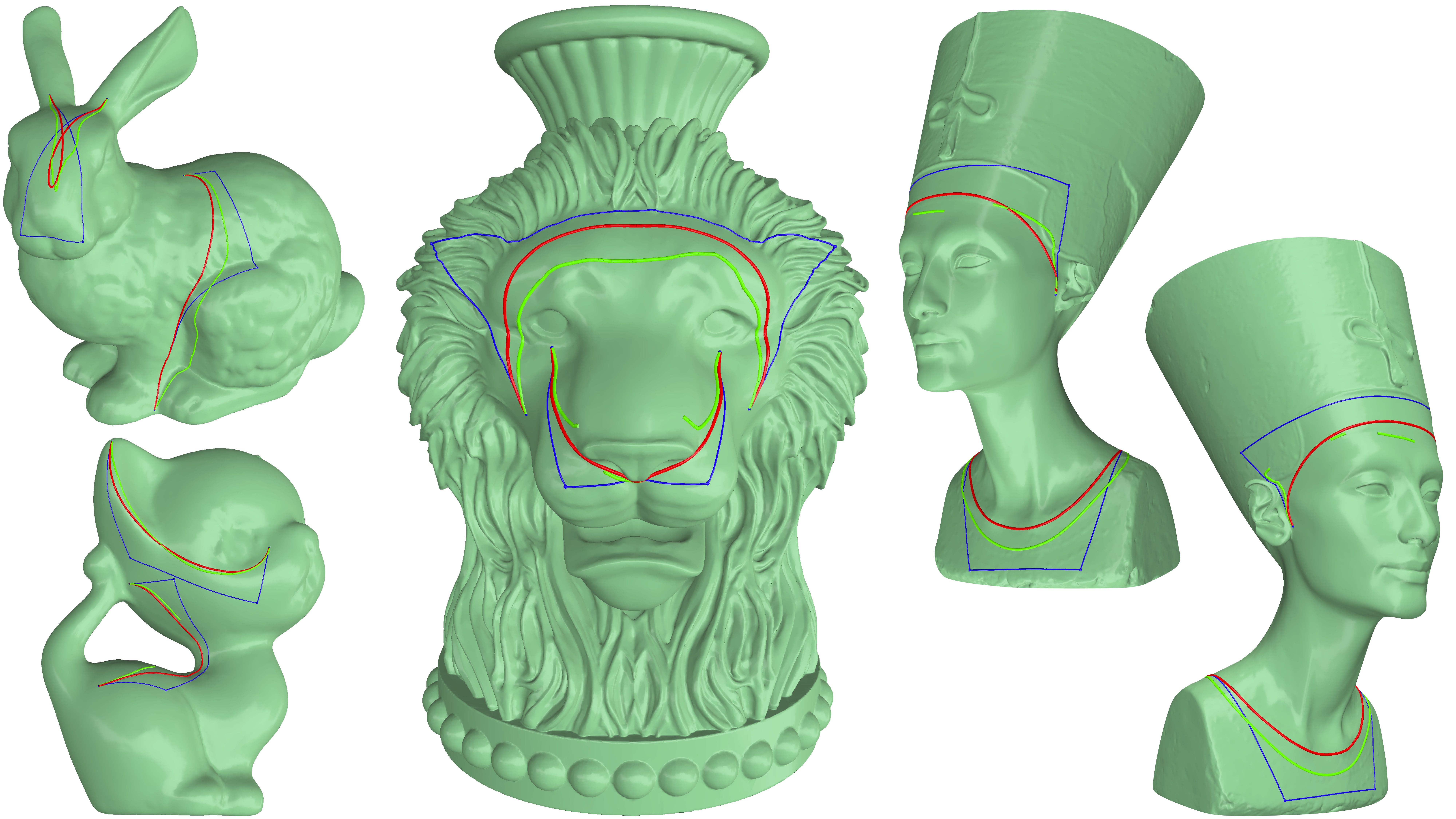}
  \caption{Comparisons between the WA method (green curves) on ours (red curves);
  control polygons in blue. 
  The WA curves may contain heavy artifacts (bunny), lose tangency at the
  endpoints (bunny, nefertiti), or be  broken (kitten, lion, nefertiti).}
  \label{fig:WA}
  \Description{Image}
\end{figure}

\begin{figure}[tb]
  \centering
  \includegraphics[width=\linewidth]{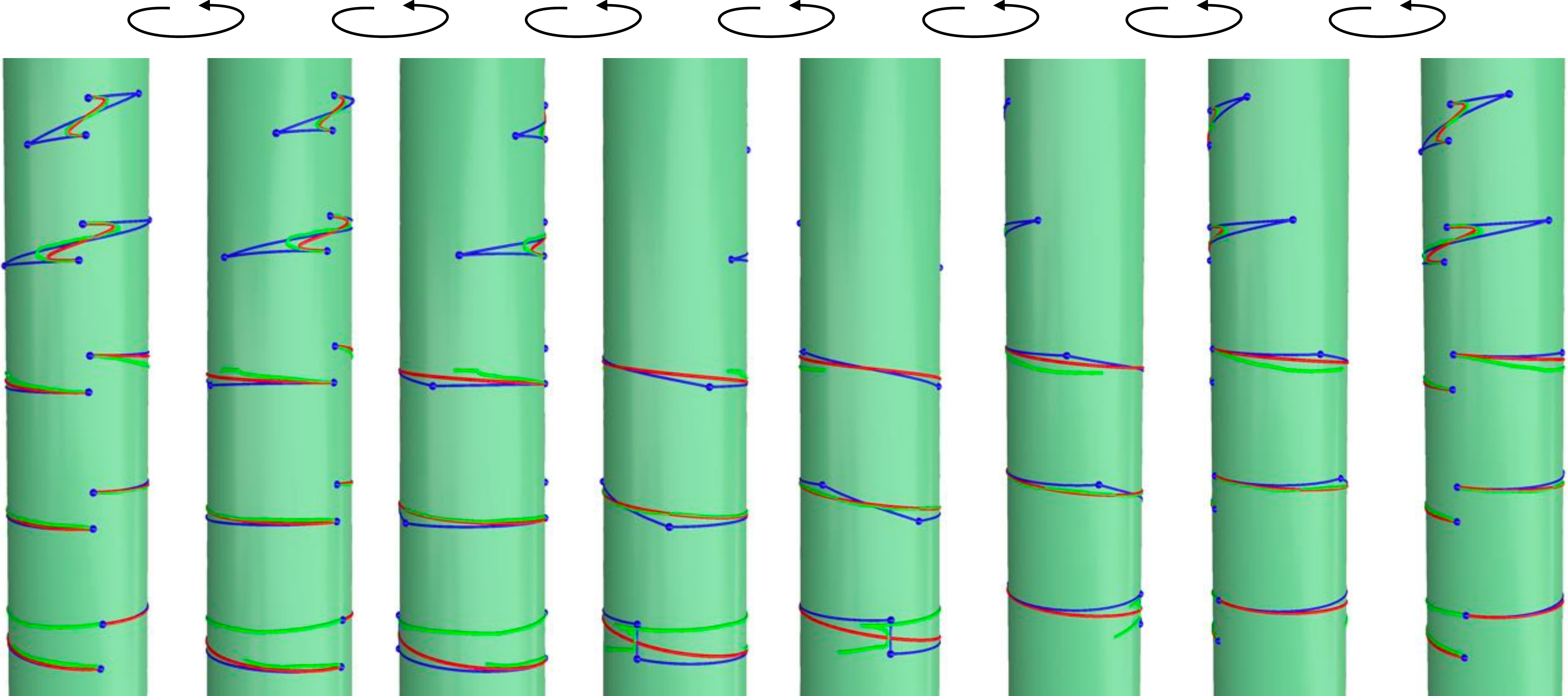}
  \caption{Evolution of a curve while dragging handle points about a cylinder
  (top to bottom, rotated views left to right) with the WA method (green curves)
  and ours (red curves). Our curve jumps from the ``reversed S'' configuration
  to the spire and remains stable throughout. The WA curve is stable only in 
  the ``reversed S'' configuration, next it breaks, then it forms a spire, 
  and eventually it breaks again.}
  \label{fig:WA-cylinder}
  \Description{Image}
\end{figure}

\subsection{Comparison with the state-of-the-art}
\label{sub:flipOut}

\paragraph{Weighted Averages (WA) \cite{Panozzo:2013eh}.}
The method presented in \cite{Panozzo:2013eh} tries to estimate the RCM on a
surface, by approximating the geodesic distance on the input mesh $M$ with the
Euclidean distance on a higher-dimensional embedding of $M$. 
Given a set of control points and weights, instead of resolving 
Eq.~\ref{eq:RCM} (of this paper) on $M$, they compute the standard affine
average of Eq.~\ref{eq:affine} in the embedding space. Then they use a special
technique, called \emph{Phong projection}, to bring the resulting space curve to the embedded
mesh. Finally they recover the corresponding points on $M$.
We compare with this technique by using the implementation provided by the
authors, with the same sampling used in our experiments.

The embedding and the data structures to support Phong projection are computed
in a pre-processing step, which is quite heavy in terms of both time and space,
and can hardly scale to large datasets (see Table \ref{tab:WA}). 
We managed to pre-process datasets up to about 500K triangles, but we could 
not process some of the larger datasets we use in our work, because memory 
limits were exceeded.   
The embedding is built by sampling a small subset of the vertices first 
(fixed to 1000 by the authors), computing all-vs-all geodesic distances on $M$
for such subset, and embedding such vertices in a 8D Euclidean space by keeping
their mutual Euclidean distances as close as possible to their geodesic 
distances on $M$. 
The remaining vertices are embedded next, by using the positions of the first
embedded vertices as constraints. The connectivity of $M$ is preserved, and the
positions of vertices are optimized, so that the distances between adjacent
vertices remain as close as possible to their distances on $M$.    

The online phase of WA is very fast, and it is insensitive to the size of the
input and the length of the curve (see Table \ref{tab:WA}). 
However, we experienced a case that took one order of magnitude more time than the others. 
We conjecture this is due to some unlucky configuration for the Phong projection, slowing its convergence.
On the contrary, the performance of our method is dependent on both the size of the dataset and
the length of the curve, being faster than WA on small datasets and shorter curves,
and slower on large datasets and long curves. 
In terms of speed, both methods are equally compatible with interaction on the tested models. 

Concerning the quality of the result, the smoothness of the WA embedding, which is necessary to guarantee the
smoothness of the Phong projection, cannot be guaranteed, hence the WA method
suffers of limitations similar to the RCM method analyzed in Sec.~\ref{sub:karcher}.
As soon as the segments of the control polygon become long, relevant artifacts
arise, and the curve may even break into several disconnected segments. 
Some results obtained with the WA method, compared with our results, are shown
in Figures \ref{fig:WA} and \ref{fig:WA-cylinder}.
In particular, Fig.~\ref{fig:WA-cylinder} exemplifies the behaviors of the
two methods as a control polygon becomes larger.
While our curve remains smooth and stable throughout, except for the necessary
jump between the ``reversed S'' and the spire, the WA curve becomes unstable 
and breaks in most configurations where the control points are far apart. 

\begin{figure}[tb]
  \centering
  \includegraphics[width=\columnwidth]{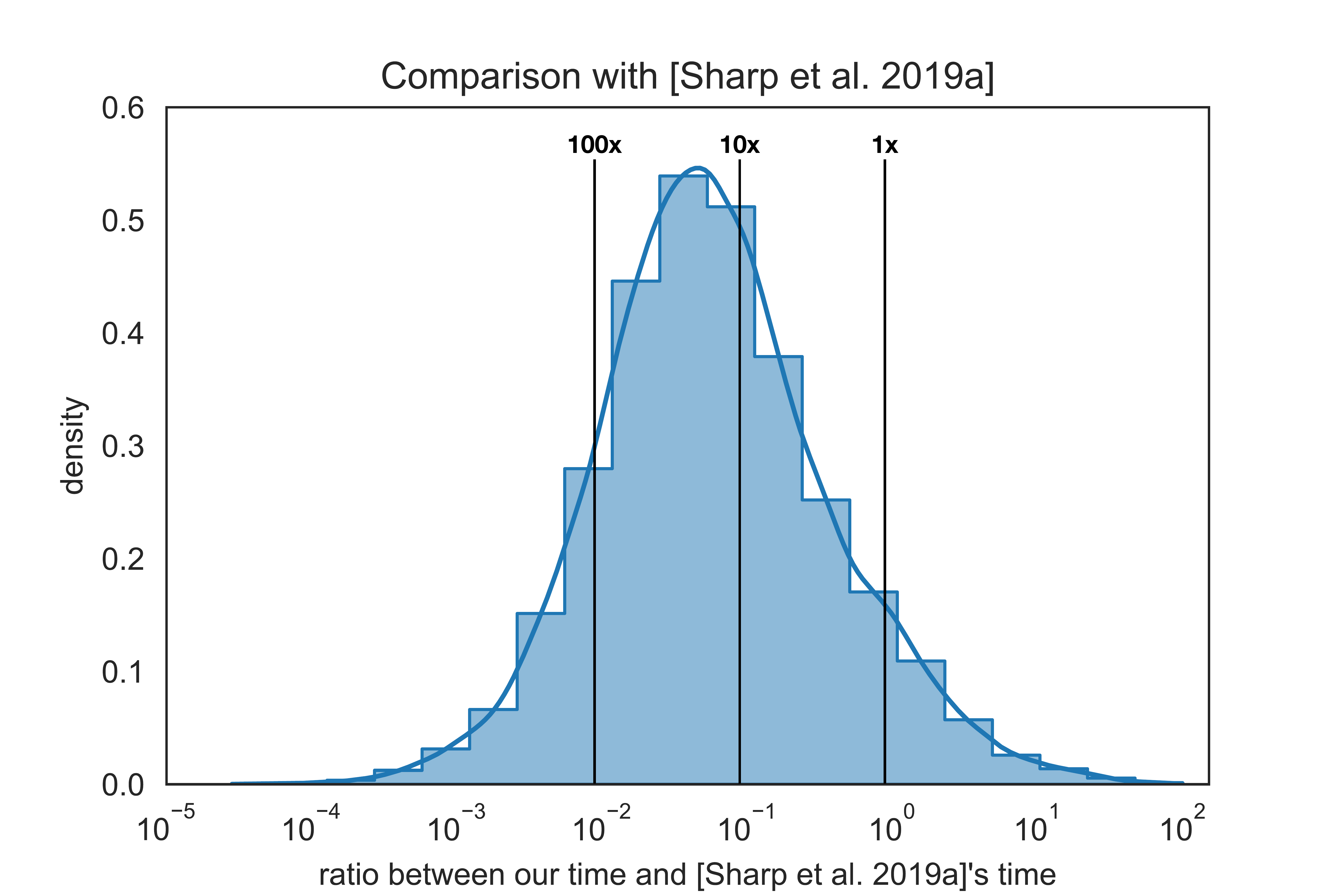}
  \caption{The graph shows the distribution of the ratio of the running times 
  between our RDC uniform algorithm and the implementation from \cite{geometrycentral},
  which is based on  the \emph{flipOut} method for computing geodesics \cite{sharp2020flipout}.
  Here we report only the 78,854 trials, out of 556,700, for which 
  \cite{geometrycentral} could provide an output. 
  On average, our RDC algorithm implementation provides more than a 10x speedup over \cite{geometrycentral}.
} 
  \label{fig:flipout}
  \Description{Image}
\end{figure}

\paragraph{RDC based on flipOut \cite{geometrycentral, sharp2020flipout}.}
The \emph{flipOut} algorithm was proposed recently \cite{sharp2020flipout} 
as a fast solution to the computation of locally shortest geodesic paths. 
On the basis of the \emph{flipOut} algorithm, the same authors have implemented 
the algorithm of \cite{Morera:2008jk}, which uses the same recursive scheme of our 
RDC algorithm for curve tracing. 

While our algorithms have no limitations, and could provide 
a valid output in all 556,700 trials, the algorithm in \cite{geometrycentral} 
requires that the control polygon does not contain
self-intersections, a case which is pretty common with cubic curves,
and happens in 33\% of the randomly generated polygons.
This is due to an intrinsic limitation of the \emph{flipOut} algorithm, which was discussed in \cite{sharp2020flipout}.

We have used the implementation provided by the authors \cite{geometrycentral} 
to run the same experiments of Sec.~\ref{sub:thingi}, 
with the same parameter used for our RDC algorithm with uniform expansion.
Because of the above limitation, we excluded from the comparison all the trials for which
the algorithm of \cite{geometrycentral} could not provide an output, keeping a total of 78,854 out of 
556,700 trials.

From a visual inspection of random samples of the results, it seems that both
algorithms generate the same curves. 
In Fig.~\ref{fig:flipout}, we present a comparison between the performances
of the two algorithms. 
Our RDC uniform algorithm exhibits a speedup of more than 10x on average.
This speedup seems to be due to a faster estimate of the initial guess to compute the point-to-point
shortest paths.
In fact, both the algorithm we presented in Sec.~\ref{sub:geodesicprimitives} and the \emph{flipOut} algorithm 
start from an initial path to iteratively shorten it and converge to the shortest path. 
The algorithm adopted in \cite{geometrycentral} to compute the initial path is much slower 
in finding this initial guess, while \emph{flipOut} is comparable to ours in the shortening step. 
This is shown in Fig.~\ref{fig:path-computation}.

\begin{figure*}[tb]
  \centering
  \includegraphics[width=0.24\textwidth]{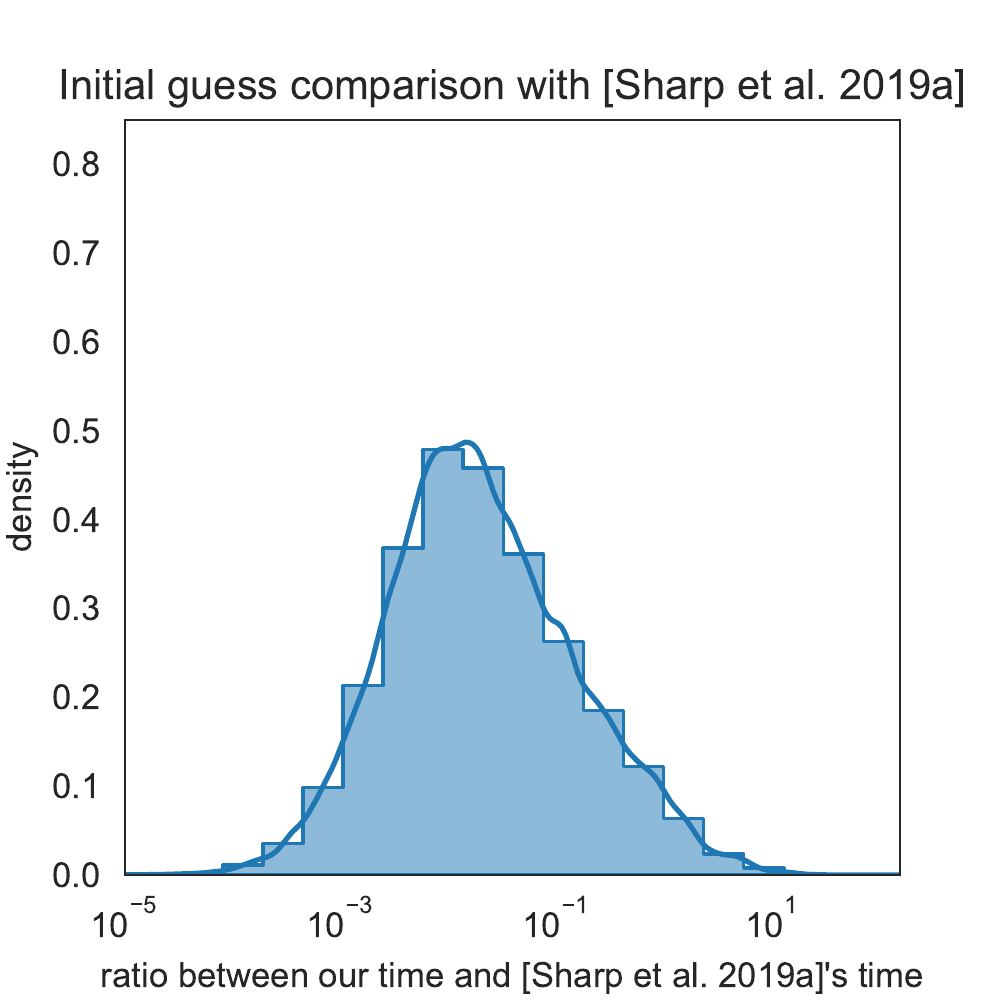}
  \includegraphics[width=0.24\textwidth]{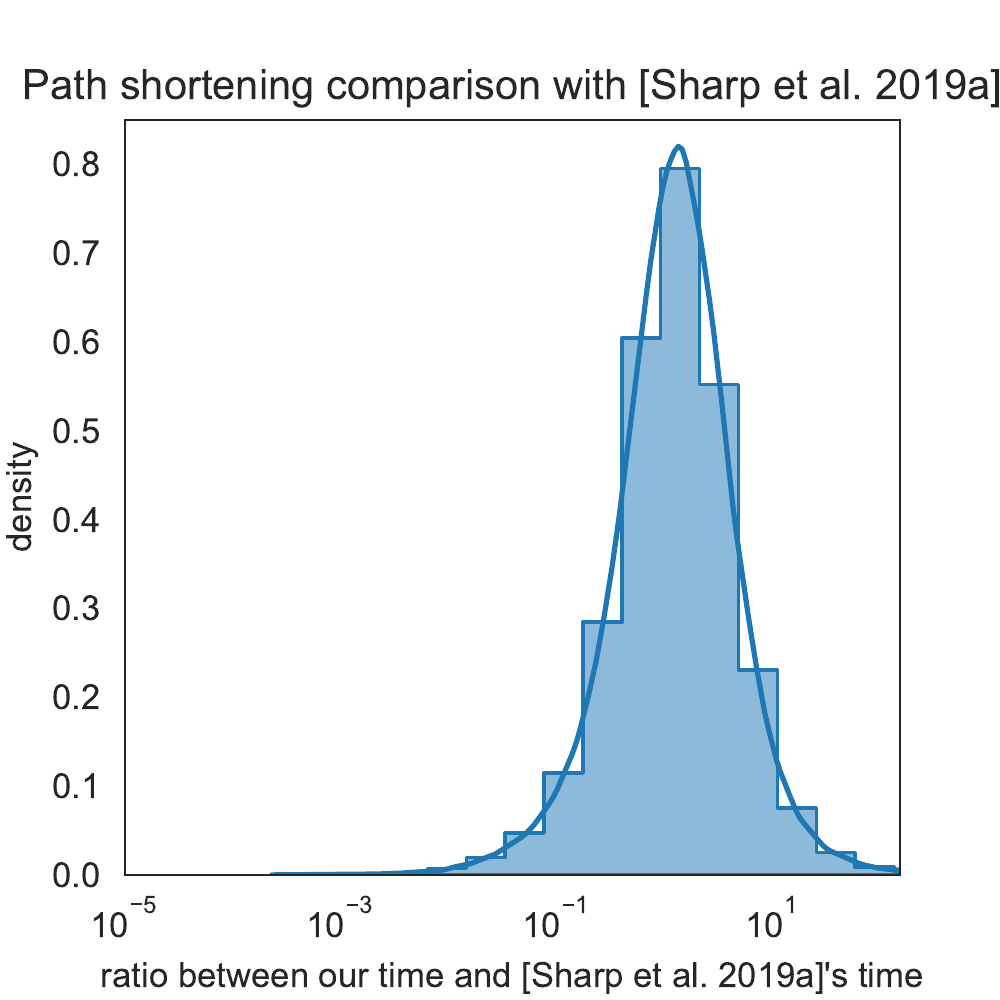}
  \includegraphics[width=0.24\textwidth]{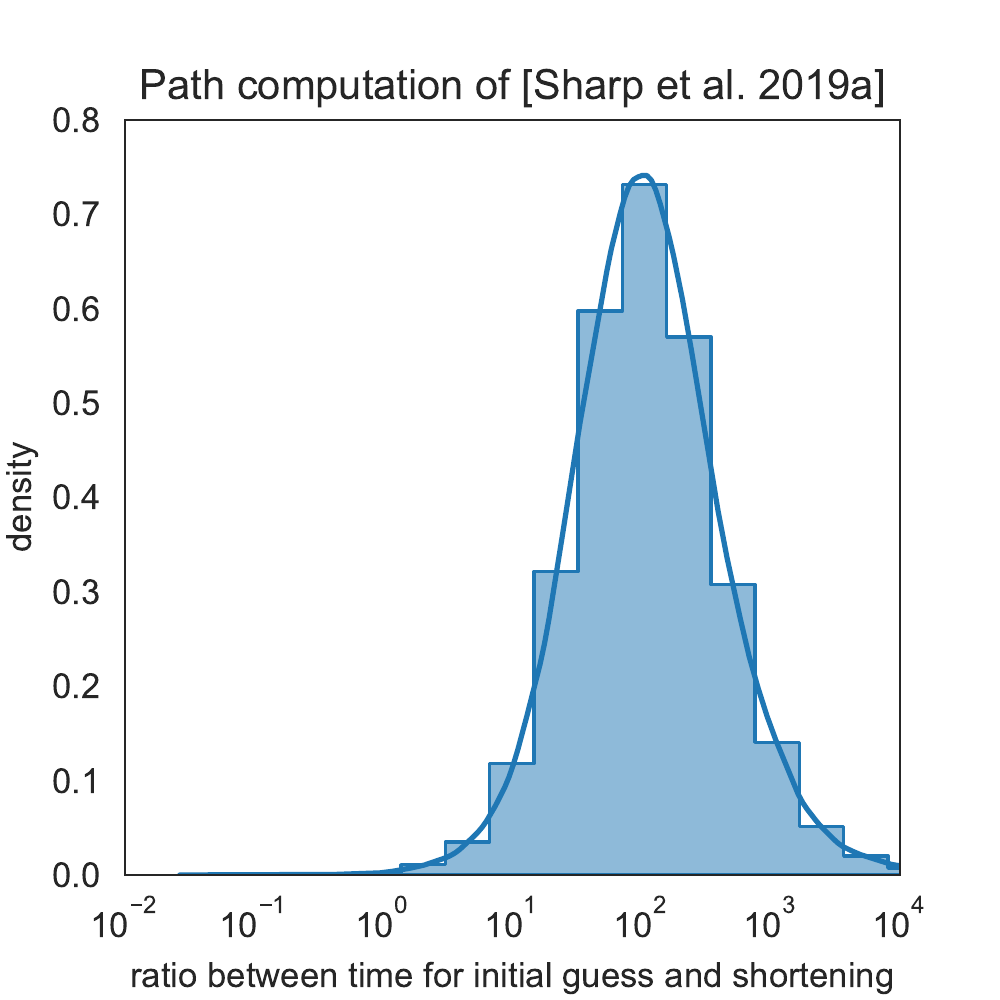}
  \includegraphics[width=0.24\textwidth]{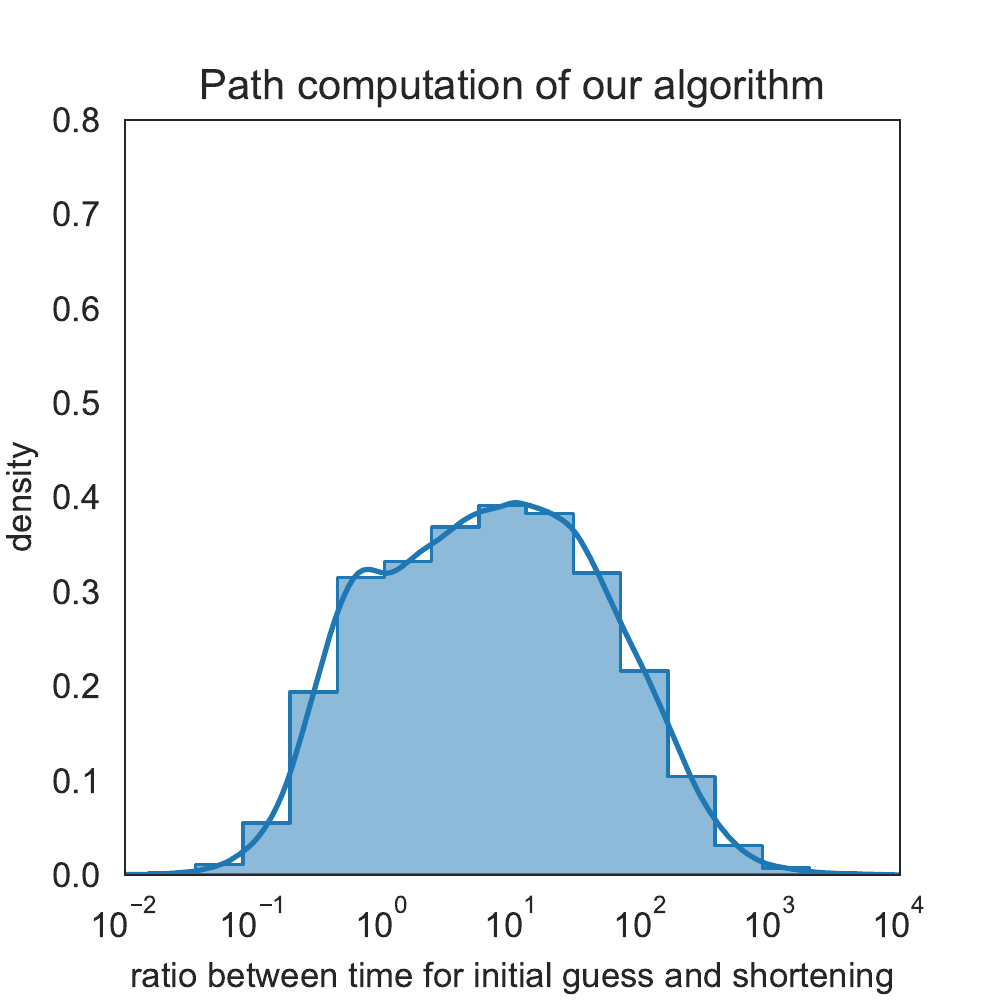}
  \caption{Left (2 charts): When computing the initial guess, our implementation is almost 
  always one order of magnitude faster than \cite{geometrycentral}, while the
  second shortening phase are comparable in performance. Right (2 charts): On average, our algorithm
  spends 10 times more time in computing the initial strip then in refining
  the final result; while \cite{geometrycentral} is two orders of magnitude slower in
  completing the first phase, compared to the second one.}
  \label{fig:path-computation}
  \Description{Image}
\end{figure*}

\begin{figure*}[tb]
  \centering
  \includegraphics[width=\textwidth]{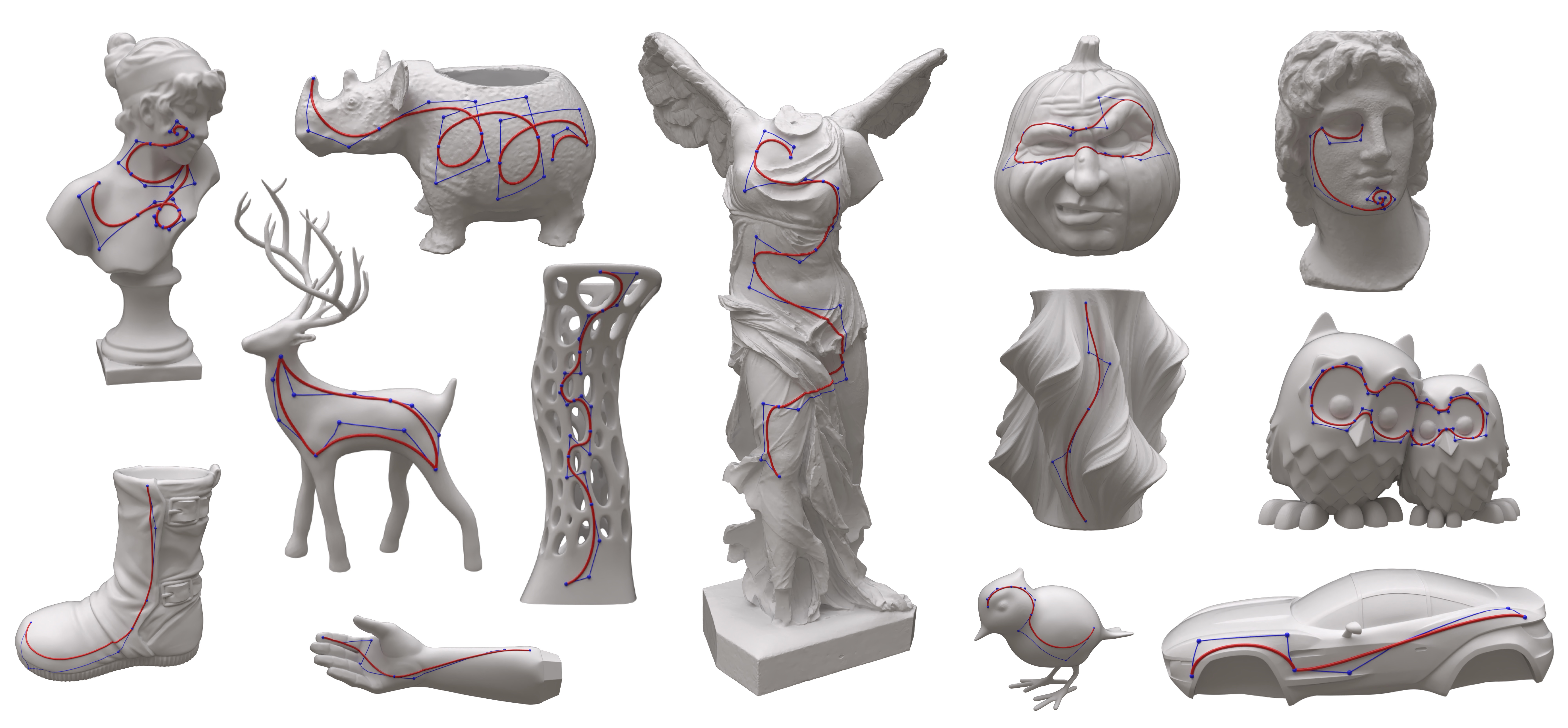}
  \caption{A gallery of models and splines drawn with our method. 
  Both  smooth ($C^1$) and corner ($C^0$) continuity at junction points are exemplified.
  The selected models span a wide range of shapes and the sizes of meshes 
  vary between about 130k and 5.7M triangles.} 
  \label{fig:mosaic}
  \Description{Image}
\end{figure*}

\subsection{Interactive use}
\label{sub:stress}
We have used extensively our system on a variety of models. All editing sessions
where performed on a MacBook laptop with a 2.9GHz Quad-Core Intel Core i7 with 
16GB memory, running on a single core.

Fig.~\ref{fig:mosaic} presents a gallery of curves drawn interactively on 
objects picked from the Thingi10k collection. Statistics for each example
are summarized in Table~\ref{tab:mosaic-timings}.
Interaction is quite intuitive, being supported with a GUI that mimics the
drawing of spline curves in standard 2D systems, as described in
Sec.~\ref{sub:user}. The most tricky aspects, with respect to the standard 2D
case, are concerned with using tangents that consist of geodesic lines instead
of straight lines. In our experience, the use of geodesic tangents, which is
intrinsic to the manifold metric, becomes intuitive quickly.

We have stressed our system by working on very large meshes as shown in 
Fig.~\ref{fig:teaser}. Even on meshes of a few million triangles, 
our implementation remains interactive, as shown in Table~\ref{tab:mosaic-timings}.




\section{Concluding remarks}
\label{sec:conclusions}

We propose methods for interactively drawing and editing of Bézier curves 
on meshes of millions of triangles, without any limitation on the curve shape 
and extension of control polygons.
Our algorithms are robust, having been tested on over five thousands 
shapes with over half a million randomly generated control polygons, and they 
are compatible with interactive usage even on large meshes. 
Our new Open-uniform Lane-Riesenfeld scheme provides the smoothest practical 
solution so far for Bézier curves in the manifold setting; while our 
De Casteljau variation is simple to implement, at the price of less smoothness.

The main limitation of these methods lie in the discontinuities of the space of 
curves with respect to their control points: curves are always smooth, but they may make jumps during editing. 
Such a discontinuity is inherent of the geodesic metric, and it can be overcome by using a spline with shorter control polygons, instead of a single large polygon, to define the curve. 
Our algorithms for point insertion greatly help in this task.  

In the future, we want to consider other types of splines.
An extension of our approach to B-splines seems straightforward.
An extension to interpolating splines seems easy, but it requires manifold 
extrapolation, which may become unstable. The most complex extension would be
to handle NURBS, which at this point remains unclear how to do. 
More generally, the smoothness analysis in the non-uniform case needs a 
thorough investigation.  


\section*{Acknowledgments}
We wish to thank Chiara Eva Catalano, Tom Duchamp, Kai Hormann, Daniele Panozzo, and Giuseppe Patan\'e for helpful discussions; Marzia Riso for her help with experiments and figures; and Michele Serpe for modeling a mesh for us. Models other than Thingi10k: Nefertiti is courtesy of Scan the World; Armadillo and Bunny are courtesy of the Stanford 3D Scanning Repository.

\appendix
\section{Geodesics on Surfaces and Meshes}
\label{app:geodesic}


In the following, we provide just a summary of basic concepts. 
For a complete account on this subject, we refer to \cite{do1992riemannian}(smooth setting) 
and \cite{Crane:2013:DGP} (discrete setting).

\paragraph{Smooth setting.}
Let $M$ be a smooth surface embedded in $\mathbb{R}^3$.
The embedding induces a Riemannian metric on $M$, 
defining the length $L(\gamma)$ of any parametric curve $\gamma$ on $M$. 
Without lack of generality, we consider $\gamma$ to be parametrized in $[0,1]$ with constant speed. 

The \emph{geodesic distance} $d(p,q)$ between any two points $p, q \in M$ is the infimum of length $L$ over all curves $\gamma$ such that $\gamma(0) = p$ and $\gamma(1) = q$;
and one such curve $\bar{\gamma}$ satisfying $L(\bar{\gamma})=d(p,q)$ is called a \emph{shortest geodesic path} between $p$ and $q$. 

A shortest geodesic path may not be unique. 
For any $p\in M$, the set of points $q$ such that there exist more than one geodesic shortest path connecting $p$ to $q$ belong to the \emph{cut locus} of $p$. 
If $q$ stays away from the cut locus of $p$, then both the geodesic distance $d(p,q)$ and the (unique) shortest path $\gamma_{p,q}$ joining them vary smoothly with $q$.
On the contrary, if $q$ moves along a trajectory that crosses the cut locus of $p$, then the distance $d(p,q)$ changes with continuity, but not smoothly, at the cut locus; while $\gamma_{p,q}$ may jump to a totally different curve. 
These facts are relevant to assess the stability of the methods discussed in Section \ref{sec:method}. 

A path $\gamma_{p,q}$ is said to be \emph{locally shortest} if there exist $\delta>0$ such that for any interval $[t-\delta,t+\delta]\subseteq[0,1]$ the restriction of $\gamma_{p,q}$ to  $[t-\delta,t+\delta]$ is a shortest path. 
A curve connecting $p$ to $q$ can be homotopically deformed to a locally shortest path; this is at the basis of all algorithms for computing locally shortest paths, including the algorithm we present in Sec.~\ref{sub:geodesicprimitives}.    

Geodesic curves can be also characterized by their \emph{straightness}. 
In order to assess the curvature of lines in the intrinsic geometry of $M$, one needs to introduce the covariant derivative, which we omit here for brevity. 
Intuitively, from an extrinsic point of view, a geodesic curve $\gamma$ does not make any further turn except the strictly necessary to follow the curvature of $M$: it turns \emph{with} $M$, but it does not turn \emph{on} $M$. 
Thus, geodesics play the role of straight lines on $M$. 

A geodesic curve is completely defined by its starting point and tangent vector.  
Following this observation, the \emph{exponential map} $\exp_p : T_p M \longrightarrow M$ maps vectors of the tangent plane $T_p M$ at $p$ to points on the surface. In general, the exponential map is not injective. 
A neighborhood $U$ of $p$ over which $\exp_p$ is invertible is called a \emph{normal neighborhood} of $p$; 
the inverse of the exponential map on $U$ defines the \emph{logarithmic map} $\log_p: U \to T_p M$, which provides local coordinates around $p$ called \emph{normal coordinates}. 

A set $U\subset M$ is said to be a \emph{totally normal neighborhood} if it is a normal neighborhood for all its points; and it is said to be \emph{strongly convex} if it contains all shortest paths between pairs of its points.
The maximum radius that a region $U$ can be extended to, while remaining a [totally] normal neighborhood, or a convex set, depends on the Gaussian curvature of $M$ and is not easy to assess. 
For this reason, methods based on exp and log maps cannot guarantee robustness in the general case. 

\paragraph{Discrete setting.}
If $M$ is a polyhedral manifold -- which, w.l.o.g., we can consider to be a triangle mesh -- then it is no longer smooth. 
The concepts of shortest geodesic path, geodesic distance and locally shortest path are nonetheless well defined.
A geodesic path $\gamma$ between two points $p, q$ on $M$ is a polyline, intersecting a strip of triangles of $M$.  
As long as $\gamma$ does not cross any vertex of $M$, the segments of this polyline can be obtained by flattening the strip of triangles to the Euclidean plane.  

Geodesics that cross vertices are more complex to handle.
A vertex $v_i\in M$ can be classified depending on the sign of its \emph{angle defect}, defined as the difference between $2\pi$ and the total angle about vertex $v_i$. 
It turns out that no (locally) shortest path can pass through a vertex with positive angle defect;  
while infinitely many shortest paths can reach a vertex with negative angle defect 
from a given direction and take divergent directions to the other side of it. 
This fact makes the definition of a \emph{straightest geodesic} more complicated than in the smooth setting. 
Following the definition of \cite{polthier1998straightest},  a straightest geodesic intersecting a vertex $v$ is required to bisect the total angle about $v$. 
With this definition, a straightest geodesic $\gamma$ is also locally shortest if and only if it does not cross any vertex with positive angle defect.

\section{Open-uniform Lane-Riesenfeld subdivision}
\label{app:subLR}
In the  Euclidean setting, a B\'ezier curve with control polygon $\Pi_k$ can be rewritten as an open-uniform B-spline of degree $k$ (order $k+1$) with the same control polygon $\Pi_k$ and knot vector $(00\ldots011\ldots1)$, where the $0$ and $1$ are repeated $k+1$ times \cite{Salomon:2006wp}.
%
The generalized Oslo algorithm \cite{Lyche:93} 
is applied to repeatedly insert knots at the midpoints of all non-zero intervals in the knot vector, producing a sequence of open uniform B-splines, all describing the same curve, whose control polygons $\Pi_{LR}^n$ converge to curve itself. 
In the following, we sketch the construction for a generic value of $k$, and we provide the complete solution for values $k=2, 3$, which are most relevant in the applications.  
See \cite{Cashman:2007jh} for further details in the Euclidean setting.

Let $\Pi^0_{LR} = \Pi_k$ be our initial control polygon, and let $\Pi^{n+1}_{LR}$ be the polygon obtained from $\Pi^n_{LR}$ with one round of knot insertions.
The knot vectors associated to the first levels of the subdivision are:
\[
\begin{array}{c}
(0 0 \ldots 0 1 1 \ldots 1)\\
(0 0 \ldots 0 1 2 2 \ldots 2)\\
(0 0 \ldots 0 1 2 3 4 \ldots 4)\\
(0 \ldots 0 1 2 3 4 5 6 7 8 \ldots 8)\\
\cdots\cdots
\end{array}
\]
where the first and the last node are always repeated $k+1$ times, and nodes are renumbered at each level. 
Note that, at level $n$, polygon $\Pi^n_{LR}$ contains $2^n+k$ control points, and its associated knot vector contains $2^n+1+2k$ nodes, with $2^n$ non-null intervals.
Each point of $\Pi^{n+1}_{LR}$ is computed by applying the Oslo algorithm, as an affine average of at most $k+1$ consecutive points of $\Pi^n_{LR}$, with weights depending on $k+2$ consecutive knots of the knot vector of level $n$. 
The first few levels of the subdivision need special stencils, obtained directly from knot insertion; as soon as $n\geq\lceil\log_2 (k+1)\rceil$, the knot vector contains at least $k+1$ non-null intervals, and the subdivision rules stabilize: we have the standard stencils of the uniform Lane-Riesenfeld subdivision \cite{Lane:dq} for control points that depend just on uniform nodes, plus a set of $2k-2$ end conditions at each side of the curve. 

In the following, we give the stencils for the cases $k=2, 3$. 

\paragraph{Quadratic curves ($k=2$).}
We have $\Pi_2=(P_0,P_1,P_2)$ with initial node vector $(0 0 0 1 1 1)$.
The polygons $\Pi^1_{LR}=(P_0^1,P_1^1,P_2^1,P_3^1)$ and $\Pi_{LR}^2=(P_0^2,...,P_5^2)$ at the first two levels of subdivision are obtained with the special rules
\[
\begin{array}{c}
P_0^1=P_0,\hspace{0.2cm} P_1^1=\frac{1}{2}P_0+\frac{1}{2}P_1,\hspace{0.2cm} P_2^1=\frac{1}{2}P_1+\frac{1}{2}P_2,\hspace{0.2cm} P_3^1=P_2\\[5pt]
\mbox{and}\\[5pt]
P_0^2=P_0^1,\hspace{0.3cm} 
P_1^2=\frac{1}{2}P_0^1+\frac{1}{2}P_1^1,\hspace{0.3cm} 
P_2^2=\frac{3}{4}P_1^1+\frac{1}{4}P_2^1,\\[5pt]
P_3^2=\frac{1}{4}P_1^1+\frac{3}{4}P_2^1,\hspace{0.3cm}
P_4^2=\frac{1}{2}P_2^1+\frac{1}{2}P_3^1,\hspace{0.3cm} 
P_5^2=P_3^1
\end{array} 
\]
From the second level on, we apply the following general rules:
\begin{equation}
\label{eq:LR2}
\begin{array}{l}
\hspace{-6pt} \left. \begin{array}{l}
P_{2j}^{n+1}=\frac{3}{4}P_j^n+\frac{1}{4}P_{j+1}^n\\[5pt]
P_{2j+1}^{n+1}=\frac{1}{4}P_j^n+\frac{3}{4}P_{j+1}^n\\[5pt]
\end{array} \hspace{0.5cm} \right\}  j=1...2^n-1\\[12pt]
P_0^{n+1}=P_0^n\\[5pt]
P_1^{n+1}=\frac{1}{2}P_0^n+\frac{1}{2}P_1^n \\[5pt]
P_{2^{n+1}} ^{n+1}=\frac{1}{2}P_{2^n}^n+\frac{1}{2}P_{2^n+1}^n\\[5pt]
P_{2^{n+1}+1} ^{n+1}=P_{2^n+1}^n
\end{array}
\end{equation} 
where the first two rows are the standard stencils of the Lane-Riesenfeld subdivision of order 1 (which coincides with the Chaikin subdivision in the uniform case).
Note that all stencils are affine averages of at most two points. 

\paragraph{Cubic curves ($k=3$).}
We have $\Pi_3=(P_0,P_1,P_2,P_3)$ and initial node vector $(0 0 0 0 1 1 1 1)$.
The polygons at the first and second levels are obtained with the special rules
$$P_0^1=P_0,\hspace{0.1cm} 
P_1^1=\frac{1}{2}P_0+\frac{1}{2}P_1,\hspace{0.1cm} 
P_2^1=\frac{1}{2}P_1+\frac{1}{2}P_2,\hspace{0.1cm} 
P_3^1=\frac{1}{2}P_2+\frac{1}{2}P_3,\hspace{0.1cm} 
P_4^1=P_3$$ 
and 
\[
\begin{array}{c}
P_0^2=P_0^1,\hspace{0.3cm} 
P_1^2=\frac{1}{2}P_0^1+\frac{1}{2}P_1^1,\hspace{0.3cm} 
P_2^2=\frac{3}{4}P_1^1+\frac{1}{4}P_2^1,\\[5pt]
P_3^2=\frac{3}{16}P_1^1+\frac{5}{8}P_2^1+\frac{3}{16}P_3^1,\\[5pt]
P_4^2=\frac{1}{4}P_2^1+\frac{3}{4}P_3^1,\hspace{0.3cm} 
P_5^2=\frac{1}{2}P_3+\frac{1}{2}P_4,\hspace{0.3cm} 
P_6^2=P_4^1
\end{array} 
\]

From the second level on, we apply the following general rules:
\begin{equation}
\label{eq:LR}
\begin{array}{l}
\hspace{-6pt} \begin{array}{ll}
P_{2j}^{n+1}=\frac{1}{2}P_j^n+\frac{1}{2}P_{j+1}^{n} & j=2...2^{n}-2\\[5pt]
P_{2j+1}^{n+1}=\frac{1}{8}P_j^{n+1}+\frac{3}{4}P_{j+1}^{n+1}+\frac{1}{8}P_{j+2}^{n+1} &  j=2...2^{n}-3\\[5pt]
\end{array} \\[12pt]
P_0^{n+1}=P_0 \\[5pt]
P_1^{n+1}=\frac{1}{2}P_0^{n}+\frac{1}{2}P_1^{n} \\[5pt]
P_2^{n+1}=\frac{3}{4}P_1^{n}+\frac{1}{4}P_2^{n} \\[5pt]
P_3^{n+1}=\frac{3}{16}P_1^{n}+\frac{11}{16}P_2^{n}+\frac{2}{16}P_3^{n} 
\end{array}
\end{equation}
where, for the sake of brevity, we have omitted the end conditions to the right end side, since they are symmetric 
to the end conditions to the left end side. 

In order to apply such scheme in the manifold case, we recall that the RCM is not well defined for three or more points, unless they are all contained in a convex set.
We overcome this limitation by factorizing the weighted averages of three points as repeated averages between pairs of points, 
to be computed in terms of the operator $\mathcal{A}$.
Note that, while in the Euclidean case the result is independent of the factorization, in the manifold case a different factorization yields a different curve in general.

Inspired from previous literature, we define our scheme in the manifold setting based on the following factorization. 
The average defining $P_{2j+1}^{n+1}$ is expressed as one step of midpoint subdivision and  two steps of smoothing, as prescribed by the uniform LR scheme, which has been already investigated in the manifold setting \cite{Duchamp:2018eu}. 
This can be thus compactly written as
\[
\begin{array}{c}
Q_{2j}^{n+1} = \frac{1}{4}P_j^{n+1}+\frac{3}{4}P_{j+1}^{n+1},\hspace{0.3cm} Q_{2j+1}^{n+1} = \frac{3}{4}P_{j+1}^{n+1}+\frac{1}{4}P_{j+2}^{n+1}\\[5pt] 
P_{2j+1}^{n+1}=\frac{1}{2}Q_{2j}^{n+1}+\frac{1}{2}Q_{2j+1}^{n+1}.
\end{array}
\]
For $P_2^3$ and $P_3^{n+1}$ instead, we use the \textit{inductive means} proposed in \cite{Dyn:2017fr}, which sort the terms by their weights and average the points with the largest weights first.
We thus obtain:
$$ Q = \frac{3}{13}P_1^1+\frac{10}{13}P_2^1,\hspace{0.2cm} P_3^2=\frac{10}{16}Q+\frac{3}{16}P_3^1.$$
and
\[
\begin{array}{c}
R_{1}^{n+1} = \frac{3}{14}P_1^{n}+\frac{11}{14}P_{2}^{n},\hspace{0.3cm}
P_{3}^{n+1}=\frac{14}{16}R_{1}^{n+1}+\frac{2}{16}P_{3}^{n+1}.
\end{array}
\]
We extend this scheme to the manifold setting, by substituting 
each affine average with the corresponding application of the manifold average 
$\mathcal{A}$. We omit the details for the sake of brevity. One step of subdivision for $n=3$ is exemplified, in the manifold setting, 
in Figure~\ref{fig:RDC-OLR} (OLR).
Note that $n=3$ is the first level in which the stencils of the uniform LR subdivision apply.

The same approach applies in the case $k>3$ too, where the $2k-2$ special stencils at the boundaries can be computed through the generalized Oslo algorithm \cite{Lyche:93}, and the remaining (internal) points are obtained by applying the classical LR algorithm.
Factorization of operations as repeated averages of pairs of points applies for all values of $k$ by exploiting the nature of the LR scheme, which consists of one step of midpoint subdivision, followed by $k-1$ steps of smoothing by averaging (see, e.g., \cite{Goldman:2002}, Chapter 7).

\section{Proofs of convergence and smoothness of RDC and OLR}
\label{app:proofs}
We prove the propositions of Section \ref{sec:method} concerning the convergence and smoothness of the subdivision schemes RDC and OLR. 
Both proofs rely on previous results ``in the small'', plus the fact that both schemes have the \emph{contractivity property}, i.e., the lengths of segments in the control polygon shrink at each iteration for a constant factor. 


\begin{lemma}
\label{rad_min_ball}
Let $\ell$ be the total length of a geodesic polyline $\Pi = (p_0,...,p_k)$, i.e.
$$\ell=\sum \limits_{i=0}^{k-1}d(p_i,p_{i+1})$$
Then the radius of the minimal enclosing ball of  $\{p_0,..,p_k\}$ is not greater than $\frac{\ell}{2}$.
\end{lemma}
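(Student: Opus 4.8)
The plan is to avoid reasoning about the minimal enclosing ball directly and instead simply exhibit \emph{one} enclosing ball whose radius is $\ell/2$. Since the radius of the minimal enclosing ball is, by definition, the infimum of the radii of all balls containing $\{p_0,\dots,p_k\}$, producing a single such ball of radius $\ell/2$ immediately gives the claimed bound. This also sidesteps any need to argue that the minimal enclosing ball exists.

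First I would view the polyline $\Pi$ as a single continuous curve on $M$, namely the concatenation of the shortest geodesic segments $\gamma_{p_i,p_{i+1}}$, and parametrize it by arc length over $[0,\ell]$. Writing $s_j = \sum_{i=0}^{j-1} d(p_i,p_{i+1})$ for the arc-length position of vertex $p_j$, one has $0 = s_0 \le s_1 \le \cdots \le s_k = \ell$. The center of the enclosing ball will be the point $m$ of $\Pi$ at arc length $\ell/2$, i.e. the midpoint of the polyline. The key estimate is then that, for every vertex $p_j$, the sub-arc of $\Pi$ between $m$ and $p_j$ is a curve joining them of length $|s_j - \ell/2|$, and since $0 \le s_j \le \ell$ this length is at most $\ell/2$. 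Because the geodesic distance is the infimum of lengths of connecting curves, $d(m,p_j) \le |s_j - \ell/2| \le \ell/2$ for all $j$. Hence all vertices lie in the closed ball $B(m,\ell/2)$, and the minimal enclosing ball has radius at most $\ell/2$.

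I do not expect any serious obstacle here; the argument is elementary. The only point that deserves a word of care is that the midpoint $m$ is in general an interior point of one of the geodesic segments rather than one of the vertices $p_j$, so one must use the polyline (and not the vertex set) to define the center and must invoke the bound ``geodesic distance $\le$ length of any connecting curve''. The continuity of $\Pi$ that guarantees the existence of the arc-length midpoint follows from its being a concatenation of shortest geodesics, and completeness of $M$ plays no role since only an upper bound on the radius is needed.
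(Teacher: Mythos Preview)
Your argument is correct and is essentially identical to the paper's: both take the arc-length midpoint of the polyline as the center of an enclosing ball and bound $d(m,p_j)$ by the length of the sub-arc of $\Pi$ joining $m$ to $p_j$, which is at most $\ell/2$. The only cosmetic difference is that the paper splits into the two cases $i\le h$ and $i>h$ (where $h$ indexes the segment containing the midpoint) and invokes the triangle inequality explicitly, whereas you phrase the same estimate via arc-length parametrization.
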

\begin{proof}
Let $\gamma_i(t)$ denote the geodesic from $p_i$ to $p_{i+1}$, $i=0,...,k-1$, and let us denote $L_i$ the length of  $\gamma_i(t)$. 
Let $o$ be the midpoint of $\Pi$, i.e., the point that has equal distance to $p_0$ and $p_k$ when distance is measured along $\Pi$.
And let $h \in \{0,...,k-1\}$ be such  that $o \in \gamma_h(t)$. 
Then we have 
\begin{equation}
\label{def_center}
\frac{\ell}{2}=\sum \limits_{j=0}^{h-1} L_j+d(\gamma_{h}(0),o)=\sum \limits_{j=h+1}^{k-1} L_j+d(\gamma_{h}(1),o).
\end{equation}
Then by triangular inequality we have: 
\begin{itemize}[noitemsep,topsep=0pt]
\item if $i\leq h$, then $d(p_i,o)\leq \sum \limits_{j=i}^{h-1} L_j+d(\gamma_{h}(0),o)\leq\frac{\ell}{2}$,
\item if $i> h$, then  $d(p_i,o)\leq \sum \limits_{j=h+1}^{i} L_j+d(\gamma_{h}(1),o)\leq\frac{\ell}{2}$,
\end{itemize} 
Therefore, $p_i \in \overline{B(o, \ell/2)}$ for every  $i\in 0,\ldots,k$. 

\end{proof}
From now on, 
$\Pi^n=(p_0,\ldots,p_m)$ will denote the control polygon $\Pi^n$ obtained after $n$ iterations of a given subdivision method. 
Moreover, we define $L_j^n=d(p_j,p_{j+1})$ for every $j \in 0,\ldots,m-1$ and
$$L^n:=\max \limits_{j \in 0,\ldots,m-1} L^n_j.$$

\begin{definition}
A given subdivision method satisfies the \emph{contractivity property} if there exists $\alpha \in (0,1)$ such that for every $n\in \mathbb{N}$ we have $L^{n}\leq \alpha^{n} L^0.$
\end{definition}

 \begin{lemma}
 \label{shrinking_RDC}
The geodesic RDC scheme satisfies the contractivity property with $\alpha=\frac{1}{2}$.
\end{lemma}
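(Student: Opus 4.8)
The plan is to reduce the statement to the one-step estimate $L^{n+1}\le \tfrac12 L^n$, from which $L^n\le(1/2)^n L^0$ follows at once by induction. Because one RDC iteration bisects each polygon of level $n$ independently and $L^{n+1}$ is the maximum edge length over all the resulting pieces, it suffices to analyze a single De Casteljau bisection at $t=1/2$ of one geodesic polygon $\Pi=(\mathbf{b}_0^0,\ldots,\mathbf{b}_k^0)$ and to show that every edge of the two output polygons $\Pi_L$ and $\Pi_R$ has length at most half the longest edge of $\Pi$.

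First I would set up the De Casteljau triangle, writing $\mathbf{b}_i^r=\mathcal{A}(\mathbf{b}_i^{r-1},\mathbf{b}_{i+1}^{r-1};\tfrac12)$ for the geodesic midpoint, and denoting by $\ell_i^r:=d(\mathbf{b}_i^r,\mathbf{b}_{i+1}^r)$ the intermediate edge lengths at level $r$, so that $\max_i \ell_i^0=L^n$. The core step is to prove that these intermediate edge lengths do not grow with $r$, i.e.\ $\max_i \ell_i^{r}\le \max_i \ell_i^{r-1}$. Since $\mathbf{b}_i^r$ is the midpoint of the shortest geodesic joining $\mathbf{b}_i^{r-1}$ to $\mathbf{b}_{i+1}^{r-1}$, and a sub-arc of a shortest path is itself shortest, the triangle inequality gives
$$\ell_i^r=d(\mathbf{b}_i^r,\mathbf{b}_{i+1}^r)\le d(\mathbf{b}_i^r,\mathbf{b}_{i+1}^{r-1})+d(\mathbf{b}_{i+1}^{r-1},\mathbf{b}_{i+1}^r)=\frac{1}{2}\ell_i^{r-1}+\frac{1}{2}\ell_{i+1}^{r-1}\le \max_j \ell_j^{r-1}.$$
Iterating over $r$ yields $\max_i \ell_i^r\le \max_i \ell_i^0=L^n$ for every level $r$.

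Next I would identify the edges of the two output polygons as half-edges of the triangle. The left polygon $\Pi_L=(\mathbf{b}_0^0,\mathbf{b}_0^1,\ldots,\mathbf{b}_0^k)$ and the right polygon $\Pi_R=(\mathbf{b}_0^k,\mathbf{b}_1^{k-1},\ldots,\mathbf{b}_k^0)$ are precisely the two legs of the De Casteljau triangle. Each edge of $\Pi_L$ connects $\mathbf{b}_0^r$ to $\mathbf{b}_0^{r+1}=\mathcal{A}(\mathbf{b}_0^r,\mathbf{b}_1^r;\tfrac12)$, and by the sub-arc property its length is exactly $\tfrac12\ell_0^r$; symmetrically, each edge of $\Pi_R$ connects $\mathbf{b}_{k-r}^r$ to the midpoint $\mathbf{b}_{k-r-1}^{r+1}$ and has length exactly $\tfrac12\ell_{k-r-1}^r$. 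Combining with the bound $\max_i\ell_i^r\le L^n$, every edge of $\Pi_L$ and $\Pi_R$ has length at most $\tfrac12 L^n$, so $L^{n+1}\le\tfrac12 L^n$ and the induction closes.

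The argument uses only the triangle inequality and the midpoint property of $\mathcal{A}$, and it is uniform in $k$, so I do not expect a genuine obstacle. The two points requiring care are purely bookkeeping: verifying that the midpoint-to-endpoint distance is \emph{exactly} half the edge (which holds because a restriction of a shortest geodesic to a subinterval is again shortest, even if $\mathcal{A}$ resolves an arbitrary choice at a cut locus), and indexing the two legs of the triangle correctly so that every output edge is recognized as a half-edge of some intermediate level.
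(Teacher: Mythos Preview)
Your proposal is correct and follows essentially the same approach as the paper's proof: both reduce to a single bisection step, use the triangle inequality to show that the intermediate De Casteljau edge lengths $\ell_i^r$ never exceed $L^n$, and then observe that every edge of $\Pi_L$ and $\Pi_R$ is exactly half of one of those intermediate edges. Your write-up is in fact slightly more careful than the paper's, since you make explicit the sub-arc property justifying that the midpoint-to-endpoint distance equals $\tfrac12\ell_i^r$ and you check the $\Pi_R$ indexing, whereas the paper treats $\Pi_R$ only by symmetry.
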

\begin{proof}
Let $\Pi^0=\{p_0,...,p_k\}$ be the initial control polygon. 
Since RDC acts on every $k+1$ consecutive control points independently at each level of recursion, w.l.o.g., we consider just the first iteration, and show that $L^1\leq \frac{1}{2} L^0$. 
The polygon $\Pi^1$ subdividing $\Pi^0$ is the concatenation of two sub-polygons $\Pi_L$ and $\Pi_R$, joined at point $\mathbf{b}_0^k(\frac{1}{2})$ according to Eq.~\ref{eq:decasteljau}. 
We show the contractivity just for $\Pi_L$, the case of $\Pi_R$ being symmetric. 
Since all $\mathbf b_i^r(t)$ in Eq.~\ref{eq:decasteljau} are evaluated for $t=\frac{1}{2}$ in the RDC scheme, for the sake of brevity we omit the argument and denote them simply with $\mathbf b_i^r$.

We first show that all segments of the intermediate polygons involved in the construction of  Eq.~\ref{eq:decasteljau} are not longer than $L^0$.
For all $r\in 1\ldots k-1$ and all $i\in 0\ldots k-r-1$, by construction, the segment $\mathbf{b}_i^r \mathbf{b}_{i+1}^r$ is a geodesic line joining the midpoints of segments 
$\mathbf{b}_i^{r-1} \mathbf{b}_{i+1}^{r-1}$ and $\mathbf{b}_{i+1}^{r-1} \mathbf{b}_{i+2}^{r-1}$.
It is straightforward to see that 
$$d(\mathbf{b}_i^r,\mathbf{b}_{i+1}^r) \leq \frac{1}{2} d(\mathbf{b}_i^{r-1},\mathbf{b}_{i+1}^{r-1})
+ \frac{1}{2} d(\mathbf{b}_{i+1}^{r-1},\mathbf{b}_{i+2}^{r-1}) \leq L^0,$$
where the first inequality follows from the triangular inequality, while the second follows by induction on $r$.
From Eq.~\ref{eq:decasteljau} we have 
$$\Pi_L=(p_0=\mathbf{b}_0^0,\mathbf{b}_0^1,\ldots,\mathbf{b}_0^k).$$
The length $d(\mathbf{b}_0^r,\mathbf{b}_0^{r+1})$ of each segment of $\Pi_L$ is, by construction, half the length of segment 
$\mathbf{b}_0^r,\mathbf{b}_1^r$, hence not greater than $\frac{L^0}{2}$.
\end{proof}
\begin{lemma}
 \label{shrinking_LR}
The geodesic LR scheme satisfies the contractivity property with $\alpha=\frac{1}{2}$.
\end{lemma}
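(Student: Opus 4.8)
The plan is to mirror the proof of Lemma~\ref{shrinking_RDC}, exploiting the factorization of one open-uniform Lane--Riesenfeld round described in Appendix~\ref{app:subLR}: a round carries $\Pi^n$ to $\Pi^{n+1}$ by first performing a \emph{midpoint subdivision} and then $k-1$ \emph{smoothing} passes, where in each pass every new vertex is the geodesic midpoint of two consecutive current vertices, i.e.\ an application of $\mathcal{A}(\cdot,\cdot;\tfrac12)$. First I would bound the effect of each of these two elementary operations on the maximal edge length $L^n$, and then compose them.

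For the midpoint subdivision, inserting $\mathcal{A}(p_j,p_{j+1};\tfrac12)$ in each edge splits that edge into two halves of length $\tfrac12 d(p_j,p_{j+1})$, so the intermediate polygon has maximal edge length at most $\tfrac12 L^n$. For a smoothing pass acting on $(q_0,\dots,q_m)$, each new vertex $q'_i=\mathcal{A}(q_i,q_{i+1};\tfrac12)$ is the midpoint of the current edge $q_iq_{i+1}$, so by the triangle inequality
\[ d(q'_i,q'_{i+1})\le d(q'_i,q_{i+1})+d(q_{i+1},q'_{i+1})=\tfrac12 d(q_i,q_{i+1})+\tfrac12 d(q_{i+1},q_{i+2})\le \max_j d(q_j,q_{j+1}), \]
so a smoothing pass never increases the maximal edge length. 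Composing the midpoint subdivision (which halves) with the $k-1$ smoothing passes (which are non-expansive) gives $L^{n+1}\le\tfrac12 L^n$, and induction then yields $L^n\le(\tfrac12)^n L^0$, i.e.\ contractivity with $\alpha=\tfrac12$. The weight $\tfrac12$ is essential: for a pairwise average $\mathcal{A}(\cdot,\cdot;w)$ with $w\neq\tfrac12$ the same estimate only gives the weaker bound $(1-w_i+w_{i+1})\,L^n$, which need not be non-expansive.

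The main obstacle is the \emph{open-uniform} boundary. Near the endpoints the uniform stencils are replaced by the $2k-2$ end-condition stencils of Appendix~\ref{app:subLR}, whose weights differ from $\tfrac12$, and for the first few levels $n<\lceil\log_2(k+1)\rceil$ the round does not yet reduce to the clean midpoint-plus-smoothing form at all. My plan is to re-express each end-condition stencil, following the factorization of Appendix~\ref{app:subLR}, as a composition of geodesic midpoint operations with the interpolated endpoint \emph{held fixed}; the held-endpoint midpoint step still halves the boundary edges (the first edge has length $\tfrac12 d(P_0,P_1)\le\tfrac12 L^n$), and the subsequent smoothings remain non-expansive, so the halving argument carries over at the boundary in the stabilized regime. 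For the finitely many pre-stabilization levels I would argue only that every stencil is a composition of convex geodesic averages and is therefore at least non-expansive, absorbing this bounded transient into the constant; once $n\ge\lceil\log_2(k+1)\rceil$ the midpoint halving governs every edge and the rate $\tfrac12$ is recovered. Checking that this boundary bookkeeping never spoils the factor $\tfrac12$ in the stabilized regime is the delicate point, and I expect it to be where most of the care is needed.
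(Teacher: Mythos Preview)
Your core argument---midpoint subdivision halves the maximal edge length, and each smoothing pass is non-expansive by the triangle inequality---is exactly what the paper does. The paper's proof carries out precisely this computation (writing $q_j^0$ for the midpoint-subdivided points and $q_j^r$ for the $r$-th smoothing pass), with the same use of $d(\mathcal{A}(a,b;\tfrac12),b)=\tfrac12 d(a,b)$ and the same triangle-inequality estimate.

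One remark on scope: the paper's proof, like the first part of yours, treats only the \emph{uniform} LR round (midpoint subdivision followed by $k{-}1$ midpoint smoothings) and never touches the open-uniform boundary stencils of Appendix~\ref{app:subLR}. So your caution about the end conditions is well placed, but you should not expect to recover the factor $\tfrac12$ there by the factorization you sketch: already for $k=2$, the edge $P_1^{n+1}P_2^{n+1}$ with $P_1^{n+1}=\mathcal{A}(P_0^n,P_1^n;\tfrac12)$ and $P_2^{n+1}=\mathcal{A}(P_1^n,P_2^n;\tfrac14)$ only satisfies the triangle-inequality bound $\tfrac12 L^n+\tfrac14 L^n=\tfrac34 L^n$, and this is attained in the collinear Euclidean case. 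The paper simply sidesteps this by stating and proving the lemma for the uniform scheme; the boundary contraction rate is not established.
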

\begin{proof}
By definition, for a B-spline curve of degree $k$, one level of LR subdivision consists of one step of midpoint subdivision, followed by $k-1$ smoothing steps.
We show the contractivity factor by induction on the number of smoothing steps (averages between consecutive points). 
To simplify the notation, we omit the dependence of the control points from the iteration, and we omit the dependence of the average operator $\mathcal A$ from weight $w$, since from now on we fix $w=\frac{1}{2}$. 
Let us define
$$q_{2i}^0=p_i^n, \hspace{0.3cm} q_{2i+1}^0=\mathcal{A}(p_i^n,p_{i+1}^n),$$
where $p_i^n$ denotes the $i$-th control points of $\Pi_{LR}^n$ and $q_j^0$ the points obtained with the midpoint subdivision step, and
$$q_{2i}^r=\mathcal{A}(q_{2i}^r,q_{2i+1}^r), \hspace{0.3cm}q_{2i+1}^r=\mathcal{A}(q_{2i+1}^r,q_{2i+2}^r),$$
where the $q_j^r$ are obtained with the $r$-th smoothing step.
If $r=1$, by definition we have
\begin{align*}
d(q_{2i}^1,q_{2i+1}^1)&=d(\mathcal{A}(q_{2i}^0,q_{2i+1}^0),\mathcal{A}(q_{2i+1}^0,q_{2i+2}^0))\\
&\leq d(\mathcal{A}(q_{2i}^0,q_{2i+1}^0),q_{2i+1}^0)+d(q_{2i+1}^0,\mathcal{A}(q_{2i+1}^0,q_{2i+2}^0))\\
&=\frac{L_i^n}{4}+\frac{L_i^n}{4}\\
&\leq \frac{L^n}{2},
\end{align*}
and by symmetry we have that $d(q_{2i+1}^1,q_{2i+2}^1)\leq \frac{L^n}{2}$.
Let us show now that if the statement holds for $r$, then it holds for $r+1$.
By definition we have that 
\begin{align*}
d(q_{2i}^{r+1},q_{2i+1}^{r+1})&=d(\mathcal{A}(q_{2i}^r,q_{2i+1}^r),\mathcal{A}(q_{2i+1}^r,q_{2i+2}^r))\\
&\leq\frac{1}{2}(d(q_{2i}^r,q_{2i+1}^r)+d(q_{2i+1}^r,q_{2i+2}^r))\\
&\leq\frac{1}{2}(\frac{L^n}{2}+\frac{L^n}{2})\\
&\leq \frac{L^n}{2}
\end{align*}
Similarly it can be shown that $d(q_{2i+1}^{r+1},q_{2i+2}^{r+1})\leq L^n/2$. 
Since the above considerations do not depend on the iteration $n$, we have that LR satisfies the contractivity property with $\alpha=\frac{1}{2}$.
\end{proof}

\begin{lemma}
\label{shrinking}
If a subdivision method satisfies the contractivity property, then for every $\delta>0$ there exists $n_\delta \in \mathbb{N}$ such that after $n_\delta$ iterations of subdivision every consecutive $k+1$-uple of control points is contained in $\overline{B(o,\delta)}$ for some $o \in \mathcal{M}$, where $k+1$ is the number of initial control points .
\end{lemma}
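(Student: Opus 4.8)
The plan is to combine the contractivity property directly with Lemma~\ref{rad_min_ball}. Fix $\delta > 0$, and assume $L^0 > 0$, since otherwise all control points coincide and the claim is trivial. First I would observe that any consecutive $(k+1)$-uple of control points $(p_j,\ldots,p_{j+k})$ in the refined polygon $\Pi^n$ is itself a geodesic polyline consisting of exactly $k$ segments, for every $n$ and every admissible $j$. Its total length is therefore at most $\sum_{i=0}^{k-1} L_{j+i}^n \leq k\,L^n$, and by the contractivity property $L^n \leq \alpha^n L^0$, so this total length is bounded by $k\,\alpha^n L^0$.

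Next I would apply Lemma~\ref{rad_min_ball} to this $(k+1)$-uple: the radius of its minimal enclosing ball is at most half its total length, hence at most $\frac{1}{2}k\,\alpha^n L^0$. The key point is that this bound is uniform in $j$: it depends only on the fixed degree $k$, the initial quantity $L^0$, and the subdivision level $n$. Consequently a single choice of $n$ simultaneously controls every consecutive $(k+1)$-uple of $\Pi^n$.

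Since $\alpha\in(0,1)$, the factor $\alpha^n$ tends to $0$ as $n\to\infty$, so I would choose $n_\delta$ large enough that $\frac{1}{2}k\,\alpha^{n_\delta} L^0 \leq \delta$; concretely, any integer $n_\delta \geq \log_{\alpha}(2\delta/(k L^0))$ suffices. Then for every consecutive $(k+1)$-uple of $\Pi^{n_\delta}$ we let $o$ be the center of its minimal enclosing ball, and the $(k+1)$-uple lies inside $\overline{B(o,\delta)}$, as required.

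I do not expect a genuine obstacle, as the statement essentially repackages the two preceding lemmas. The only point that requires a little care is noting that the window size of a consecutive $(k+1)$-uple stays fixed at $k$ segments as the polygon is refined — the number of control points grows with $n$, but the number of segments spanned by $k+1$ consecutive points does not — which is exactly what keeps the length bound, and hence the radius bound, of the form $(\text{const})\cdot\alpha^n$ and therefore driven to $0$.
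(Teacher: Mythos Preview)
Your proposal is correct and follows essentially the same approach as the paper: bound the total length of any window of $k+1$ consecutive control points by a multiple of $L^n$, invoke contractivity to make this small, and then apply Lemma~\ref{rad_min_ball} to get the enclosing ball. Your version is in fact slightly tidier, since you correctly count $k$ segments in a $(k{+}1)$-tuple and avoid the paper's auxiliary constant $\widetilde{k}$.
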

\begin{proof}
Let $\Pi^n$ be the control polygon obtained after $n$ iterations and suppose that $\{q_0,....,q_m\}$ are the control points of $\Pi^n$. 
Then we define
$$L^n:=\max \limits_{i \in \{0,...,m-1\}}d(q_i,q_{i+1}).$$
Let us put $L:=L^0$ and let $n_\delta$ be defined as
$$n_\delta:= \lfloor \log_{\alpha}\frac{\delta}{\widetilde{k} L}\rfloor +1,$$
where $\widetilde{k}:=\lfloor \frac{k+1}{2}\rfloor +1$. By definition,  we have $(k+1)/\tilde{k}<2$. 
Hence
$$L^{n_\delta}\leq \alpha^{n_\delta} L<\frac{\delta}{\widetilde{k}}$$
It follows that every sub-polyline $\Gamma_i^{n_\delta}(p_i,...,p_{i+k})$ of $\Pi^{n_\delta}$, which is defined by the control points $(p_i,...,p_{i+k})$,  has a total length $\ell$ (in the sense of Lemma \ref{rad_min_ball}) satisfying
$$\ell\leq (k+1)L^{n_\delta}<2\delta.$$
 By Lemma \ref{rad_min_ball} we conclude.

\end{proof}

\begin{proof}[Proof of Proposition \ref{sub-smoothness-RDC}]
Noakes proved that, for $k=2,3$, if the initial control points are contained in a strongly convex ball, then RDC converges to a $C^1$ curve interpolating the control polygon and being tangent to it at its endpoints \cite{Noakes:1998bo,Noakes:1999er}. 
Let us define
$$\delta=\inf\limits_{p \in \mathcal{M}} \{r>0 : B(p,r) \text{ is strongly convex}\},$$
which is called the \textit{convexity radius} of $\mathcal{M}$. 
It is well known that if $\mathcal{M}$ has bounded sectional curvature, then $\delta>0$ \cite{Sakai1997}. 

By Lemma \ref{shrinking_RDC} and Lemma \ref{shrinking}, we know that we can choose $n_\delta$ such that $\Pi_{DC}^{n_\delta}$ consists of a sequence of control polygons of order $k$, each contained in a ball of radius not greater than $\delta$. 
This means that for $n>n_\delta$ every control polygon in $\Pi_{DC}^n$ converges to a $C^1$ curve as above. 
Furthermore, by construction, every two segments incident at a junction point in  $\Pi_{DC}^n$ are branches of the same geodesic line.
Condition (ii) of Theorem 3 in \cite{Popiel:2007bt} warrants that $C^1$ continuity is satisfied at all junction points, too. 
\paragraph{Sketch of proof for the case $k>3$}
As shown in  \cite{Noakes:1998bo}, the contractivity property can be used to show that the RDC scheme satisfies some proximity conditions that lead to the $C^1$ continuity of the limit curve for $k=2,3$. The arguments on which the proofs rely are extendable to the case of an arbitrary $k$ in the following sense: the number of control points does not affect the convergence of the method or the study of the smoothness of the limit curve. Indeed, the main condition that must hold is the contractivity property, which is satisfied for every $k$ as shown in Lemma \ref{shrinking_RDC}. Hence, the computations that occur in Lemmas 4.4-4.6 of \cite{Noakes:1998bo}  can be made in the case of a generic $k$ by applying the results on Taylor's expansion of Section 4 of that paper, and to express every control point of $\Pi_{DC}^{n+1}$ in terms of the control points of  $\Pi_{DC}^{n}$.
\end{proof}

\begin{proof}[Proof of Proposition \ref{sub-smoothness-OLR}]
The smoothness of the \textit{uniform geodesic} LR scheme has been recently investigated \cite{Duchamp:2018eu}. 
It has been proved that if all the $k+1$ control points are contained in a totally normal neighborhood, then the limiting curve is $C^{k-1}$. As before, by Lemma \ref{shrinking_LR} and Lemma \ref{shrinking} we know that we can choose $n_\delta$ such that every consecutive $k+1$-uple in $\Pi_{LR}^{n_\delta}$ are contained in a totally normal neighborhood. This will give us a $C^{k-1}$ curve at every point obtained by applying the uniform geodesic LR scheme,  i.e., the ones "in the middle". 
More formally, let us fix $t \in (0,1)$, and suppose that, after $n_\delta$ iterations of the OLR algorithm $t\in [2^{-n_\delta}j,2^{-n_\delta}(j+1)]$ for some $j \in \mathbb{N}$. By the previous considerations and by definition of the OLR scheme, 
we know that if the $k+2$ knots $2^{-n_\delta}j,...,2^{-n_\delta}(j+k+1)$ are uniformly spaced, then the control polygon defined by $P_{j-k},P_{j-k+1},...,P_j$ undergoes to the uniform LR subdivision rules. 
Hence, we now that the B-spline segment $B_j(t)$ will be $C^{k-1}$ for every $t \in [2^{-n_\delta}j,2^{-n_\delta}(j+1)]$. 
Consider now the case where it is not possible to apply the uniform LR scheme to $P_{j-k},P_{j-k+1},...,P_j$. 
In this case, we observe that, by definition, $\Pi^{n_\delta +1}$ is obtained by subdividing every subinterval of the knot vector at iteration $n_\delta$ of the form  
$[2^{-n_\delta}j,2^{-n_\delta}(j+1)]$ by adding the knot $(2j+1)2^{-(n+1)}$. 
Since $t \neq 0,1$, there exists $n_t$ such that $t \in [2^{-n_t}j_t,2^{-n_t}(j_t+1)]$ for some $j_t\in \mathbb{N}$ and the related control polygon can be subdivided with the LR uniform stencils. 
Hence, for every $t \in (0,1)$, there is $\bar{n}:=\max(n_\delta,n_t)$ such that $ t \in [2^{-\bar{n}t}\bar{j},2^{-\bar{n}}(\bar{j}+1)]$ for some $\bar{j} \in \mathbb{N}$ and such that $B_{\bar{j}}(t)$ will be a $C^{k-1}$ B-spline segment.

Concerning the endpoints of $\Pi_{LR}^n$, i.e., for $t=0$ and $t=1$, the end conditions in Equations \ref{eq:LR2} and \ref{eq:LR} guarantee that the limit curve interpolates the endpoints of the initial control polygon $\Pi_k$, and it will be tangent  to the first and last segments of $\Pi_k$ at its endpoints. 
In fact, the boundary stencils force the second control point to lie on the geodesic $\gamma_0(t)$ connecting $p_0$ to $p_1$ at all levels of subdivision. 
More precisely, we have that $p_1^n=\gamma_0(2^{-n})$, which implies that
$$\frac{p_1^n-p_0^n}{2^n}=\frac{\gamma_0(2^{-n})-\gamma_0(0)}{2^n}\approx \gamma_0'(0) \hspace{0.2cm}\text{for a big enough}\hspace{0.1cm} n,$$
and similarly for $p_{k-1}^n$.
Hence, the curve obtained with this method is $C^{k-1}$, and provides enough control to weld it with $C^1$ continuity to other curves.
\end{proof}

\ignorethis{ DIMOSTRAZIONI PER DE ELIMINATA
The degree elevation method does not satisfy the contractivity property. This is due to the fact that the weights used in the averages during the $N$-th iteration depends on the number of control points of $\Pi_{de}^N$. Although, we can formulate the following 
\begin{lemma}
 \label{shrinking_DE}
With the notation introduced above, we have that the degree elevation method produces a sequence of control polygons satisfying
$$L^{n+1}\leq\frac{k}{k+1}L^{n},$$
\end{lemma}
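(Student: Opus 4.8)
The plan is to reduce the whole estimate to a single triangle inequality, exploiting that the manifold degree--elevation stencil (Eq.~\ref{eq:degel-manifold}) places every new control point on a geodesic edge of the current polygon. Writing $\Pi^n=(P_0,\ldots,P_k)$ for the control polygon of degree $k$ at the current level, one elevation step produces the $k+2$ points
\[
P_i^{n+1}=\mathcal{A}\!\left(P_{i-1},P_i;\,1-\tfrac{i}{k+1}\right)=\gamma_{P_{i-1},P_i}\!\left(1-\tfrac{i}{k+1}\right),\qquad i=0,\ldots,k+1,
\]
where $\mathcal{A}$ is the binary average of Eq.~\ref{eq:average}. In particular $P_0^{n+1}=P_0$ and $P_{k+1}^{n+1}=P_k$, so only the interior points move and each $P_i^{n+1}$ sits on the shortest geodesic $[P_{i-1},P_i]$.

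The key observation is that two consecutive new points $P_i^{n+1}$ and $P_{i+1}^{n+1}$ lie on the two geodesic edges $[P_{i-1},P_i]$ and $[P_i,P_{i+1}]$ that meet at the \emph{original} vertex $P_i$. Since the shortest geodesics in Eq.~\ref{eq:average} are parametrized with constant speed, their distances to this shared vertex are exact arc--length fractions,
\[
d\!\left(P_i^{n+1},P_i\right)=\tfrac{i}{k+1}\,d(P_{i-1},P_i),\qquad d\!\left(P_i,P_{i+1}^{n+1}\right)=\tfrac{k-i}{k+1}\,d(P_i,P_{i+1}).
\]
Routing the triangle inequality through $P_i$ and bounding each original edge by $L^n$ then gives, for every $i=0,\ldots,k$,
\[
d\!\left(P_i^{n+1},P_{i+1}^{n+1}\right)\le d\!\left(P_i^{n+1},P_i\right)+d\!\left(P_i,P_{i+1}^{n+1}\right)\le \tfrac{i}{k+1}L^n+\tfrac{k-i}{k+1}L^n=\tfrac{k}{k+1}L^n,
\]
where at the two boundary edges ($i=0$ and $i=k$) one of the two terms carries the factor $0$ and drops out. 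Taking the maximum over the edges of $\Pi^{n+1}$ yields $L^{n+1}\le\tfrac{k}{k+1}L^n$.

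I do not expect a genuine obstacle here; the only real point is that on a manifold I cannot write the new edge as an affine combination of the old edge vectors, so the clean Euclidean identity $\Delta P_i^{n+1}=\tfrac{k-i}{k+1}\Delta P_i+\tfrac{i}{k+1}\Delta P_{i-1}$ is unavailable. The remedy is exactly the choice of the shared vertex $P_i$ as the pivot: both elevated points lie near $P_i$ at known arc--length fractions, so a metric triangle inequality suffices and no convexity in tangent vectors is needed. The remaining care is merely the bookkeeping of the weights $\tfrac{i}{k+1}$ and $\tfrac{k-i}{k+1}$. Finally, note that $k$ here is the degree of the \emph{current} polygon, which increases by one at every elevation; hence the factor $\tfrac{k}{k+1}\to 1$ and no single constant $\alpha<1$ bounds all steps, which is precisely why degree elevation fails the contractivity property enjoyed by the RDC and OLR schemes.
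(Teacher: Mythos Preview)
Your argument is correct and matches the paper's own proof essentially line for line: both pivot the triangle inequality at the shared original vertex $P_i$, read off the arc--length fractions $\tfrac{i}{k+1}$ and $\tfrac{k-i}{k+1}$ from the degree--elevation stencil, and sum to $\tfrac{k}{k+1}L^n$. Your closing remark that $k$ increases at each step, so no fixed $\alpha<1$ exists, is exactly the point the paper makes when distinguishing degree elevation from the RDC and OLR schemes.
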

where $k+1$ is the number of control points of $\Pi_{DE}^n$.
\begin{proof}
If $\{q_0,...,q_n\}$ are the control points of $\Pi_{DE}^n$, we consider the quantity $d(q_i,q_{i+1})$, for $i\in \{1,..,n\}$ (note that the endpoints trivially satisfy the inequality), then
\begin{align*}
d(q_i,q_{i+1})&\leq d(q_i,p_i)+d(p_i,q_{i+1})\\
&=\frac{i}{n+1}L_i^n+(1-\frac{i+1}{n+1})L_{i+1}^n\\
&=\frac{i}{n+1}L_i^n+\frac{n-i}{n+1}L_{i+1}^n\\
&\leq \frac{i}{n+1}L^n+\frac{n-i}{n+1}L^n\\
&=\frac{n}{n+1}L^n.
\end{align*}
\end{proof}
Similarly to what done for RDC and the LR scheme, we can formulate a results that ensure the convergence of the degree elevation method. 
\begin{lemma}
\label{shrinking_degree}
Given $k+1$ control points $\{p_0,...,p_k\} \in \mathcal{M}$,  then for every $\delta>0$ there exists $N_\delta \in \mathbb{N}$ such that after $N_\delta$ iterations of of the degree elevation method every consecutive pair of points are contained in $\overline{B(o,\delta)}$ for some $o \in \mathcal{M}$.
\end{lemma}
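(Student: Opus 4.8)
The plan is to obtain the ball-containment conclusion directly from the two tools already established: the contractivity property, which controls the \emph{per-edge} length $L^n$ uniformly across all levels, and Lemma~\ref{rad_min_ball}, which converts a bound on the total length of a geodesic polyline into a bound on the radius of its minimal enclosing ball. The key observation is that any window of $k+1$ consecutive control points in $\Pi^n$ forms a sub-polyline with exactly $k$ geodesic edges, so its total length is controlled by $k\,L^n$, and contractivity drives $L^n$ to zero at a geometric rate independent of the (growing) number of control points.

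First I would fix an arbitrary $\delta>0$ and write $L:=L^0$. By the contractivity property there is $\alpha\in(0,1)$ with $L^n\leq\alpha^n L$ for every $n$, and since $\alpha^n\to 0$ I can make $L^{n_\delta}$ as small as I like. A concrete admissible choice is
\[
n_\delta:=\left\lfloor \log_\alpha\frac{2\delta}{kL}\right\rfloor+1,
\]
which (taking the max with $0$ in the degenerate case where the initial polygon already fits) forces $L^{n_\delta}\leq\alpha^{n_\delta}L<\frac{2\delta}{k}$.

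Next I would fix any window of $k+1$ consecutive control points $(p_i,\ldots,p_{i+k})$ of $\Pi^{n_\delta}$. Since these $k+1$ points span exactly $k$ geodesic edges, each of length at most $L^{n_\delta}$, the total length $\ell$ of this sub-polyline satisfies $\ell\leq k\,L^{n_\delta}<2\delta$. Applying Lemma~\ref{rad_min_ball} to the sub-polyline, its minimal enclosing ball has radius at most $\ell/2<\delta$; letting $o$ denote the center of that ball yields $\{p_i,\ldots,p_{i+k}\}\subseteq\overline{B(o,\delta)}$. As the window was arbitrary, every consecutive $(k+1)$-tuple in $\Pi^{n_\delta}$ is contained in such a ball, which is the assertion.

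There is no deep obstacle here: the argument is a uniform-smallness estimate followed by a direct appeal to Lemma~\ref{rad_min_ball}. The only point requiring minor care is that the number of control points increases with the level, so $L^n$ is a maximum over a set of growing size; but the contractivity bound $\alpha^nL^0$ is stated uniformly and does not reference the point count, so this causes no difficulty. A secondary bookkeeping subtlety is simply the edge count of a window --- one must use that $k+1$ points span $k$ edges --- and if one prefers to overcount the edges as $k+1$ the constant can be absorbed by setting the threshold in terms of $\widetilde{k}:=\lfloor (k+1)/2\rfloor+1$, which keeps the inequality chain clean since $(k+1)/\widetilde{k}<2$.
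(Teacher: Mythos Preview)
Your argument has a genuine gap: it invokes the contractivity property with a fixed $\alpha\in(0,1)$, but the degree elevation method does \emph{not} satisfy that property. The paper states this explicitly just before Lemma~\ref{shrinking_DE}: the weights at each iteration depend on the current number of control points, and the per-step contraction ratio is $\frac{m}{m+1}$ where $m+1$ is the current point count. Since degree elevation adds one control point per iteration, this ratio tends to $1$, so no uniform $\alpha<1$ exists. Your sentence ``the contractivity bound $\alpha^n L^0$ is stated uniformly and does not reference the point count, so this causes no difficulty'' is precisely where the argument fails --- for degree elevation the bound \emph{does} reference the point count, and this is the whole difficulty. In effect you have reproduced the proof of Lemma~\ref{shrinking} (for schemes that \emph{do} satisfy contractivity, like RDC and OLR), not of Lemma~\ref{shrinking_degree}.

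The paper's proof instead uses Lemma~\ref{shrinking_DE} and telescopes the level-dependent bounds: starting from degree $k$, after $N$ iterations one obtains $L^N\leq \frac{k}{k+1}\cdot\frac{k+1}{k+2}\cdots L^0$, which collapses to an $O(1/N)$ decay rather than a geometric one. That is still enough to make $L^{N_\delta}<\delta$ for an explicit $N_\delta$, after which Lemma~\ref{rad_min_ball} applied to each consecutive \emph{pair} (the statement is about pairs, not $(k{+}1)$-tuples) finishes the argument. Your overall structure --- shrink $L^n$, then invoke Lemma~\ref{rad_min_ball} --- is correct; what is missing is the right shrinking estimate for this particular scheme.
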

\begin{proof}
Note that, since the degree elevation method adds one point at each iteration, we have that $n=N+k$.
By Lemma \ref{shrinking_DE}, we have that 
$$L^N\leq\frac{n(n-1)\cdots k}{(n+1)n\cdots k+1}L=\frac{k}{(k+1)(n+1)}L.$$
then by taking $M=\frac{kL^0}{(k+1)\delta}-k-1$ and $N_\delta=\lfloor M \rfloor +1$,
we have that $L^{N_\delta}<\delta$. This means that every two consecutive points will not be more distant than $2\delta$. By applying Lemma \ref{rad_min_ball}  we have the wished result.
\end{proof}
\begin{proof}[Proof of Proposition \ref{sub-smoothness-DE}]
Lemma \ref{shrinking_degree} tells us that after a finite number of iterations $N$ we can suppose that every two consecutive points of $\Pi_{de}^N$ belong to a totally normal neighborhood. It has been proved in \cite{Popiel:2007bt}, that such condition guarantee that the corresponding curve is $C^{\infty}$ and can be evaluated with the direct De Casteljau 
algorithm of Section \ref{sub:decasteljau-manifold}. 
\end{proof}
}

\section{From B-spline to B\'ezier}
\label{app:BtoBezier}
\paragraph{Uniform case}
Given the control polygon $(P_0,...,P_k)$ of a uniform B-spline segment $\mathbf{b}(t)$ of degree $k$, 
the control polygon $(Q_0,...,Q_k)$ of a B\'ezier curve coincident with $\mathbf{b}(t)$ is
given by expression
%
%
%
%
$$(Q_0,Q_1,\ldots,Q_k)^T = M_b^{-1}M_s (P_0,P_1,\ldots,P_k)^T,$$
where $M_s$ and $M_b$ are the matrices defining the $k$-degree B-spline and the $k$-degree B\'ezier curve, respectively. 
Both matrices are well defined for an arbitrary degree $k$, and their construction can be found in \cite{Yamaguchi:1988}. 


As an example, for $k=3$, the above equation leads to
\[\begin{array}{ll}
Q_0=\frac{1}{6}(P_0+4P_1+P_2) &
Q_1=\frac{1}{6}(4P_1+2P_2)\\[5pt]
Q_2=\frac{1}{6}(2P_1+4P_2) &
Q_3=\frac{1}{6}(P_1+4P_2+P_3),
\end{array}
\]
where the first average (and similarly the last) can be factorized as
\[\begin{array}{c}
\tilde{Q}_0=\frac{1}{3}(P_0+2P_1) \hspace{0.5cm} \tilde{Q}_1=\frac{1}{3}(2P_1+P_2)  \hspace{0.5cm} 
Q_0=\frac{1}{2}(\tilde{Q}_0+\tilde{Q}_1).
\end{array}
\]

It is important to point out that the inverse computation of $(P_0,...,P_k)$ from $(Q_0,...,Q_k)$ leads to non-convex averages, i.e., with negative weights. 
This explains why, in our case, we consider an open-uniform LR subdivision, rather than computing the control polygon of a uniform B-spline, since this would have implied the extrapolation of long geodesic lines, which may become unstable. 

\paragraph{Non-uniform case} 
In this case, the entries of $M_s$ depend on the spacing between the entries of the knot vector of the B-spline. 
The construction of $M_s$ in the general case can be found in \cite{Qin:2000}; we report here, as an example, the case where the knot vector has the form $(00001234\ldots)$ and $\mathbf{b}(t)$ is the cubic B-spline segment defined in $[0,1]$. If $(P_0,P_1,P_2,P_3)$ are the control points defining $\mathbf{b}(t)$, then we have
\[\hspace{-2mm} \begin{array}{llll}
Q_0=P_0 & Q_1=P_1 &
Q_2=\frac{1}{2}(P_1+P_2) & Q_3=\frac{1}{12}(3P_1+7P_2+2P_3),
\end{array}\]
where again we can factorize
\[\begin{array}{c}
\tilde{Q}_0=\frac{1}{2}(P_0+P_1) \hspace{0.5cm} \tilde{Q}_1=\frac{1}{3}(2P_1+P_3)\hspace{0.5cm}  
Q_3=\frac{1}{2}(\tilde{Q}_0+\tilde{Q}_1).
\end{array}
\]

\section{Point Insertion on a Curve of Degree $k>3$}
\label{app:pointinsertion}

We describe here the algorithms for point insertion on a curve of degree $k>3$, for the RDC and the OLR schemes. 
The algorithms for the special cases $k=2$ and $k=3$ were already described in Sections 4.1.3 and 4.2.4 of the paper. 
We keep the same notations.

We start by considering the RDC scheme.
Tree descent and polygon split at a leaf node work exactly as described previously, except that we compute and record the control polygons of both children at each splitting step.
When reaching a leaf, we split its polygon at value $\bar{t}$, thus obtaining the two polygons $\Pi_L$ and $\Pi_R$, respectively. 

Now we need to compute two control polygons: one defining the concatenation portion of curve defined by $\Pi_L$ and of the portion curve preceding it to the left; and likewise for $\Pi_R$ and the curve following it to the right. 
The two control polygons can be processed independently, therefore we describe here only the case of $\Pi_L$. 

Since the whole curve is of degree $k$, for any two consecutive chunks of curve, we can find a control polygon of degree $k$ that describes the concatenation of such chunks.
In order to do that, we backtrack on the tree path that we descended, by chaining control polygons in pairs, until we reach the root. 
We build the polygon corresponding to the concatenation of two consecutive chunks by reversing the De Casteljau construction. 

Starting at $\Pi_L$, we climb the tree until we find a sibling $\Pi_{LL}$ to the left $\Pi_L$. 
Note that $\Pi_{LL}$ has been computed while descending the tree, and needs not be at the same level of $\Pi_L$.
We now build the polygon $\bar{\Pi}$ encompassing the concatenation of the two curves described by $\Pi_{LL}$ and $\Pi_L$.

\begin{figure}[t]
  \centering
  \includegraphics[width=0.8 \columnwidth]{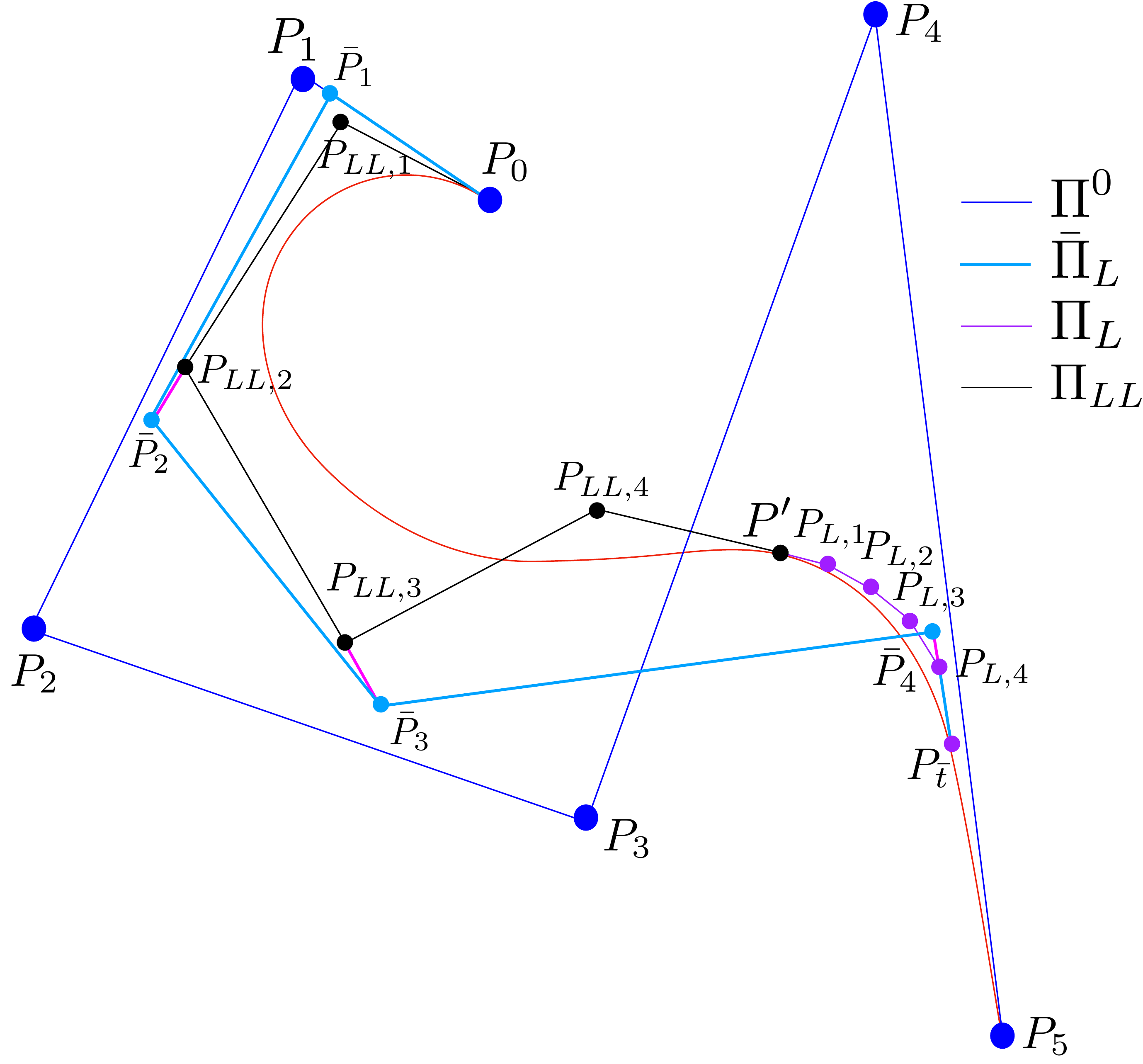}
  \caption{Point insertion on a quartic curve (left side only). The control polygon $\bar{\Pi}_L$ (light blue) encompassing the two curves described by two quintic polygons $\Pi_{LL}$ (black) and $\Pi_L$ (purple) is built by extending two segments of $\Pi_{LL}$ and one segment of $\Pi_L$ (extensions are depicted in magenta).}
  \label{fig:DC-insert-generic}
  \Description{Image}
\end{figure}

Let $\bar{t}$ be  the parameter that identifies $P_{\bar{t}}$ on the input curve, then our purpose is to determine the $k+1$ control points $\bar{P}_0,...,\bar{P}_{k}$ of $\bar{\Pi}_L$ such that the curve defined by such control polygon (nearly) coincide with the portion of the input curve in $[0,\bar{t}]$. Since $\bar{P}_0=P_0$, $\bar{P}_k=P_{\bar{t}}$ and $\bar{P}_1=\gamma_(P_0,P_1,\bar{t})$, we just need to find the $k-2$ points $\bar{P}_2,...,\bar{P}_{k-1}$ through geodesic extensions. 
To do that, we need to properly extend one side of $\bar{\Pi}_L$ (to find $\bar{P}_{k-1}$) and $k-3$ sides of the control polygon $\Pi_{LL}$ (to find the other $k-3$ points).  
Note that if $k=2$, there are no points that need to be found by extending the sides of the control polygons $\Pi_{LL}$ and $\Pi_{L}$,  whereas if $k=3$ we just need to extend one side of $\Pi_{L}$, consistently with the algorithm already described in the paper.

 Let $P^\prime$ be the junction point of $\Pi_{LL}$ and $\Pi_L$, and let $t^\prime$ be its  parameter on the input curve. 
 By considering the reparametrization of point $P^\prime$ w.r.t.\ $\bar{\Pi}_L$ we must have
\begin{equation}
\label{eq:condition}
 d(P_{LL,i},P_{LL,i+1})=\frac{t^\prime}{\bar{t}}d(P_{LL,i}, \bar{P}_{i+1}),\hspace{0.5cm} i=1,..,k-3,
 \end{equation}
which allows us to determine the position of $\bar{P}_{i+1}$ for $i=1,...,k-3$. In fact, $\bar{P}_{i+1}$ is the endpoint of the geodesic obtained by extending $\gamma(P_{LL,i},P_{LL,i+1},t)$ for a length $\delta_i$, where
$$\delta_i=\bigg(\frac{\bar{t}-t^\prime}{t^\prime}\bigg) \cdot d(P_{LL,i}, P_{LL,i+1}),\hspace{0.5cm} i=1,..,k-3.$$

To determine $\bar{P}_{k-1}$, we proceed as described previously, i.e 
by extending $\gamma(P_{\bar{t}},P_{L,k-1},t)$ for a length $d(P_{\bar{t}},P_{L,k-1})\cdot (t^{\prime}/(\bar{t}-t^{\prime})).$ 

The procedure described above can be applied to the OLR scheme too. 
After the tree descent, we obtain the control polygon $\widetilde{\Pi}$ of the B-spline segment $B(t_n)$, which defines the input curve in $ [t_{n},t_{n+1}]$, for some $n \in \mathbb{N}$, where $\bar{t }\in [t_{n},t_{n+1}]$. 
As shown in Appendix C, we can obtain the control polygon $\Pi$ of the B\'ezier curve coincident to $B(t_n)$ from $\widetilde{\Pi}$ through manifold averages.
At this point, we proceed as in the RDC case, by splitting $\Pi$ in two polygons $\Pi_L$ and $\Pi_R$ and by applying the algorithm described above.
The conversion from B-spline to B\'ezier is applied at each sibling we find, which must be concatenated with the  polygon at hand. 

\bibliographystyle{ACM-Reference-Format}
\bibliography{references}

\end{document}